\documentclass[sigconf]{acmart}

\usepackage{algorithmic}
\usepackage[ruled, vlined, commentsnumbered, linesnumbered]{algorithm2e}
\usepackage{mathdots}
\usepackage{amsfonts}
\usepackage{graphicx}
\usepackage{CJKutf8}
\usepackage[normalsize]{subfigure}
\usepackage{wrapfig}
\usepackage{mathdots}
\usepackage{appendix}
\usepackage{multirow}

\newcommand{\DEL}[1]{\iffalse #1 \fi}
\newcommand{\revision}{\color{red}}
\newcommand{\revisiondone}{\color{black}}

\newcommand{\squishlist}{
\begin{list}{$\bullet$}
  { \setlength{\itemsep}{0pt}
     \setlength{\parsep}{0pt}
     \setlength{\topsep}{0pt}
     \setlength{\partopsep}{0pt}
     \setlength{\leftmargin}{0em}
     \setlength{\labelwidth}{0em}
     \setlength{\labelsep}{0.2em} } }

\newtheorem{assumption}{\bf Assumption}
\newtheorem{theorem}{\bf{Theorem}}
\newtheorem{definition}{\bf{Definition}}

\newtheorem{lemma}{\bf{Lemma}}
\newtheorem{corollary}{\bf Corollary}

\newtheorem{reproposition}{\bf{Proposition}}
\newtheorem{proposition}{\bf{Proposition}}

\SetKwRepeat{Do}{do}{while}
\newcommand{\squishlisttwo}{
\begin{list}{$\bullet$}
  { \setlength{\itemsep}{0pt}
     \setlength{\parsep}{0pt}
    \setlength{\topsep}{0pt}
    \setlength{\partopsep}{0pt}
    \setlength{\leftmargin}{2em}
    \setlength{\labelwidth}{1.5em}
    \setlength{\labelsep}{0.5em} } }

\begin{document}

\title[Context-Aware Metric Differential Privacy for Vehicle Trajectory Data]{Context-Aware Metric Differential Privacy for Vehicle Trajectory Data}


\author{Gaoyi Chen, Yan Huang, and Chenxi Qiu}
\affiliation{%
  \institution{University of North Texas}
  \city{Denton}
  \state{Texas}
  \country{USA}}


\renewcommand{\shortauthors}{Trovato et al.}

\begin{abstract}
\emph{Metric Differential Privacy (mDP)} generalizes differential privacy by allowing privacy guarantees to be expressed with respect to an arbitrary distance metric over secrets. While mDP has been adopted in geo-location protection, most existing mechanisms perturb each location record in isolation and do not model how contextual information (e.g., recent mobility history) affects the utility of the released data. This mismatch is particularly pronounced for vehicle mobility traces, where service quality often depends on temporally correlated locations.

In this paper, we propose \emph{Context-aware mDP (C-mDP)}, a framework for vehicle location privacy that incorporates contextual dependencies into both the utility model and the privacy notion. C-mDP treats the protected secret as a context-augmented record and enforces metric indistinguishability over this augmented domain. We formulate optimal C-mDP mechanism design as a \emph{linear program (LP)} that minimizes expected utility loss subject to C-mDP constraints. To improve scalability, we exploit \emph{conditional-independence} structure between the current location and contextual variables to derive a reduced formulation with substantially fewer decision variables and constraints. We evaluate C-mDP on real-world vehicle mobility datasets and compare it with standard mDP baselines. The results show that C-mDP consistently achieves higher utility under the same privacy budget while satisfying the required metric privacy guarantees.
\end{abstract}

\keywords{Metric differential privacy, data perturbation}

\maketitle

\vspace{-0.0in}
\section{Introduction}
\label{sec:intro}
\vspace{-0.00in}

Within the spectrum of data privacy protection mechanisms, \emph{data perturbation} has emerged as a widely adopted approach for protecting users' sensitive information. The central idea is to deliberately inject noise into data, so that personal information remains unintelligible to unauthorized parties even in the event of server-side breaches. Among perturbation-based methods, \emph{Differential Privacy (DP)}~\cite{Dwork-ALP2006} has become the standard paradigm due to its rigorous, provable privacy guarantees. Differential privacy (DP) requires a mechanism to produce “indistinguishable” outputs on any two \emph{neighboring databases} that differ in at most one record (i.e., have Hamming distance at most one). This framework has been generalized to \emph{metric Differential Privacy (mDP)}, which accommodates arbitrary distance metrics over diverse data domains~\cite{Chatzikokolakis-PETS2013}. Unlike DP, mDP defines neighboring data records via an underlying metric space and scales the indistinguishability guarantees according to the (non-binary) distances between records. mDP has been successfully applied to the release of sensitive geo-location data~\cite{Andres-CCS2013}, 
using distance measures such as Manhattan, Euclidean, and Haversine distances. \looseness=-1

Compared to standard (Hamming-based) DP, optimizing mDP introduces additional challenges due to heterogeneous privacy requirements between neighboring records and the direction- and magnitude-dependent sensitivity of utility loss to perturbations. To minimize the utility loss induced by perturbations, several works on mDP have adopted a \emph{Linear Programming (LP)} framework~\cite{Bordenabe-CCS2014, Qiu-TMC2020, Qiu-IJCAI2024, Pappachan-EDBT2023, Qiu-EDBT2024}. These approaches primarily focus on discrete domains, where the utility loss associated with each perturbation outcome can be quantified explicitly, and the LP directly optimizes the perturbation probability distribution to minimize the expected utility loss. 
However, \emph{these mDP methods primarily concentrate on optimizing perturbations for individual records without accounting for how contextual information affects utility loss.} This limits the practical applicability of mDP in real-world scenarios, where the impact of data perturbations on data utility 
is influenced by the contextual aspects of the data. \looseness = -1

\begin{figure}[t]
\centering
\hspace{0.00in}
\begin{minipage}{0.48\textwidth}
  \subfigure{
\includegraphics[width=0.90\textwidth]{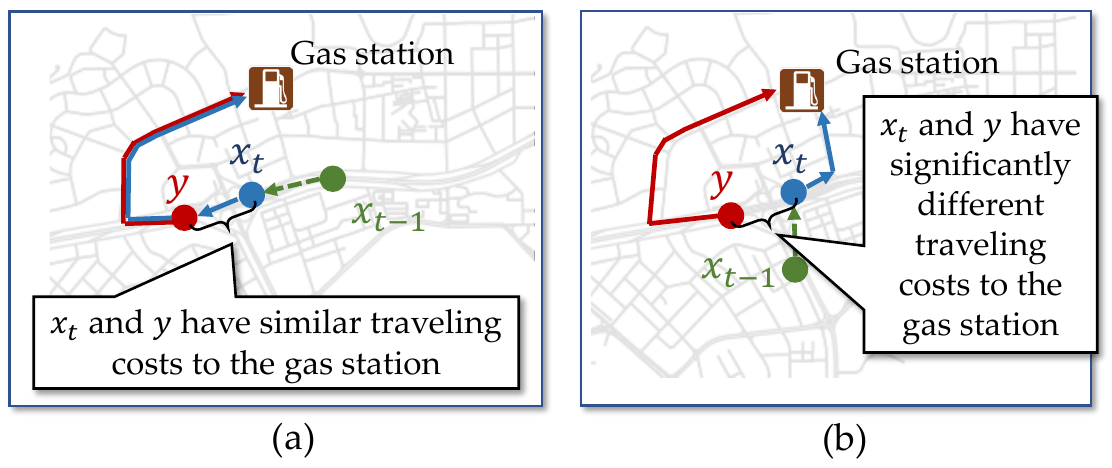}}
\vspace{-0.10in}
\end{minipage}
\caption{Example: Different context information could cause different impacts in geo-location perturbation. \\
$x_t$ and $x_{t-1}$ are the current location and the location in the last time slot, respectively. $y$ is the perturbed location of $x_t$. }
\label{fig:example1}
\vspace{-0.15in}
\end{figure}

\vspace{0.05in}
\noindent \textbf{Motivating example}. Consider a \emph{location-based service (LBS)} that recommends the nearest \emph{point of interest (POI)}, such as a gas station, to a moving vehicle. To provide recommendations, the vehicle shares its location with the LBS server, which then estimates the traveling distance from the vehicle to candidate POIs. The utility of a reported (perturbed) location depends on how accurately the server can estimate the actual travel distance from the vehicle to each POI using the reported location.

As illustrated in Fig. \ref{fig:example1}(a) and (b), the vehicle’s context, including its current location $x_{t}$ and preceding location $x_{t-1}$, plays an important role in determining utility. In Fig. \ref{fig:example1}(a), the vehicle is heading in a south-west direction, while in Fig. \ref{fig:example1}(b), it is moving north-east. If the current location $x_t$ is perturbed to the same point $y$, which is positioned south-west of $x_t$, the resulting utility losses differ significantly between the two contexts: In Fig. \ref{fig:example1}(a), the estimated travel distance from $y$ to the gas station is similar to the actual travel distance from $x_t$ to the gas station, leading to a low utility loss. Conversely, in Fig. \ref{fig:example1}(b), the estimated travel distance from $y$ to the gas station deviates significantly from the actual distance, resulting in a higher utility loss.

This example shows that the utility loss incurred when perturbing the current location $x_t$ to $y$ is context-dependent: it is much smaller in Fig.~\ref{fig:example1}(a) than in Fig.~\ref{fig:example1}(b). Motivated by this observation, we aim to minimize perturbation-induced utility loss \emph{by selecting perturbed locations according to their context-dependent utility}. For instance, in scenarios like Fig.~\ref{fig:example1}(a), location $y$ should be assigned higher probability because it leads to lower utility loss than it does in scenarios like Fig.~\ref{fig:example1}(b). \looseness = -1

  
\DEL{
\vspace{0.02in}
\noindent \textbf{Example B (Text perturbation)}. 
Consider the different impacts of changing the word from ``\textbf{explore}'' to ``\textbf{travel}'' in the following two sentences:
\vspace{-0.00in}
\begin{itemize}
\item \emph{She loves to \textbf{explore} to new places of interest every year.}
\vspace{-0.00in}
\item \emph{The explorer decided to \textbf{explore} different territories in search of undiscovered species.} 
\vspace{-0.00in}
\end{itemize}
In the first sentence, changing ``explore'' to ``travel'' doesn't make much difference; it still conveys the idea of visiting new places. However, in the second sentence, changing ``explore'' to ``travel'' significantly alters the meaning, as the emphasis is on actively investigating and discovering in the original sentence, while ``travel'' suggests a more passive action of going from one place to another without necessarily seeking out new territories.}

\vspace{-0.04in}
\subsection{Our Work}
\vspace{-0.00in}
\textbf{Contribution 1: \underline{C}ontext-aware \underline{mDP} (C-mDP)}. 
To fill the identified research gap, this paper introduces a new data perturbation framework called \emph{C-mDP}. In contrast to conventional mDP, where the selection of perturbed data relies solely on the secret target record, C-mDP incorporates contextual information when determining the perturbation data distribution. 
By incorporating this contextual data, the framework explicitly accounts for its impact on utility loss caused by perturbation, an aspect that, to the best of our knowledge, has not been explored in other LP-based approaches. At the same time, C-mDP ensures that information disclosure remains bounded by a predefined threshold, as formally demonstrated in \textbf{Proposition \ref{prop:ContextDP}}.

\smallskip
\noindent\textbf{Contribution 2: Efficient computation of C-mDP}. Like \cite{Fawaz-CCS2014}, we frame the task of optimizing C-mDP as an LP problem, of which the objective is to minimize the expected data utility loss while ensuring the indistinguishability criterion of neighboring records. 

Considering that directly incorporating context data into the perturbation derivation significantly increases the complexity of LP, 
we reduce C-mDP by including only the ``\emph{Markov blanket}'' (Definition \ref{def:MarkovBlanket}) of the target secret records when optimizing the perturbation distribution of secret records. We theoretically prove that the reduced C-mDP can still satisfy the desired privacy criterion (\textbf{Proposition \ref{prop:CDcorrect}}) while achieving the minimum data utility loss (\textbf{Proposition \ref{propo:CDutility}}). We also design a \emph{Markov Blanket Identification (MBI)} framework to identify the Markov blanket of secret data via \emph{conditional independence testing} \cite{Scetbon-ICML2022} and predict it using a \emph{deep neural network}. \looseness = -1


\smallskip
\noindent\textbf{Contribution 3: Empirical study of C-mDP in geo-location data privacy protection.} 
As an example, we apply C-mDP to protect vehicle location privacy in location-based services (LBSs), where contextual data includes vehicles' historical locations. Experimental results using two real-world taxicab datasets from "Rome, Italy" \cite{roma-taxi-20140717} and "Porto, Portugal" \cite{taxiPorto} show that (1) the impact of context information on utility loss in LBS varies across time, regions, speed ranges, and cities, and (2) incorporating context information into existing LP-based perturbation reduces average data utility loss by at least 15.6\% and 4.9\% on the Rome and Porto datasets, respectively, compared to the benchmarks \cite{Qiu-TMC2020, ImolaUAI2022, McSherry-FOCS2007}.

The rest of the paper is organized as follows: Section \ref{sec:preliminaries} gives the preliminaries of mDP. Sections \ref{sec:CmDP} and \ref{sec:MBI} introduce the problem formulation and the algorithm design of C-mDP. Section \ref{sec:exp} evaluates the performance of C-mDP. Section \ref{sec:related} presents the related work. Finally,  Section \ref{sec:conclude} concludes the paper.

\vspace{-0.00in}
\section{Technical Preliminaries}
\vspace{-0.00in}
\label{sec:preliminaries}
In this section, we first introduce the preliminaries of mDP (Section \ref{subsec:mDP}), the threat model and the countermeasure (Section \ref{subsec:threatmodel}), and the LP-based computation framework (Section \ref{subsec:LP}).  Table \ref{Tb:Notationmodel} in Appendix \ref{app:sec:notations} lists the main notations used throughout this paper.

\vspace{-0.00in}
\subsection{mDP}
\label{subsec:mDP}
\vspace{-0.03in}
Generally, a data perturbation mechanism can be represented as a \emph{probabilistic function} $Q$: $\mathcal{X} \rightarrow \mathcal{Y}$, where $\mathcal{X}$ and $\mathcal{Y}$ are the \emph{secret dataset} and the \emph{perturbed dataset}, respectively. We define the measure $d: \mathcal{X}^2 \rightarrow \mathbb{R}$ to quantify the distance between records in $\mathcal{X}$ and denote the distance between any two secret records $x_i, x_j \in \mathcal{X}$ by $d_{x_i, x_j}$. We call two records $x_i, x_j \in \mathcal{X}$ \emph{neighbors} if their distance $d_{x_i, x_j} \leq \eta$, where $\eta >0$ is a pre-determined threshold. We use the random variable $X$ to represent the secret data (or secret record), and $Q(X)$ to represent its perturbed data. 
\vspace{-0.03in}
\begin{definition}
\label{def:mDP}
(mDP~\cite{Andres-CCS2013}) For any pair of \emph{neighboring records} $x_i, x_j \in \mathcal{X}$ ($d_{x_i, x_j} \leq \eta$), $\epsilon$-mDP ensures that the probability distributions of their perturbed data $Q(x_i)$ and $Q(x_j)$ are sufficiently close so that it is hard for an attacker to distinguish $x_i$ and $x_j$ even if $Q(x_i)$ and $Q(x_j)$ are breached to the attacker, which can be represented mathematically by, 
\vspace{-0.00in}
\begin{equation}
\label{eq:mDP}
\frac{\Pr\left[Q(X) = y| X = x_i\right]}{\Pr\left[Q(X) = y| X = x_j\right]} \leq e^{\epsilon d_{x_i, x_j}}, ~\forall y \in \mathcal{Y}.  
\vspace{-0.00in}
\end{equation}
Here, $\epsilon > 0$ is called the \emph{privacy budget}, reflecting how much information about the secret data $X$ is allowed to be disclosed from its perturbed representation $Q(X)$, i.e., lower $\epsilon$ implies a higher privacy level. \looseness = -1
\end{definition}

\vspace{-0.05in}
\subsection{Threat Model and Countermeasure}
\label{subsec:threatmodel}
\vspace{-0.02in}
Like the existing works \cite{Andres-CCS2013}, we assume that both perturbed data $Q(X)$ and perturbation function $Q$ are known by attackers and users apply the function $Q$ to perturb their secret data. An attacker can use Bayes' formula \cite{Yu-NDSS2017} to derive the \emph{posterior} of the secret data $X$, i.e., $\Pr\left[X = x | Q(X) = y\right]$, $\forall x \in \mathcal{X}$, of which the accuracy can be quantified by the posterior leakage  (Definition \ref{def:PL}): \looseness = -1
\begin{definition}
\label{def:PL}
(Posterior leakage~\cite{Kifer-PODS2012} (PL)) Given the perturbation function $Q$, the PL of any pair of neighboring records $x_i, x_j \in \mathcal{X}$ s.t. $d_{x_i, x_j} \leq \eta$, 
\vspace{-0.00in}
\normalsize
\begin{eqnarray}
\label{eq:PL}
\mathrm{PL}\left((x_i, x_j); Q\right) = \small \sup_{y} |\ln \underbrace{\left(\frac{\Pr\left[X = x_i | Q(X) = y\right]}{\Pr\left[X = x_j| Q(X) = y\right]}\right.}_{\mbox{posterior ratio}}\left\slash\underbrace{\left.\frac{\Pr\left[X=x_i\right]}{\Pr\left[X=x_j\right]}\right)}_{\mbox{prior ratio}}\right.| 
\end{eqnarray}
\normalsize
\end{definition}
Intuitively, the prior ratio and the posterior ratio in Eq. (\ref{eq:PL}) reflect the record $X$'s probabilities of being $x_i$ and $x_j$ \textbf{before and after the observation of the perturbed data $Q(X) =y$}. If 
$\mathrm{PL}\left((x_i, x_j); Q\right)$ has a lower value, it implies that the attacker can obtain less additional information of $X$ by observing $Q(X)$, therefore achieving a higher privacy level. As a countermeasure against the Bayesian inference attacks, the perturbation function $Q$ is designed to enforce the posterior leakage between any $x_i$ and $x_j$ to be bounded by a threshold: \looseness = -1
\begin{equation}
\label{eq:PLcriterion}
\mathrm{PL}\left((x_i, x_j); Q\right) \leq \epsilon d_{x_i,x_j}, \forall x_i, x_j\in \mathcal{X},  
\end{equation}
meaning that the perturbed data $Q(X)$ only discloses limited additional information to attackers. As proved by \cite{Andres-CCS2013}, meeting mDP as defined in Eq. (\ref{eq:mDP}) is \textbf{sufficient} to achieve the PL bound in Eq. (\ref{eq:PLcriterion}).

\vspace{-0.00in}
\subsection{Optimization of mDP using LP}
\label{subsec:LP}
\vspace{-0.00in}
Like \cite{ImolaUAI2022, Qiu-TMC2020}, we consider the case that both $\mathcal{X}$ and $\mathcal{Y}$ are finite. As such, the perturbation function $Q$ can be represented as the \emph{perturbation matrix} $\mathbf{Q} = \left\{q_{x_i,y_k}\right\}_{(x_i,y_k) \in \mathcal{X}\times \mathcal{Y}}$, where each entry $q_{x_i,y_k}$ denotes the probability of selecting $y_k \in \mathcal{Y}$ as the perturbed data given the real record $x_i \in \mathcal{X}$. 
In this case, the mDP constraints formulated in Eq. (\ref{eq:mDP}) can be written as the following linear constraints: For each pair of neighboring records $x_i, x_j \in \mathcal{X}$ s.t. $d_{x_i, x_j} \leq \eta$, 
\vspace{-0.00in}
\begin{equation}
\label{eq:mDPdiscrete}
\frac{q_{x_i,y_k}}{q_{x_j,y_k}} \leq e^{\epsilon d_{x_i,x_j}}, ~\forall y_k \in \mathcal{Y}. 
\vspace{-0.00in}
\end{equation}
Additionally, the sum probability of perturbed record $y_k \in \mathcal{Y}$ for each real record $x_i$ should be equal to 1 (\emph{probability unit measure}), i.e., 
\vspace{-0.10in}
\begin{equation}
\label{eq:um}
\sum_{y_k \in \mathcal{Y}}q_{x_i,y_k} = 1,~ \forall x_i \in \mathcal{X}.
\end{equation}
We use $c_{x_i,y_k}$ to represent the \emph{data utility loss} of the downstream decision-making caused by the perturbed record $y_k$ when the real record is $x_i$. 

In practice, the assessment of each data utility loss, $c_{x_i,y_k}$, depends on the specific manner in which the data is utilized in downstream decision-making processes. In the performance evaluation detailed in Section~\ref{sec:exp}, we focus on \emph{location-based services (LBS)} where vehicles are required to physically travel to fulfill tasks in spatial crowdsourcing scenarios~\cite{Qiu-TMC2020} (e.g., picking up passengers). 
If a user's reported location is inaccurate, the server will use the inaccurate location to estimate the travel cost from this user to the destination, causing estimation errors; in this case, the server might 
assign a vehicle that is too far away from the passenger. 
{\revisiondone In such cases, the utility loss $c_{x_i,y_k}$ can be quantified using the discrepancy between the estimated and actual travel costs. 
By minimizing the utility loss, the system improves the likelihood of recommending the vehicle closest to the spatial task. 
 \looseness =-1
}

The \emph{loss function} of $\mathbf{Q}$, measuring the expected utility loss caused by the perturbation matrix $\mathbf{Q}$, can be defined as 
\vspace{-0.02in}
\begin{equation}
\label{eq:losscontextfree}
\mathcal{L}(\mathbf{Q}) = \sum_{x_i\in \mathcal{X}, y_k\in \mathcal{Y}}  P(X = x_i) \cdot c_{x_i,y_k} \cdot q_{x_i,y_k},
\end{equation}
which is a linear function of $\mathbf{Q}$. The goal of optimizing the perturbation matrix $\mathbf{Q}$ is to minimize $\mathcal{L}(\mathbf{Q})$ while satisfying both the mDP (Eq. (\ref{eq:mDPdiscrete})) and the probability unit measure (Eq. (\ref{eq:um})), which can be formulated as the following LP problem: 
\begin{eqnarray}
\label{eq:LPobjective}
\min && \mathcal{L}(\mathbf{Q}) \\ \label{eq:LPconstraint2}
\mathrm{s.t.} && \mbox{Eq. (\ref{eq:mDPdiscrete})(\ref{eq:um}) are satisfied}, 
\end{eqnarray}
where each entry $q_{x_i,y_k}$ in $\mathbf{Q}$ satisfies $0 \leq q_{x_i,y_k} \leq 1$.

\vspace{-0.02in}
\section{Context-Aware mDP}
\vspace{-0.00in}
\label{sec:CmDP}

We adopt an LP-based approach to optimize the selection probabilities of perturbed outputs, with the objective of minimizing the expected utility loss. This formulation requires explicitly quantifying the utility loss associated with each possible perturbed data point. As illustrated in Fig.~\ref{fig:example1}, the impact of perturbation on data utility may vary significantly across different contexts, highlighting the necessity of incorporating contextual information when evaluating utility loss and determining perturbation probabilities. 

Motivated by this observation, we propose a new data perturbation framework, termed \emph{\underline{C}ontext-aware \underline{mDP} (C-mDP)}. We first define the C-mDP problem in Section \ref{subsec:formulation} and then present a reduction in problem complexity in Section \ref{subsec:complexityreduce}. 

\vspace{-0.00in}
\subsection{Problem Formulation}
\label{subsec:formulation}
\vspace{-0.00in}

In C-mDP, the perturbation distribution for a secret record $X$ is allowed to depend not only on $X$ itself but also on contextual information $V_X$ associated with $X$. Consider Fig.~\ref{fig:example1}: when perturbing a vehicle’s current location $x_t$, the probability of reporting each perturbed location can vary with the prior location $x_{t-1}$, since different $x_{t-1}$ values indicate different movement directions and mobility patterns.

More broadly, we treat $V_X$ as any context that can affect the utility (and potentially the effective privacy risk) of releasing a perturbed version of $X$. In our vehicle-location case study (Section~\ref{sec:exp}), $X$ is the vehicle’s current location at time $t$, and $V_X$ consists of its historical locations. We use $V_X$ to estimate the distribution of future locations, which is essential for supporting high-quality LBS under realistic latency. Importantly, while imperfect modeling of $V_X$ may introduce error in utility-loss estimation, it does not weaken the formal privacy guarantee: the mDP constraints are enforced directly on the mechanism. At the same time, incorporating historical context enables a more faithful assessment of utility impact than context-agnostic perturbation.


We use $\mathcal{V}$ to denote the space of contextual information (i.e., $V_X\in\mathcal{V}$). 
We assume that both the secret record $X$ and its associated context $V_X$ are available to the user, which runs the perturbation locally, while the server observes only the released output and does not directly observe the realized $V_X$. 
Accordingly, the perturbation function $Q$ is defined as a mapping $Q:\mathcal{X}\times\mathcal{V}\rightarrow\mathcal{Y}$, and the released (perturbed) value is denoted by $Q(X,V_X)$ given the real record $X$ and its context $V_X$. 
The corresponding perturbation matrix is
\begin{equation}
\mathbf{Q}=\left\{q_{(x_i,\mathbf{v}),y_k}\right\}_{(x_i,\mathbf{v},y_k)\in\mathcal{X}\times\mathcal{V}\times\mathcal{Y}},
\end{equation}
where each entry $q_{(x_i,\mathbf{v}),y_k}$ denotes the probability of outputting $y_k$ when the real record is $X=x_i$ and the context is $V_X=\mathbf{v}$.

\subsubsection{Expected Data Utility Loss} 
We use $c_{(x_i,\mathbf{v}),y_k}$ to denote the utility loss incurred when the real record is $x_i$ with context $\mathbf{v}$ and the released (perturbed) record is $y_k$. Accordingly, the expected utility loss induced by the perturbation matrix $\mathbf{Q}$ is
\vspace{-0.00in}
\normalsize
\begin{eqnarray}
\label{eq:CA-UL}
\mathcal{L}(\mathbf{Q}) = \sum_{(x_i, \mathbf{v}), y_k} p_{(x_i, \mathbf{v})} \cdot c_{(x_i,\mathbf{v}),y_k} \cdot q_{(x_i,\mathbf{v}),y_k},
\end{eqnarray}
where $p_{(x_i,\mathbf{v})}$ is the prior probability that the secret record and its context are $(x_i,\mathbf{v})$.


In practice, the assessment of each data utility loss $c_{(x_i,\mathbf{v}),y_k}$ is contingent on the specific manner in which data is processed in downstream decision-making. In the performance evaluation in Section \ref{sec:exp}, we consider the LBS services where vehicles need to physically travel to designated locations to receive desired services such as navigation \cite{To-TMC2017}, or to fulfill tasks in spatial crowdsourcing \cite{Qiu-TMC2020}. In those applications, $c_{(x_i,\mathbf{v}),y_k}$ can be quantified by the discrepancy between the estimated travel cost (from $y_k$ to the destination) and actual travel cost (from $x_i$ to the destination). 

{\revisiondone While we use vehicle geo-location protection as a concrete instantiation, the proposed framework is not limited to this scenario. It readily extends to other applications with only minor modifications, as long as one can characterize how perturbing the released data affects the resulting utility loss (e.g., through an application-specific loss function or objective).} 

\vspace{0.00in}
\subsubsection{Privacy Constraints}
\label{subsec:privacyconstraints}
In Definition \ref{def:CAPL}, we extend the PL constraints in Definition \ref{def:PL} to the context-aware PL constraints:
\vspace{-0.05in}

\begin{definition}
\label{def:CAPL}
(Context-aware PL and its bound) Using the perturbation function $Q$, the context-aware PL of any pair of neighboring records $x_i, x_j\in \mathcal{X}$ given their context data $\mathbf{v}$ and $\mathbf{v}'$ is defined by 
\vspace{-0.03in}
\normalsize
\begin{eqnarray}
\label{eq:PLcontext}
&& \mathrm{PL}\left((x_i, \mathbf{v}), (x_j, \mathbf{v}')\right) \\ \nonumber 
&=& {\small  \sup_{y} |\ln \underbrace{\left(  \frac{\scriptsize \Pr \left[(X, V_X) = (x_i, \mathbf{v}) | Q(X, V_X) = y\right]}{\scriptsize \Pr\left[(X, V_X) = (x_j, \mathbf{v}')| Q(X, V_X) = y\right]}\right.}_{\mbox{posterior ratio}}\left\slash\underbrace{\left.\frac{p_{(x_i, \mathbf{v})}}{p_{(x_j, \mathbf{v}')}}\right)}_{\mbox{prior ratio}}\right.|} 
\end{eqnarray}
\normalsize
The corresponding context-aware privacy-loss (PL) bound is
\vspace{-0.03in}
\begin{equation}
\label{eq:PLcontextbound}
\mathrm{PL}\!\left((x_i,\mathbf{v}),(x_j,\mathbf{v}')\right)
~\le~
\epsilon\, d_{(x_i,\mathbf{v}),(x_j,\mathbf{v}')},
\end{equation}
where $d_{(x_i,\mathbf{v}),(x_j,\mathbf{v}')}$ is the context-aware distance between the augmented secrets $(x_i,\mathbf{v})$ and $(x_j,\mathbf{v}')$. We define this distance as an \emph{augmentation} of the base Haversine distance between individual locations (i.e., the great-circle distance between two geographic points on the Earth computed from their latitudes and longitudes) as follows:
{\revisiondone
\begin{equation}
\label{eq:context_metric}
d_{(x_t,\mathbf{v}),(x_t',\mathbf{v}')}
~\triangleq~
d_{x_t,x_t'}
~+~
\sum_{\tau=1}^{\Gamma} w_{t-\tau}\, d_{v_{t-\tau},v'_{t-\tau}},
\end{equation}
where $\mathbf{v} = (v_{t-1}, ..., v_{t-\Gamma})$ and $w_{t-\tau}$ ($\tau = 1,..., \Gamma$) are decay weights that control the relative importance of protecting past locations.}
\end{definition}
\vspace{-0.05in}

{\revisiondone Notably, while incorporating context can improve utility modeling and utility in practice, it may also cause the released output $Y$ to reveal information about the (unobserved) context $V_X$. Our framework explicitly accounts for this via the context-aware PL definition over $(X,V_X)$ and the corresponding C-mDP constraints, which bound the adversary's posterior gain for any pair of hypotheses $(x,\mathbf{v})$ and $(x',\mathbf{v}')$.}

\begin{proposition}
\label{prop:ContextDP}
To satisfy the context-aware PL bound in Eq. (\ref{eq:PLcontextbound}), it is sufficient to enforce the C-mDP constraints
\vspace{-0.00in}
\begin{equation}
\label{eq:C-mDP}
q_{(x_i,\mathbf{v}),y_k} - e^{\epsilon d_{(x_i,\mathbf{v}),(x_j,\mathbf{v}')}} \cdot q_{(x_j,\mathbf{v}'),y_k} \leq 0, ~\forall y_k \in \mathcal{Y}, 
\vspace{-0.00in}
\end{equation}
for every pair of neighboring records $(x_i,\mathbf{v}), (x_j,\mathbf{v}') \in \mathcal{X}\times \mathcal{V}$, i.e., those with
$d_{(x_i,\mathbf{v}),(x_j,\mathbf{v}')} \le \eta$. 
\end{proposition}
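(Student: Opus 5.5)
The plan is to reduce the context-aware PL to a ratio of perturbation probabilities by Bayes' rule, exactly as \cite{Andres-CCS2013} does for standard mDP. The augmented secret $(X,V_X)$ plays the role of the single secret $X$ in Definition \ref{def:PL}.

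Fix a pair of neighboring augmented secrets $(x_i,\mathbf{v})$ and $(x_j,\mathbf{v}')$ with $d_{(x_i,\mathbf{v}),(x_j,\mathbf{v}')}\le\eta$, and an output $y_k\in\mathcal{Y}$ with $\Pr[Q(X,V_X)=y_k]>0$. Bayes' rule gives
\begin{equation*}
\Pr\left[(X,V_X)=(x_i,\mathbf{v})\mid Q(X,V_X)=y_k\right]=\frac{q_{(x_i,\mathbf{v}),y_k}\, p_{(x_i,\mathbf{v})}}{\Pr[Q(X,V_X)=y_k]},
\end{equation*}
and similarly for $(x_j,\mathbf{v}')$. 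When we take the posterior ratio, the common marginal $\Pr[Q(X,V_X)=y_k]$ cancels. Dividing by the prior ratio $p_{(x_i,\mathbf{v})}/p_{(x_j,\mathbf{v}')}$ then cancels the priors. Hence the quantity inside the logarithm in Eq. (\ref{eq:PLcontext}) equals $q_{(x_i,\mathbf{v}),y_k}/q_{(x_j,\mathbf{v}'),y_k}$.

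Next I apply the C-mDP constraint Eq. (\ref{eq:C-mDP}) in both orderings of the pair. The metric is symmetric, since each term of Eq. (\ref{eq:context_metric}) is a Haversine distance, so the constraint holds with the roles of the two secrets swapped. Together the two orderings give
\begin{equation*}
e^{-\epsilon d}\le \frac{q_{(x_i,\mathbf{v}),y_k}}{q_{(x_j,\mathbf{v}'),y_k}}\le e^{\epsilon d},\qquad d=d_{(x_i,\mathbf{v}),(x_j,\mathbf{v}')}.
\end{equation*}
So the absolute log-ratio is at most $\epsilon d$ for every $y_k$, and taking the supremum over $y_k$ yields Eq. (\ref{eq:PLcontextbound}).

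The main obstacle is the degenerate cases rather than the algebra.
\begin{itemize}
\item \emph{Zero perturbation probabilities.} If $q_{(x_j,\mathbf{v}'),y_k}=0$, the constraint forces $q_{(x_i,\mathbf{v}),y_k}=0$ as well. Such $y_k$ are never observed from either hypothesis, so I exclude them from the supremum, which is taken over outputs with positive marginal probability.
\item \emph{Zero priors.} I assume $p_{(x_i,\mathbf{v})},p_{(x_j,\mathbf{v}')}>0$ so that the prior ratio is defined; pairs with a zero prior are vacuous.
\end{itemize}
Finally, I note that the bound is claimed only for neighboring pairs with $d\le\eta$, which are exactly the pairs where the constraints are imposed. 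No chaining argument via the triangle inequality is needed.
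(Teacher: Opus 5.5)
Your proposal is correct and follows the paper's proof essentially step for step: Bayes' rule cancels the output marginal and the priors, reducing the quantity inside the logarithm to $q_{(x_i,\mathbf{v}),y_k}/q_{(x_j,\mathbf{v}'),y_k}$; the C-mDP constraint applied in both orderings bounds its absolute log-ratio by $\epsilon d$; and outputs where both perturbation probabilities vanish are excluded from the supremum, just as the paper does. Your explicit appeal to the symmetry of the augmented metric justifies the swapped-pair step, which the paper uses without comment.
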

\vspace{-0.05in}
Due to the limited space, the detailed proof of Proposition \ref{prop:ContextDP}, as well as the proofs of Propositions \ref{prop:CDcorrect}--
\ref{propo:CDutility} presented subsequently, are available in Appendix. 


\vspace{0.00in}
\subsubsection{LP Formulation}
Given the newly defined data utility loss (Eq. (\ref{eq:CA-UL})) and C-mDP constraints (Eq. (\ref{eq:C-mDP})), we formulate the \textbf{C-mDP problem} as the following LP problem:  
\normalsize
\begin{eqnarray}
\label{eq:C-mDPObj}
\min && \mathcal{L}(\mathbf{Q}) = \sum_{(\mathbf{v}, x_i), y_k} p_{(x_i, \mathbf{v})} \cdot c_{(x_i,\mathbf{v}),y_k} \cdot q_{(x_i,\mathbf{v}),y_k} \\ \label{eq:C-mDP1}
\mathrm{s.t.} && \mbox{C-mDP (Eq. (\ref{eq:C-mDP})) is satisfied} \\
\label{eq:C-mDPUM}
&& \textstyle \sum_{y_k \in \mathcal{Y}}q_{(x_i,\mathbf{v}),y_k} = 1,~ \forall x_i, \mathbf{v} \\ \label{eq:CLPconstraint1}
&& q_{(x_i,\mathbf{v}),y_k} \in [0, 1], \forall x_i, y_k, \mathbf{v}. 
\end{eqnarray}
\normalsize
The decision variables in the above LP formulation (Eq. (\ref{eq:C-mDPObj})--(\ref{eq:CLPconstraint1})) are the matrix $\mathbf{Q} = \left\{q_{(x_i,\mathbf{v}),y_k}\right\}_{(x_i, \mathbf{v}, y_k) \in \mathcal{X} \times \mathcal{V}\times \mathcal{Y}}$, including a total of $O(|\mathcal{X}||\mathcal{Y}||\mathcal{V}|)$ decision variables (entries) and $O(|\mathcal{X}|^2|\mathcal{Y}||\mathcal{V}|)$ linear constraints, where $|\cdot|$ denotes the set cardinality. 
Such a high complexity of the LP formulation renders it hard to apply in large-scale or time-sensitive applications.

{\revisiondone
\smallskip
\noindent\textbf{Discussion (C-mDP vs.\ context-free mDP).}
Notably, standard mDP mechanisms considered in our baselines are \emph{context-free}: they learn a single perturbation distribution $q_{x,y}$ for each location $x\in\mathcal{X}$ and apply it regardless of any (unobserved) contextual information. In our notation, this is equivalent to additionally enforcing
\begin{equation}
q_{(x,\mathbf{v}),y} = q_{(x,\mathbf{v}'),y}, \quad \forall x\in\mathcal{X},\ \forall \mathbf{v},\mathbf{v}'\in\mathcal{V},\ \forall y\in\mathcal{Y},
\end{equation}
i.e., the perturbation distribution is invariant across contexts. By contrast, C-mDP allows context-conditioned mechanisms $q_{(x,\mathbf{v}),y}$ (or $q_{(x,\mathbf{b}),y}$ under the CD policy), while still enforcing likelihood-ratio constraints with the same privacy-budget parameter $\epsilon$ under an appropriate metric. When utility depends on mobility context (e.g., direction/speed affecting future-location prediction), removing the invariance constraint can reduce expected utility loss, which helps explain why our context-aware mechanisms can outperform context-free mDP baselines in utility.}

\vspace{-0.05in}
\subsection{Problem Complexity Reduction}
\label{subsec:complexityreduce}
\vspace{-0.02in}
In this section, we describe the method to reduce the complexity of C-mDP by leveraging the \emph{independence} relationships between the secret data and the context information. We first give some key definitions as follows: 
\begin{definition}
\label{def:CI}
(Conditional independence (CI)) Given a random variable $X$ and two sets of random variables ${A}$ and ${B}$,  $X$ and ${A}$ are called ``\textbf{conditionally independent given ${B}$}'' if and only if $\Pr\left[{B}\right]>0$ and $\Pr\left[X|{A}, {B}\right] = \Pr\left[X|{B}\right]$, 
written as: $X  \perp \!\!\!\perp {A} |~{B}$. 
\end{definition}
\DEL{
\begin{definition}
(Markov random field) Given a set of random variables ${V}$, an undirected graph $\mathcal{G} = ({V}, \mathcal{E})$ forms a \textbf{Markov random field} if, $\forall X, X_j \in {V}$
\begin{equation}
X  \perp \!\!\!\perp X_j |{V}\backslash \left\{X, X_j\right\} 
\end{equation}
if and only if $\left(X, X_j \right) \notin \mathcal{E}$.
\end{definition}}

\begin{definition}
\label{def:MarkovBlanket}
(Markov blanket) Given the random variable $X$, we call the set of random variables ${B}_{X} \subseteq V_X$ a Markov blanket of $X$ in $V_X$ if its complement set ${B}_{X}^\mathrm{c} = {V}_X\backslash {B}_{X}$ satisfies $X  \perp \!\!\!\perp {B}_{X}^\mathrm{c} |{B}_{X}$, i.e., $X$ and ${B}_{X}^\mathrm{c}$ are conditionally independent given ${B}_X$. 
\end{definition}


Based on Definitions~\ref{def:CI} and~\ref{def:MarkovBlanket}, we next formalize a \emph{conditionally dependent (CD)} policy that restricts how the perturbation mechanism may use contextual information.

\begin{definition}
\label{def:CD}
(CD policy) Let $X$ be a secret record with associated context variables $V_X$. Let $B_X \subseteq V_X$ be a Markov blanket of $X$ within $V_X$, and let $B_X^{\mathrm{c}} \triangleq V_X \setminus B_X$ denote its complement. A perturbation function $Q$ follows the CD policy if
\begin{equation}
\label{eq:CDpolicy}
Q(X,V_X) \perp \!\!\!\perp B_X^{\mathrm{c}} \,\big|\, \{X,B_X\},
\end{equation}
i.e., conditioned on $X$ and $B_X$, the output distribution is independent of the remaining context variables. Equivalently, letting $\mathbf{b}=\pi(\mathbf{v})$ denote the restriction of $\mathbf{v}$ to $B_X$, for any $x\in\mathcal{X}$, $\mathbf{v}\in\mathcal{V}$, and $y\in\mathcal{Y}$,
\begin{eqnarray}
q_{(x,\mathbf{v}),y} &\triangleq& \Pr[Q(X,V_X)=y \mid X=x,V_X=\mathbf{v}]
\\ 
&=&
\Pr[Q(X,V_X)=y \mid X=x,B_X=\mathbf{b}]
\\ 
&\triangleq& q_{(x,\mathbf{b}),y}.
\end{eqnarray}
\end{definition}
Intuitively, the CD policy enables the computation of the perturbation function $Q$ to focus solely on the context information pertinent to the protected secret data $X$, which reduces the computational complexity without compromising data utility and privacy. 

By following the CD policy, 
the perturbation matrix is represented by 
$\mathbf{Q} = \left\{q_{(x_i,\mathbf{b}),y_k}\right\}_{(x_i,\mathbf{b},y_k) \in \mathcal{X} \times \mathcal{B}\times \mathcal{Y}}$, 
where each $q_{(x_i,\mathbf{b}),y_k}$ denotes the probability of selecting $y_k$ as the perturbed data given the real record $X = x_i$ and the context information $B_X = \mathbf{b}$. We denote the space of the Markov blanket $B_X$ as $\mathcal{B}$. By focusing exclusively on the context information within $\mathcal{B}$ rather than the entire context space $\mathcal{V}$, the complexity of C-mDP is reduced. 

Next, we present theoretical proofs that, under Assumption \ref{assu:datautility}, adhering to the CD policy preserves both the privacy criteria (as per \textbf{Proposition \ref{prop:CDcorrect}}) and the optimality (as per \textbf{Proposition \ref{propo:CDutility}}) of the perturbation matrix $\mathbf{Q}$.
\DEL{
we make the following assumption: 
\begin{assumption}
\label{assum:UL}
We target the applications where the quality of services solely depends on the distribution of secret records and perturbed data. In this case, the data utility loss $c_{(x_i,\mathbf{v}),y_k}$ can be represented as a function of $p_{x_i|\mathbf{v}}$ and $ q_{(x_i,\mathbf{v}),y_k}$, i.e. $h(p_{x_i|\mathbf{v}}, q_{(x_i,\mathbf{v}),y_k})$, indicating that 
\begin{eqnarray}
c_{(x_i,\mathbf{v}),y_k}  &=&  h(p_{x_i|\mathbf{v}}, q_{(x_i,\mathbf{v}),y_k}) \\
&=& h(p_{(x_i,\mathbf{b})}, q_{(x_i,\mathbf{b}),y_k}) \\ 
&=& c_{(x_i,\mathbf{b}),y_k}.  
\end{eqnarray}

Next, we demonstrate that by following the CD policy, the privacy criterion and the minimum data utility of the C-mDP can be still guaranteed. 

The data utility loss caused by the perturbed data $Q(X, V_X)$  depends only on the Markov blanket ${B}_{X}$ 
\end{assumption}
Give a brief discussion of the rationale of this assumption ...  
}

\begin{proposition}
\label{prop:CDcorrect}
(PL guarantee)
If a perturbation matrix $\mathbf{Q}$ follows the CD policy and satisfies the corresponding mDP constraints:  
\vspace{-0.07in}
\begin{equation}
\label{eq:C-mDPred}
q_{(x_i,\mathbf{b}),y_k} - e^{\epsilon d_{(x_i,\mathbf{b}), (x_j,\mathbf{b}')}} \cdot q_{(x_j,\mathbf{b}'),y_k} \leq 0, ~\forall (x_i, \mathbf{b}), (x_j, \mathbf{b}'), y_k 
\vspace{-0.00in}
\end{equation}
where $d_{(x_i,\mathbf{b}), (x_j,\mathbf{b}')} \leq d_{(x_i, \mathbf{v}), (x_j,\mathbf{v}')}$, 
it is \textbf{sufficient} for $\mathbf{Q}$ to achieve the bounded context-aware PL as defined in Eq. (\ref{eq:PLcontextbound}). 
\end{proposition}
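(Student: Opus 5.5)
The plan is to reduce Proposition~\ref{prop:CDcorrect} to Proposition~\ref{prop:ContextDP}. That is, I will show that a CD-policy matrix satisfying Eq.~(\ref{eq:C-mDPred}) also satisfies the full C-mDP constraints in Eq.~(\ref{eq:C-mDP}) on the augmented domain $\mathcal{X}\times\mathcal{V}$. Proposition~\ref{prop:ContextDP} then gives the context-aware PL bound in Eq.~(\ref{eq:PLcontextbound}) directly.

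\textbf{Step 1: lift the reduced matrix.} Fix any two augmented records $(x_i,\mathbf{v})$ and $(x_j,\mathbf{v}')$, and any $y_k$. Write $\mathbf{b}=\pi(\mathbf{v})$ and $\mathbf{b}'=\pi(\mathbf{v}')$. By Definition~\ref{def:CD}, the CD policy gives
\[
q_{(x_i,\mathbf{v}),y_k}=q_{(x_i,\mathbf{b}),y_k}
\quad\text{and}\quad
q_{(x_j,\mathbf{v}'),y_k}=q_{(x_j,\mathbf{b}'),y_k}.
\]
So the full matrix is well defined from the reduced one, and the rows still sum to one as in Eq.~(\ref{eq:C-mDPUM}).

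\textbf{Step 2: transfer the constraint.} Apply Eq.~(\ref{eq:C-mDPred}) to the pair $(x_i,\mathbf{b})$, $(x_j,\mathbf{b}')$. Then use the hypothesis $d_{(x_i,\mathbf{b}),(x_j,\mathbf{b}')}\le d_{(x_i,\mathbf{v}),(x_j,\mathbf{v}')}$. Since the exponential is monotone and all entries are nonnegative, we get
\[
q_{(x_i,\mathbf{v}),y_k}
= q_{(x_i,\mathbf{b}),y_k}
\le e^{\epsilon d_{(x_i,\mathbf{b}),(x_j,\mathbf{b}')}}\, q_{(x_j,\mathbf{b}'),y_k}
\le e^{\epsilon d_{(x_i,\mathbf{v}),(x_j,\mathbf{v}')}}\, q_{(x_j,\mathbf{v}'),y_k}.
\]
This is exactly Eq.~(\ref{eq:C-mDP}). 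The metric inequality holds naturally under the augmented metric Eq.~(\ref{eq:context_metric}): restricting to $B_X$ drops nonnegative terms $w_{t-\tau} d_{v_{t-\tau},v'_{t-\tau}}$, assuming the weights are nonnegative.

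\textbf{Step 3: conclude.} Invoke Proposition~\ref{prop:ContextDP}. For completeness, I will recall its mechanism. By Bayes' rule, the posterior ratio over the prior ratio in Eq.~(\ref{eq:PLcontext}) equals the likelihood ratio $q_{(x_i,\mathbf{v}),y}/q_{(x_j,\mathbf{v}'),y}$, because the marginal $\Pr[Q=y]$ cancels. Applying the inequality of Step 2 in both directions then bounds the absolute log-ratio by $\epsilon d$.

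\textbf{Main obstacle: the neighborhood restriction.} Eq.~(\ref{eq:C-mDP}) is only required, and Eq.~(\ref{eq:C-mDPred}) may only be imposed, for neighbors within distance $\eta$. The reduced distance is smaller, so if the full pair is neighboring then the reduced pair is too, and the constraint is available. The direction of the implication is therefore fine.

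For non-neighboring pairs, the PL bound still needs an argument. Here I would use the standard chaining argument implicit in Proposition~\ref{prop:ContextDP}. Concatenate likelihood-ratio bounds along a path of neighbors, and use the triangle inequality for the augmented metric, which is a sum of metrics. If the paper's Proposition~\ref{prop:ContextDP} already covers this case, I simply inherit it. Otherwise, this path argument is the step I would write out carefully.
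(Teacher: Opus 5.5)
Your Steps 1--3 are correct and follow the paper's proof. The paper uses the CD identity $q_{(x,\mathbf{v}),y}=q_{(x,\pi(\mathbf{v})),y}$ and applies Eq.~(\ref{eq:C-mDPred}) in both directions to bound $|\ln(q_{(x_i,\mathbf{b}),y}/q_{(x_j,\mathbf{b}'),y})|$. It then relaxes $d_{(x_i,\mathbf{b}),(x_j,\mathbf{b}')}$ to $d_{(x_i,\mathbf{v}),(x_j,\mathbf{v}')}$ and redoes the Bayes cancellation inline instead of citing Proposition~\ref{prop:ContextDP}. Passing through the full C-mDP constraints and then citing Proposition~\ref{prop:ContextDP}, as you do, is a cleaner packaging of the same argument.

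One correction concerns your ``main obstacle'' paragraph: the fallback chaining argument does not work as you describe it. Chaining the constraint along neighbors $s_0,\dots,s_n$ gives $q_{s_0,y}\le e^{\epsilon\sum_k d_{s_k,s_{k+1}}}q_{s_n,y}$. The triangle inequality says $\sum_k d_{s_k,s_{k+1}}\ge d_{s_0,s_n}$, which is the wrong direction. To recover $\epsilon\, d_{s_0,s_n}$ you would need a neighbor path whose length equals $d_{s_0,s_n}$, i.e.\ the metric would have to be a path metric on the neighbor graph. That generally fails for a Haversine-based metric on a finite location set. Otherwise $\epsilon$ is inflated by the stretch (dilation) of the neighbor graph.

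You do not need this step. The statement imposes Eq.~(\ref{eq:C-mDPred}) for all pairs $(x_i,\mathbf{b}),(x_j,\mathbf{b}')$. The context-aware PL bound (Definition~\ref{def:CAPL}) is only required for neighboring pairs. Your Step~2 already covers those, since $d_{(x_i,\mathbf{b}),(x_j,\mathbf{b}')}\le d_{(x_i,\mathbf{v}),(x_j,\mathbf{v}')}\le\eta$. Drop the chaining remark rather than relying on it.
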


\begin{assumption}
\label{assu:datautility}
We assume that the data utility loss of a secret record is determined by (i) its prior distribution, (ii) the perturbation distribution applied to it, and (iii) a set of context-independent factors (constants) that do not vary across the secret record contexts, such as road topology or the underlying map. Accordingly, each utility loss $c_{(x_i,\mathbf{v}),y_k}$ can be written as
\begin{equation}
c_{(x_i,\mathbf{v}),y_k} = h\big(p_{(x_i,\mathbf{v})}, q_{(x_i,\mathbf{v}),y_k}; \boldsymbol{\theta}\big),
\end{equation}
where $\boldsymbol{\theta}$ denotes such constant factors. By the CD policy, we have
\begin{eqnarray}
c_{(x_i,\mathbf{v}),y_k} 
&=& h\!\big(p_{(x_i,\mathbf{v})}, q_{(x_i,\mathbf{v}),y_k};\, \boldsymbol{\theta}\big) \\
&=& h\!\big(p_{(x_i,\mathbf{b})},\, q_{(x_i,\mathbf{b}),y_k};\, \boldsymbol{\theta}\big) \\
&=& c_{(x_i,\mathbf{b}),y_k}.
\end{eqnarray}
\end{assumption}
\vspace{-0.03in}
\begin{proposition}
\label{propo:CDutility}
Given that Assumption \ref{assu:datautility} holds, the loss function $\mathcal{L}(\mathbf{Q})$ in Eq. (\ref{eq:CA-UL}) can be rewritten as the following reduced form:
\begin{equation}
\mathcal{L}(\mathbf{Q}) = \sum_{(x_i, \mathbf{b}), y_k}  p_{(x_i, \mathbf{b})} \cdot c_{(x_i,\mathbf{b}),y_k} \cdot q_{(x_i,\mathbf{b}),y_k}. 
\end{equation}
\end{proposition}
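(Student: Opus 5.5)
The argument is a direct regrouping of the sum in Eq. (\ref{eq:CA-UL}). I will write each context $\mathbf{v}\in\mathcal{V}$ as the pair $(\mathbf{b},\mathbf{b}^{\mathrm{c}})$, with $\mathbf{b}=\pi(\mathbf{v})$ its restriction to $B_X$ and $\mathbf{b}^{\mathrm{c}}$ its restriction to $B_X^{\mathrm{c}}$. The outer sum over $\mathbf{v}$ then splits into an outer sum over $\mathbf{b}\in\mathcal{B}$ and an inner sum over the complement values $\mathbf{b}^{\mathrm{c}}$ consistent with $\mathbf{b}$. Since the map $\mathbf{v}\mapsto(\mathbf{b},\mathbf{b}^{\mathrm{c}})$ is a bijection onto its image, this only reorders a finite sum. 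The result is
\[
\mathcal{L}(\mathbf{Q})=\sum_{x_i,\mathbf{b},y_k}\ \sum_{\mathbf{b}^{\mathrm{c}}} p_{(x_i,\mathbf{b},\mathbf{b}^{\mathrm{c}})}\, c_{(x_i,\mathbf{v}),y_k}\, q_{(x_i,\mathbf{v}),y_k}.
\]

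Next I will show that the last two factors do not depend on $\mathbf{b}^{\mathrm{c}}$. By the CD policy (Definition \ref{def:CD}), $q_{(x_i,\mathbf{v}),y_k}=q_{(x_i,\mathbf{b}),y_k}$. By Assumption \ref{assu:datautility}, $c_{(x_i,\mathbf{v}),y_k}=c_{(x_i,\mathbf{b}),y_k}$. Both factors can therefore be pulled out of the inner sum, which leaves only $\sum_{\mathbf{b}^{\mathrm{c}}}p_{(x_i,\mathbf{b},\mathbf{b}^{\mathrm{c}})}$. This is the marginal probability $\Pr[X=x_i,B_X=\mathbf{b}]=p_{(x_i,\mathbf{b})}$ by the law of total probability. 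Substituting it gives the reduced form claimed in the proposition.

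The step that needs the most care is the use of Assumption \ref{assu:datautility}. As written, $h$ takes $p_{(x_i,\mathbf{v})}$ as an argument, and that joint prior generally does depend on $\mathbf{b}^{\mathrm{c}}$, so $c$ is not obviously constant over the inner sum. I would handle this by reading the assumption's chain of equalities as its content: the cost depends on the context only through $\mathbf{b}$. The Markov blanket property $X\perp\!\!\!\perp B_X^{\mathrm{c}}\mid B_X$ (Definition \ref{def:MarkovBlanket}) supports this reading, because it gives $\Pr[X=x_i\mid\mathbf{v}]=\Pr[X=x_i\mid\mathbf{b}]$, so the conditional prior of the secret is $\mathbf{b}^{\mathrm{c}}$-invariant. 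I would note this explicitly in the proof. Once the cost is fixed as depending only on $(x_i,\mathbf{b},y_k)$, the rest is routine marginalization, and Markov blanket independence is not needed anywhere else in the argument.
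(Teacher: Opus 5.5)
Your proof is correct. It shares the paper's skeleton: split the sum over $\mathbf{v}$ according to $\mathbf{b}=\pi(\mathbf{v})$, use Assumption~\ref{assu:datautility} to replace $c_{(x_i,\mathbf{v}),y_k}$ by $c_{(x_i,\mathbf{b}),y_k}$, and collapse the inner sum. The difference is which factor gets collapsed. The paper pulls the \emph{prior} out of the inner sum, replacing $p_{(x_i,\mathbf{v})}$ by $p_{(x_i,\mathbf{b})}$ via a Markov-blanket equality of conditionals (it writes $p_{x_i|\mathbf{c}}=p_{(x_i,\mathbf{b})}$). It then removes the sum with the ``marginalisation'' identity $\sum_{\mathrm{sub}_{\mathcal{B}}(\mathbf{v})=\mathbf{b}} q_{(x_i,\mathbf{v}),y_k}=q_{(x_i,\mathbf{b}),y_k}$ of Eq.~(\ref{eq:margin1}). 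You instead pull $q$ out using the CD policy and marginalize the joint prior by total probability. Your version is the one consistent with $p_{(x_i,\mathbf{v})}$ being the joint prior defined under Eq.~(\ref{eq:CA-UL}). Read literally, the paper's identity is an unweighted sum of conditional output probabilities. Under the CD policy, which the paper also invokes, that sum equals the number of completions $\mathbf{b}^{\mathrm{c}}$ times $q_{(x_i,\mathbf{b}),y_k}$. It is valid only as the weighted statement $\sum_{\pi(\mathbf{v})=\mathbf{b}} p_{(x_i,\mathbf{v})}\, q_{(x_i,\mathbf{v}),y_k}=p_{(x_i,\mathbf{b})}\, q_{(x_i,\mathbf{b}),y_k}$, i.e., marginalizing $\Pr[X=x_i,V_X=\mathbf{v},Y=y_k]$ over the completions. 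That repaired form of the paper's route buys a little generality. It holds for an arbitrary mechanism once $q_{(x_i,\mathbf{b}),y_k}$ is read as the induced conditional $\Pr[Y=y_k\mid X=x_i,B_X=\mathbf{b}]$, so the CD policy enters only through Assumption~\ref{assu:datautility}. Your route uses the CD policy directly but needs nothing beyond the law of total probability. Your explicit separation of the joint prior in the loss from the conditional prior inside $h$ addresses exactly the point the paper's proof glosses over.
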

{\revisiondone \noindent \textbf{Discussion}. 
In our performance evaluation (Section~\ref{sec:exp}), we instantiate the framework with vehicle assignment in spatial crowdsourcing, which aligns well with Assumption~\ref{assu:datautility}. In this setting, the platform’s decision quality is driven by travel-cost estimates computed from the reported (perturbed) worker locations and the fixed road network. Hence, for a given worker--context secret $(x_i,\mathbf{v})$, the expected utility loss under a reported location $y_k$ is determined by: (i) the prior $p_{(x_i,\mathbf{v})}$ over the worker’s true location under context $\mathbf{v}$, (ii) the applied perturbation probability $q_{(x_i,\mathbf{v}),y_k}$, and (iii) context-independent constants $\boldsymbol{\theta}$ capturing the map/road topology and the assignment objective. 

By contrast, other mDP applications, such as text perturbation, may fall outside the scope of Assumption~\ref{assu:datautility}. In such cases, utility loss can depend not only on the distribution of protected tokens but also on their semantic meaning in different context, which can substantially influence downstream tasks (e.g., sentiment analysis). To accommodate scenarios where Assumption~\ref{assu:datautility} does not hold, we outline a two-stage optimization strategy as a potential remedy; the details are provided in Appendix~\ref{sec:discussion}. 
} \looseness = -1

\vspace{0.02in}
Finally, to derive the optimal perturbation matrix $\mathbf{Q}$, we formulate the \textbf{Reduced C-mDP} problem as the following LP problem:  
\normalsize
\vspace{-0.10in}
\begin{eqnarray}
\label{eq:C-mDPObjred}
\min && 
\mathcal{L}(\mathbf{Q}) = \sum_{(x_i, \mathbf{b}), y_k} p_{(x_i, \mathbf{b})} \cdot c_{(x_i,\mathbf{b}),y_k} \cdot q_{(x_i,\mathbf{b}),y_k}\\
\mathrm{s.t.} && \mbox{mDP (Eq. (\ref{eq:C-mDPred})) is satisfied} \\
&& 
\sum_{y_k \in \mathcal{Y}}q_{(x_i,\mathbf{b}),y_k} = 1,~ \forall x_i, \mathbf{b}  \\ \label{eq:CLPconstraint1red}
&& 0 \leq q_{(x_i,\mathbf{b}),y_k} \leq 1, \forall x_i, y_k, \mathbf{b}, 
\end{eqnarray}
\normalsize
of which the decision variables are the entries in the matrix $\mathbf{Q} = \left\{q_{(x_i,\mathbf{b}),y_k}\right\}_{(x_i, \mathbf{b},y_k) \in \mathcal{X}\times \mathcal{B}\times \mathcal{Y} }$, including a total of $O(|\mathcal{X}||\mathcal{Y}||\mathcal{B}|)$ decision variables and $O(|\mathcal{X}|^2|\mathcal{Y}||\mathcal{B}|)$ linear constraints, which achieves a lower complexity compared to the original C-mDP formulated in Eq. (\ref{eq:C-mDPObj})--(\ref{eq:CLPconstraint1}). To simplify, hereafter, when we refer to ``C-mDP'', we are indicating its reduced form (Eq. (\ref{eq:C-mDPObjred})--Eq. (\ref{eq:CLPconstraint1red})).

\smallskip
{\revisiondone 
Here, we emphasize that the privacy parameter $\epsilon$ in our paper is an \emph{output-privacy} budget: it bounds the information leaked about an individual user only through the released perturbed output $Y=Q(X,B_X)$ at run time. Although $\mathbf{Q}$ is obtained by solving the above LP, this computation is performed \emph{offline} and the resulting $\mathbf{Q}$ is fixed for deployment; it is therefore not an interactive run-time process that repeatedly queries users' records and releases additional per-user information. In particular, the LP coefficients are constructed from (i) distances/utility components defined on the location domain (derived from the public map data in OpenStreetMap~\cite{openstreetmap}) and (ii) population-level prior statistics estimated offline from a representative trajectory dataset (e.g., using $330\mathrm{k}$ trajectories, with a separate held-out set for evaluation). After $\mathbf{Q}$ is fixed, each user/device perturbs only its own $(X,B_X)$ locally by sampling from the corresponding row of $\mathbf{Q}$, and the only information released about that user's record is $Y$, whose leakage is bounded by the C-mDP constraints with parameter $\epsilon$.}

\DEL{
\vspace{0.03in}
\noindent \textbf{Composition property}. In Theorem \ref{thm:seqcomposition}, we prove that both C-mDP and its reduced formulation satisfy sequential composition, which guarantees that a series of computations collectively maintain privacy if each individually ensuring privacy. 
This property facilitates the breakdown of computations into smaller, manageable building blocks. 
\vspace{-0.05in}
\begin{theorem}
\label{thm:seqcomposition}
(Sequential composition) Let $Q_1, ..., Q_M$ be $M$ perturbation functions satisfying C-mDP, with the privacy budgets equal to  $\epsilon_1, ..., \epsilon_M$, respectively. \looseness = -1
\begin{itemize}
\item [S1.] If the randomization of $Q_1, ..., Q_M$ is \textbf{conditional independent} $\forall x \in \mathcal{X}$ given the context data ${B}_{X} = \mathbf{b}$, then any method outputs $Q_1(x, \mathbf{b}), ..., Q_M(x, \mathbf{b})$ is $\sum_{m=1}^M\epsilon_m$-differentially private. 
\item [S2.] If the randomization of $Q_1, ..., Q_M$ is \textbf{independent} $\forall x \in \mathcal{X}$, then any method outputs $Q_1(x, \mathbf{b}), ..., Q_M(x, \mathbf{b})$ is $\sum_{m=1}^M\epsilon_m$-differentially private.
\end{itemize}
\end{theorem}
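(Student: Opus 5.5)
The plan is to work directly with the joint output distribution of $(Q_1,\dots,Q_M)$ over the product space $\mathcal{Y}^M$. I would show that it satisfies the C-mDP constraints of Eq. (\ref{eq:C-mDPred}) with budget $\sum_{m=1}^M\epsilon_m$. The PL guarantee then follows from Proposition \ref{prop:CDcorrect}, or from Proposition \ref{prop:ContextDP} in the unreduced case. Each $Q_m$ is assumed to follow the CD policy, so its row depends on $(x,\mathbf{v})$ only through $(x,\mathbf{b})$ with $\mathbf{b}=\pi(\mathbf{v})$. Hence it suffices to work with entries $q^{(m)}_{(x,\mathbf{b}),y}$ for $m=1,\dots,M$.

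\textbf{Step 1 (factorization).} Under S1, the randomizations of $Q_1,\dots,Q_M$ are conditionally independent given $X=x$ and $B_X=\mathbf{b}$. Therefore, for every $(y_1,\dots,y_M)\in\mathcal{Y}^M$,
\begin{equation*}
\Pr\left[Q_1=y_1,\dots,Q_M=y_M \mid X=x, B_X=\mathbf{b}\right]=\prod_{m=1}^M q^{(m)}_{(x,\mathbf{b}),y_m}.
\end{equation*}
By the CD policy applied to each $Q_m$, conditioning further on the complement $B_X^{\mathrm{c}}$ does not change this product. The composed mechanism therefore itself follows the CD policy.

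Under S2, the randomizations are fully independent. This is stronger than S1: the internal randomness is independent of everything, so it is in particular conditionally independent given $(X,B_X)$. S2 is thus a special case of S1, and the same factorization holds.

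\textbf{Step 2 (ratio bound).} Fix neighboring augmented records $(x_i,\mathbf{b})$ and $(x_j,\mathbf{b}')$, with distance at most $\eta$ and the same metric $d$ for all $m$. Take the ratio of the two products termwise. Each factor satisfies $q^{(m)}_{(x_i,\mathbf{b}),y_m}\le e^{\epsilon_m d_{(x_i,\mathbf{b}),(x_j,\mathbf{b}')}}\,q^{(m)}_{(x_j,\mathbf{b}'),y_m}$ by the C-mDP constraint for $Q_m$. Since all entries are nonnegative, these inequalities can be multiplied together, giving
\begin{equation*}
\prod_{m} q^{(m)}_{(x_i,\mathbf{b}),y_m}\le e^{(\sum_m\epsilon_m) d_{(x_i,\mathbf{b}),(x_j,\mathbf{b}')}}\prod_m q^{(m)}_{(x_j,\mathbf{b}'),y_m}.
\end{equation*}
This is exactly Eq. (\ref{eq:C-mDPred}) for the composed mechanism, with budget $\sum_m\epsilon_m$. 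Note that the neighborhood relation is the same one used for each individual $Q_m$, so no new pairs of records need to be handled.

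\textbf{Step 3 (conclusion).} Apply Proposition \ref{prop:CDcorrect} to the composed mechanism, treating $\mathcal{Y}^M$ as a finite output space. This yields the bounded context-aware PL in Eq. (\ref{eq:PLcontextbound}) with budget $\sum_m\epsilon_m$.

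The main obstacle I expect is Step 1: making precise what ``conditionally independent given $B_X=\mathbf{b}$'' means for the internal randomness. The factorization needs independence conditional on the full pair $(X,B_X)$, not only on $B_X$. It also needs each $Q_m$ to satisfy the CD policy, so that conditioning on $\mathbf{v}$ rather than $\mathbf{b}$ is harmless. I would first state explicitly the interpretation that, given $(X,B_X)=(x,\mathbf{b})$, the coins of $Q_1,\dots,Q_M$ are mutually independent. If instead only conditioning on $B_X$ is intended, I would additionally require that each $Q_m$'s coins be independent of $X$ given $B_X$. With that in place, the remaining steps are routine multiplication of nonnegative inequalities.
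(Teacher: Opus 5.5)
Your proposal is correct in substance and follows the same route as the paper's proof: factor the joint output probability into $\prod_m$ of the individual rows using (conditional) independence, bound each factor by $e^{\epsilon_m d}$ via the individual C-mDP constraints, and multiply to get budget $\sum_m\epsilon_m$. The paper passes from the $\mathbf{b}$-distance to the $\mathbf{v}$-distance inline via $d_{\cdot|\mathbf{b}}\le d_{\cdot|\mathbf{v}}$, and you do it through Proposition~\ref{prop:CDcorrect}. One point to tighten: the composed mechanism inheriting the CD policy does not follow from each $Q_m$ being CD plus independence given $(X,B_X)$ alone, since the joint output could still encode $B_X^{\mathrm{c}}$. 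It does hold once the randomizations are independent given the full input $(X,V_X)$ (e.g.\ exogenous independent coins), which is what the paper implicitly assumes as well.
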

\vspace{-0.05in}
\begin{theorem}
\label{thm:parrallcomposition}
(Parallel composition) Let $Q_1, ..., Q_M$ be $M$ perturbation functions that all satisfy C-mDP, with the privacy budgets equal to  $\epsilon_1, ..., \epsilon_M$, respectively. If we partition the secret dataset $\mathcal{X}$ into $M$ disjoint subsets $\mathcal{X}_1, ..., \mathcal{X}_M$ and use $Q_m$ to perturb a record $x\in \mathcal{X}_m$ ($m = 1, ..., M$), then the perturbed result is $(\max_{m = 1, ..., M}\epsilon_m)$-differentially private. 
\end{theorem}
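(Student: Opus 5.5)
The plan is to reduce the statement to the single-mechanism C-mDP guarantee through a factorization argument, as in the standard proof of parallel composition for DP. First I would fix the setting precisely. The combined mechanism $Q$ receives a collection of context-augmented secrets $\{(x^{(n)},\mathbf{b}^{(n)})\}_n$. Each record $x^{(n)}$ lies in exactly one block $\mathcal{X}_{m(n)}$ and is perturbed independently by $Q_{m(n)}$. The released object is the tuple of all perturbed records.

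Because the partition $\mathcal{X}_1,\dots,\mathcal{X}_M$ is fixed in advance and does not depend on the data, the randomizations are independent across records. The likelihood of any output tuple $(y^{(n)})_n$ is therefore the product
\begin{equation*}
\prod_n q^{(m(n))}_{(x^{(n)},\mathbf{b}^{(n)}),\,y^{(n)}},
\end{equation*}
where $q^{(m)}$ denotes the perturbation matrix of $Q_m$.

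Next I would take two neighboring inputs that differ in a single record. Say the secret $(x_i,\mathbf{b})$ is replaced by $(x_j,\mathbf{b}')$ with $d_{(x_i,\mathbf{b}),(x_j,\mathbf{b}')}\le\eta$, both lying in the same block $\mathcal{X}_m$. In the likelihood ratio, every factor except the one for the changed record cancels. What remains is $q^{(m)}_{(x_i,\mathbf{b}),y}/q^{(m)}_{(x_j,\mathbf{b}'),y}$. By the C-mDP constraints of $Q_m$ (Eq. (\ref{eq:C-mDPred})), this ratio is at most $e^{\epsilon_m d_{(x_i,\mathbf{b}),(x_j,\mathbf{b}')}}$. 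Since $\epsilon_m\le\max_{m'}\epsilon_{m'}$, it is also at most $e^{(\max_{m'}\epsilon_{m'})\,d}$. Thus $Q$ satisfies C-mDP with budget $\max_m\epsilon_m$. Proposition \ref{prop:ContextDP} (or Proposition \ref{prop:CDcorrect} under the CD policy) then converts this likelihood-ratio bound into the context-aware PL bound.

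The main obstacle is neighboring pairs that straddle two blocks, with $x_i\in\mathcal{X}_m$ and $x_j\in\mathcal{X}_{m'}$ for $m\ne m'$. In that case the surviving ratio compares rows of two different matrices, $q^{(m)}$ and $q^{(m')}$, and no individual C-mDP constraint controls it. I would first try to rule this case out by definition: under parallel composition the partition is public, so the block index is itself released, and indistinguishability is only required among secrets in the same block. This matches the classical DP notion, where disjoint subsets absorb the change. If the intended notion does require cross-block indistinguishability, I would add a hypothesis: either the $Q_m$ are jointly constrained on boundary pairs, i.e. Eq. (\ref{eq:C-mDPred}) holds across blocks with budget $\max_m\epsilon_m$, or no two blocks contain $\eta$-neighbors. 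Under either hypothesis the same single-factor argument goes through unchanged.
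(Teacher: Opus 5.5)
Your proof is correct and matches the paper's. The paper takes neighboring $x_i,x_j\in\mathcal{X}_m$ and bounds the ratio of $Q_m$'s output probabilities by $e^{\epsilon_m d}$. It then relaxes this to $e^{(\max_m\epsilon_m)d}$ and finishes with $d_{\cdot|\mathbf{b}}\le d_{\cdot|\mathbf{v}}$, which is the role your appeal to Proposition~\ref{prop:CDcorrect} plays; your product factorization is a harmless multi-record wrapper around the same single-factor bound.

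The paper only ever considers pairs inside one block, so it silently adopts your within-block reading. Your observation about cross-block neighbors is a real caveat the paper never addresses: constant mechanisms $Q_1\equiv y_1$ and $Q_2\equiv y_2$ each satisfy C-mDP yet reveal the block. Your added hypothesis is exactly what the literal cross-block reading would require.
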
}

\vspace{-0.05in}
\section{Markov Blanket Identification}
\label{sec:MBI}
\vspace{-0.00in}


Within the computation framework outlined in Section \ref{subsec:complexityreduce}, solving the reduced C-mDP requires the identification of the Markov blanket $B_X$ for the secret record $X$. In this section, we present our \emph{Markov Blanket Identification (MBI) framework}. 

To illustrate, we apply MBI in the context of vehicle privacy protection, where vehicles need to report perturbed locations to a central server \cite{Qiu-TMC2020}. 

\vspace{-0.05in}
\subsection{Additional Assumptions and Notations} 
Typically, the design of MBI relies on the two assumptions  \cite{Tsamardinos2003AlgorithmsFL}: 
\vspace{-0.03in}
\begin{itemize}
\item [(A1)] the data under consideration was generated by a \emph{Bayesian network} faithful to it, and 
\item [(A2)] there exists a reliable statistical method for testing the CI between the target random variables.
\end{itemize}
\vspace{-0.03in}
In the context of vehicle location privacy protection, the secret record $X_t$ - representing the location of a target vehicle at time $t$ - is correlated to the vehicle's preceding locations $V_{X_t} = \left\{X_{t-1}, ..., X_{t-T}\right\}$ at time slots $t-1, ..., t-T$, where $T$ is the size of $V_{X_t}$. This dependency can be naturally modeled by a higher-order Markov process \cite{Qiao-TITS2015}, commonly seen as a special case of Bayesian networks. Note that \emph{our methodology
remains applicable in general mDP applications given the two assumptions (A1) and (A2) hold}.

To test the CI between $X_t$ and its context variables $V_{X_t}$, we employ a statistical CI test using analytic kernel embeddings of location distributions \cite{Scetbon-ICML2022}. Specifically, the CI testing summarizes the evidence in the observational data against a null hypothesis $\mathcal{H}_0: X_t   \perp \!\!\!\perp X_{t-l} | B_{X_t}$ ($X_{t-l} \in V_{X_t}\backslash B_{X_t}$), and returns a $p$-value, representing the probability of making a \emph{Type I error} - rejecting $\mathcal{H}_0$ when it is true. If $p$-value $\leq 0.05$, it is typically considered to be statistically significant, in which case we \emph{reject} $\mathcal{H}_0$; otherwise, we \emph{fail to reject} $\mathcal{H}_0$ \cite{Bellot-NIPS2019}.

Notably, identifying the Markov blanket requires performing CI testing between protected records and their context variables, which has relatively high time complexity \cite{Scetbon-ICML2022}. As a result, it becomes challenging to identify the Markov blanket in real-time due to the time-sensitive nature of vehicle location reports. This challenge motivates us to predict $B_{X_t}$ using a pre-trained \emph{deep neural network (DNN)} rather than performing CI testing during each perturbation. 
The DNN establishes an empirical relationship between $B_{X_t}$ and the features that possibly influence vehicle mobility, including \emph{speeds}, \emph{regions}, and \emph{time}. This relationship, represented as a function $f(time, speed, region, \mathcal{H}_0)$, returns either ``\emph{Reject}'' or ``\emph{Fail to reject}'' for the hypothesis $\mathcal{H}_0$, which enables the vehicle to promptly identify the Markov blanket of its location based on its current states $\left\{time, speed, region\right\}$ without doing the CI test that might introduce significant delays.


\vspace{-0.05in}
\subsection{Methodology}  
\vspace{-0.00in}
Fig. \ref{fig:MBIframework} shows the framework of Markov Blanket Identification, including two stages (a) \emph{Markov Blanket Discovery} and (b) \emph{Markov Blanket Prediction}. 
\vspace{-0.05in}

\subsubsection{Stage (a): Markov Blanket Discovery} 
\vspace{-0.00in}
We first carry out an empirical study on the taxicab trajectory datasets from two different cities, "\emph{Rome, Italy}" \cite{roma-taxi-20140717} and "\emph{Porto, Portugal}" \cite{taxiPorto} to analyze the dependence of Markov blankets on factors including \emph{vehicle speed}, \emph{geographic regions} (e.g., downtown vs suburban areas), and \emph{time} (e.g., peak hours vs off-peak hours). In contrast to many existing studies (e.g., \cite{Qiu-SIGSPATIAL2022}) that assume the mobility of vehicles follows a first-order Markov process (so that a Markov blanket contains only one preceding location), \emph{our discovery reveals a notable variation in the Markov blankets given the different factors}. The detailed results of the empirical study can be found in Section \ref{subsec:empirical}.  

\begin{figure}[t]
\centering
\hspace{0.00in}
\begin{minipage}{0.50\textwidth}
  \subfigure{
\includegraphics[width=0.99\textwidth]{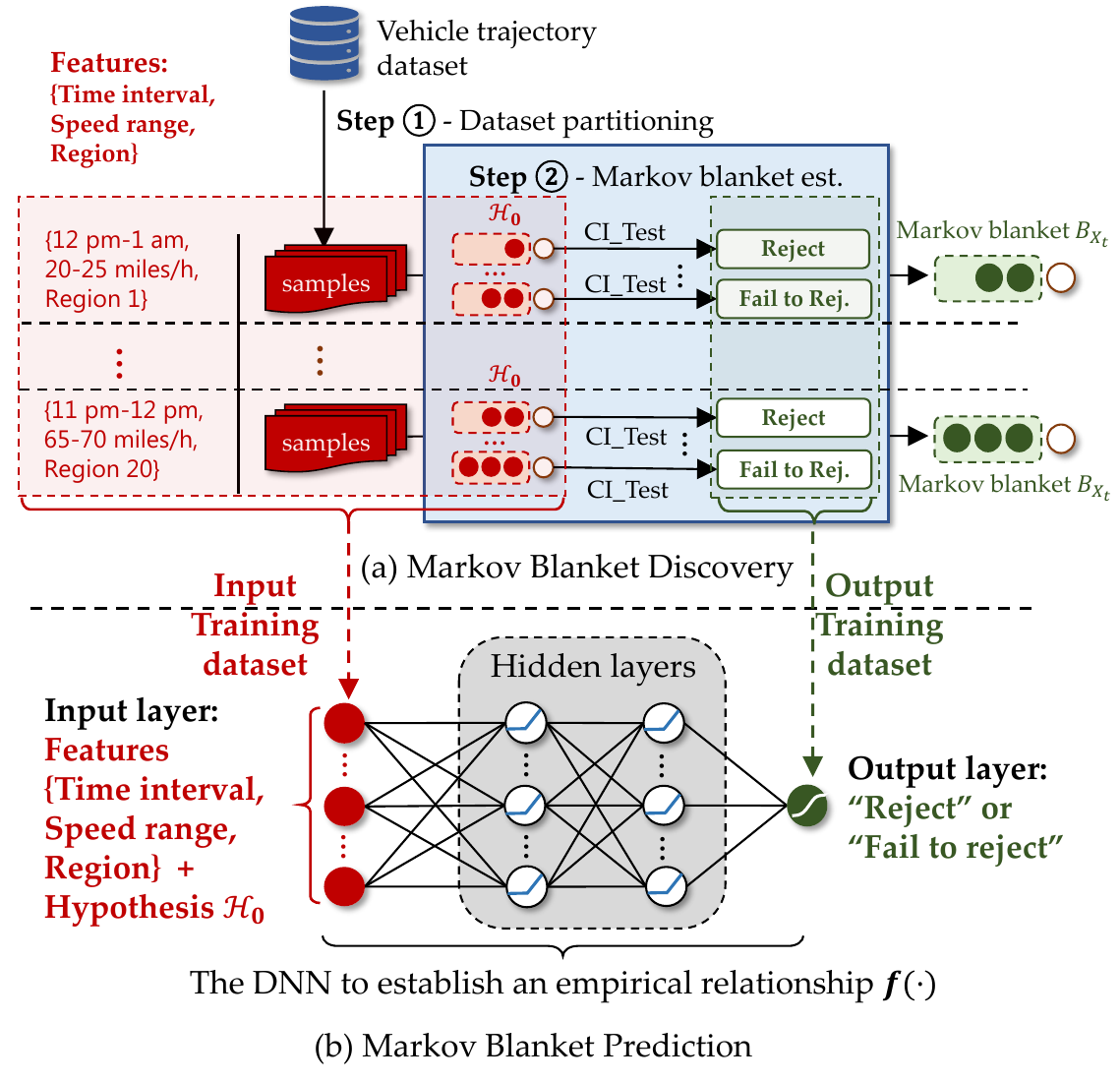}}
\vspace{-0.20in}
\end{minipage}
\caption{Markov blanket identification framework.}
\label{fig:MBIframework}
\vspace{-0.15in}
\end{figure}
\vspace{0.02in}
\textbf{Step \textcircled{1}: Dataset partitioning}. 
Considering the variance of Markov blankets under different features, as Fig. \ref{fig:MBIframework}(a) shows, we partition the initial dataset into groups according to the combination of the features $\{time, speed, region\}$. This ensures that the Markov blankets within each sample group exhibit a relatively consistent pattern (with each group comprising a minimum of 200 samples): 
\vspace{-0.00in}
\newline (1) \emph{Time}: According to the 24-hour time format, we 
categorize the trajectories into 24 groups [12 am--1 am), [1 am--2 am), ..., [11 pm--12 am) on the UTC +0 timestamps provided in the initial dataset. 
\newline  (2) \emph{Speed}: We calculate the average speed of each \emph{examined sub-trajectory} $\{X_t, \hat{B}_{X_t}\}$, comprising the current location $X_t$ and the tested Markov blanket $\hat{B}_{X_t}$. We divide all the sub-trajectories into groups of 5 mph intervals across the entire speed range from 0 to 120 mph.
\vspace{-0.00in}
\newline (3) \emph{Regions:} For the Rome dataset, we use a bounding box from 41.64°N to 42.12°N and 12.23°E to 12.83°E as the approximated boundary of Rome city, and divide the whole region into 4$\times$5 subregions with each cell size 12'$\times$12', then we categorize the sub-trajectories $\{X_t, \hat{B}_{X_t}\}$ into 20 groups based on their located regions. For the Porto dataset, the approximated boundary of the bounding box is 41.03°N to 41.27°N and 8.49°E to 8.73°E, and it is divided into 3$\times$3 subregions with each cell size 8'$\times$8', then we categorize the sub-trajectories into 9 groups based on their located regions. \looseness = -1
\vspace{-0.00in}

\vspace{0.02in}
\textbf{Step \textcircled{2}: Markov blanket identification}. Following the dataset partitioning, we proceed to estimate the Markov blanket $B_{X_t}$ of locations $X_t$ in each sample group. By assuming the locations in $V_{X_t}$ with closer time stamps to the current location as more correlated, we test the locations in $V_{X_t}$ in the order of decreasing time stamps. 



\vspace{-0.00in}
\begin{algorithm}[h]
\SetKwFunction{CITest}{CI\_Test}
\SetKwInOut{Input}{Input}
\SetKwInOut{Output}{Output}
\SetKwComment{tcp}{// }{}
\SetKwRepeat{Do}{do}{while}
\SetAlgoNoEnd
\small

\caption{\small Markov blanket identification in \textbf{Step \textcircled{2}}.}
\label{al:MBD}

\Input{Samples of $X_t$ and its context variables $V_{X_t}$}
\Output{Markov blanket $B_{X_t}$}

Let $B_{X_t}$ be an empty Markov blanket\;
$m \leftarrow 1$\tcp*[r]{Iteration index}

\Do{$p$-value $\leq 0.05$}{
    Add $X_{t-m}$ to $B_{X_t}$\;
    $p$-value $\leftarrow$ \CITest{$\mathcal{H}_0: X_t \perp\!\!\!\perp X_{t-m-1}\mid B_{X_t}$}\;
    Increase $m$ by 1\;
}

\Return{$B_{X_t}$\;}

\normalsize
\end{algorithm}
\vspace{-0.00in}

Algorithm \ref{al:MBD} shows the pseudocode: It initializes $B_{X_t}$ by $X_{t-1}$ (lines 1 and 4). After that, in each iteration $m$, it applies the function \texttt{CI\_test()}  \cite{Scetbon-ICML2022} to test the null hypothesis $\mathcal{H}_0: X_t  \perp \!\!\!\perp X_{t-m-1} | B_{X_t}$ (lines 2-7). If $\mathcal{H}_0$ is rejected, i.e. $p$-value $\leq 0.05$, then we add $X_{t-m}$ to $B_{X_t}$. The algorithm ends when the $p$-value returned by \texttt{CI\_test()} is higher than 0.05 (line 7), indicating that the hypothesis $\mathcal{H}_0$ fails to be rejected. In this case, we do not reject the Markov blanket $B_{X_t}$ and return it as the result (line 8). After Algorithm \ref{al:MBD}, we also label each Hypothesis $\mathcal{H}_0: X_t  \perp \!\!\!\perp X_{t-m-1} | B$ with $B_{X_t} \subset B \subset V_{X_t}$ as ``Fail to reject''.

\vspace{-0.04in}
\subsubsection{Stage (b) Markov Blanket Prediction}
\vspace{-0.01in}
Upon completing Stage (a), the status of each null hypothesis $\mathcal{H}_0$ is determined as either ``Reject'' or ``Fail to reject'' in each sample group sharing the same feature combination $\left\{time, speed, regions\right\}$. As depicted in Fig. \ref{fig:MBIframework}(b), we compile these features along with each $\mathcal{H}_0$ to create the input training dataset. The corresponding conclusions, ``Reject'' or ``Fail to reject'', serve as the output training dataset.


\vspace{-0.00in}
\begin{algorithm}[h]
\SetKwFunction{DNN}{DNN}
\SetKwInOut{Input}{Input}
\SetKwInOut{Output}{Output}
\SetKwComment{tcp}{// }{}
\SetKwRepeat{Do}{do}{while}
\SetAlgoNoEnd
\small

\caption{\small Markov blanket prediction using the pre-trained DNN.}
\label{al:DNN}

\Input{Current speed, time, and region}
\Output{Markov blanket $B_{X_t}$}

Let $B_{X_t}$ be an empty Markov blanket\;
$m \leftarrow 1$\tcp*[r]{Iteration index}

\Do{Indicator = ``Reject''}{
    Add $X_{t-m}$ to $B_{X_t}$\;
    Indicator $\leftarrow$ \DNN{time, speed, region, $m$}\;
    Increase $m$ by 1\;
}

\Return{$B_{X_t}$\;}

\normalsize
\end{algorithm}
\vspace{-0.00in}

Subsequently, a DNN is trained on this dataset to establish an empirical relationship $f$, enabling the identification of whether a hypothesis $\mathcal{H}_0$ is tested as ``Reject'' or ``Fail to reject'' given the features $\left\{time, speed, regions\right\}$. 

Algorithm \ref{al:DNN} shows the detailed pseudocode of the Markov blanket prediction introduced in \textbf{Stage (b)}. The inputs of the algorithm include the vehicle's current speed, time, and region. The output is the predicted Markov blanket. The algorithm commences by initializing the Markov blanket $B_{X_t}$ using $X_{t-1}$ (lines 1 and 4). Subsequently, in each iteration $m$, the algorithm leverages the pre-trained \texttt{DNN()} to predict whether it should ``\emph{reject}'' or ``\emph{fail to reject}'' the null hypothesis $\mathcal{H}_0: X_t  \perp \!\!\!\perp X_{t-m-1} | B_{X_t}$. The algorithm adds the preceding locations to $B_{X_t}$ sequentially (line 2-7),  concluding when it ``\emph{fails to reject}'' $\mathcal{H}_0$ (line 7), and returns $B_{X_t}$ as the identified Markov blanket (line 8).

\vspace{-0.00in}
\section{Performance Evaluation}
\label{sec:exp}
\vspace{-0.00in}
In this section, we evaluate our proposed context-aware mechanism, \emph{LP+C-mDP}, in a vehicle-based LBS setting~\cite{Qiu-EDBT2024, Pappachan-EDBT2023}, where a central server collects a \emph{perturbed} location report from a participating vehicle and recommends a destination (e.g., for {\revisiondone navigation} or spatial crowdsourcing). We use two real-world taxicab mobility datasets from \emph{Rome, Italy} and \emph{Porto, Portugal}: Rome contains 367{,}052 trajectories from 320 taxis over 30+ days\footnote{https://ieee-dataport.org/open-access/crawdad-romataxi}, and Porto contains 1{,}666{,}766 trajectories from 442 taxis over 540 days\footnote{https://www.kaggle.com/datasets/crailtap/taxi-trajectory}; each record includes a timestamp and GPS coordinates. For each city, we map trajectories onto a road network extracted from OpenStreetMap~\cite{openstreetmap} and model the network as a weighted directed graph~\cite{Qiu-TMC2020} (Rome: 43{,}160 nodes/89{,}739 edges; Porto: 5{,}033 nodes/10{,}537 edges).

{\revisiondone We begin with an empirical study in Section~\ref{subsec:empirical} that characterizes conditional dependencies in the mobility traces, and thus the effective Markov blanket size, across coarse factors such as region, speed, and time, motivating adaptive context selection.} Section~\ref{subsec:MBIperformance} then evaluates the accuracy and inference latency of our Markov-Blanket Identification (MBI) module and compares utility loss across privacy budgets against representative mDP baselines. Finally, Section~\ref{subsec:UL} reports computational efficiency, including both offline mechanism construction and online perturbation (sampling) costs.




\vspace{-0.00in}
\subsection{Correlation Between Time, Speed, Regions, and CI Testing}
\label{subsec:empirical}




{\revision 
\begin{table}[h]
\vspace{-0.05in}
\caption{Correlation between different features and the $p$-values of the CI testing \cite{Scetbon-ICML2022}.}
\vspace{-0.10in}
\label{Tb:Feature}
\centering
\small 
\begin{tabular}{ c|c|c|c|c}
\hline
\hline
\multicolumn{1}{ c  }{}& \multicolumn{4}{ c }{Features} \\
\cline{2-5}
\multicolumn{1}{ c|  }{Correlation}
 & \multicolumn{1}{ |c| }{Speed}&\multicolumn{2}{ |c| }{Region}
 & \multicolumn{1}{ |c }{Time}
 \\ 
\cline{3-4}
\multicolumn{1}{ c|  }{measures}
& \multicolumn{1}{ |c| }{}&\multicolumn{1}{ |c| }{Long.}&\multicolumn{1}{ c| }{Lat.}
& \multicolumn{1}{ |c }{}
\\ 
\hline
\hline
\multicolumn{1}{ c|  }{ } & \multicolumn{4}{ c  }{ Rome, Italy} \\
\cline{2-5}
\multicolumn{1}{ c|  }{ Pearson} & -0.4886 & -0.0081 & -0.0028 & 0.0410 \\
\multicolumn{1}{ c|  }{ Spearman's rank} & -0.7122 & -0.0734 & 0.0124 & 0.0620 \\
\multicolumn{1}{ c|  }{ Kendall's tau} & -0.5430 & -0.0539 & 0.0102 & 0.0420 \\ 
\multicolumn{1}{ c|  }{Type I error of CCIT} & 1.14e-30 & 0.0974 & 0.3977 & 0.2871
\\ 
\hline
\hline
\multicolumn{1}{ c|  }{ } & \multicolumn{4}{ c  }{Porto, Portugal} \\
\cline{2-5}
\multicolumn{1}{ c|  }{ Pearson} & -0.1848 & -0.0913 & 0.0905 & -0.0189 \\
\multicolumn{1}{ c|  }{ Spearman's rank} & -0.1283 & -0.2405 & 0.1144 & -0.0296 \\
\multicolumn{1}{ c|  }{ Kendall's tau} & -0.1075 & -0.1804 & 0.0912 & -0.0202 \\ 
\multicolumn{1}{ c|  }{Type I error of CCIT} & 9.55e-10 & 3.72e-14 & 0.2447 & 0.0152
\\ 
\hline
\end{tabular}
\vspace{-0.00in}
\end{table}}

Table \ref{Tb:Feature} lists multiple statistical measures to demonstrate a correlation between $p$-values and the other features like \emph{speed}, \emph{regions}, and \emph{time}. The correlation measures include Pearson's correlation, Spearman's rank, and Kendall's tau, providing insights into the linear and rank-based relationships between the features. The Type I error rate of the CCIT\footnote{Classifier CI Test (CCIT)  \cite{Sen-NIPS2017} is a non-parametric method that can test independence between continuous random variables. In Table \ref{Tb:Feature}, if the probability of its Type I error is lower than 0.05, the null hypothesis is rejected, meaning that the $p$-value of the CI test in \cite{Scetbon-ICML2022} and the feature are not independent.} testing is also included. 

The table reveals distinct correlation features between the selected features and $p$-values in the two cities. In Rome, \emph{speed} exhibits a strong negative correlation with $p$-values across all three correlation metrics, indicating that as \emph{speed} increases, $p$-values tend to decrease. In Porto, \emph{speed} also shows a negative correlation with $p$-values, and the CCIT results suggest that this relationship remains statistically significant, albeit much weaker than in Rome. The correlations between $p$-values and \emph{regions} (Longitude and Latitude) in Rome are weak, and the CCIT values indicate that they are not statistically significant. However, Longitude shows a slightly stronger negative correlation with $p$-values in Porto. The \emph{time} feature reveals a weak correlation in both cities.


Next, we provide detailed empirical analysis of the correlation between \emph{regions, speed, and time}, and the $p$-value of the CI test.

\begin{figure*}[t]
\centering
\begin{minipage}{1.0\textwidth}
\centering
  \subfigure[\footnotesize $|B_{X_t}| = 1$ and $m = 1$]{
\includegraphics[width=0.18\textwidth, height = 0.12\textheight]{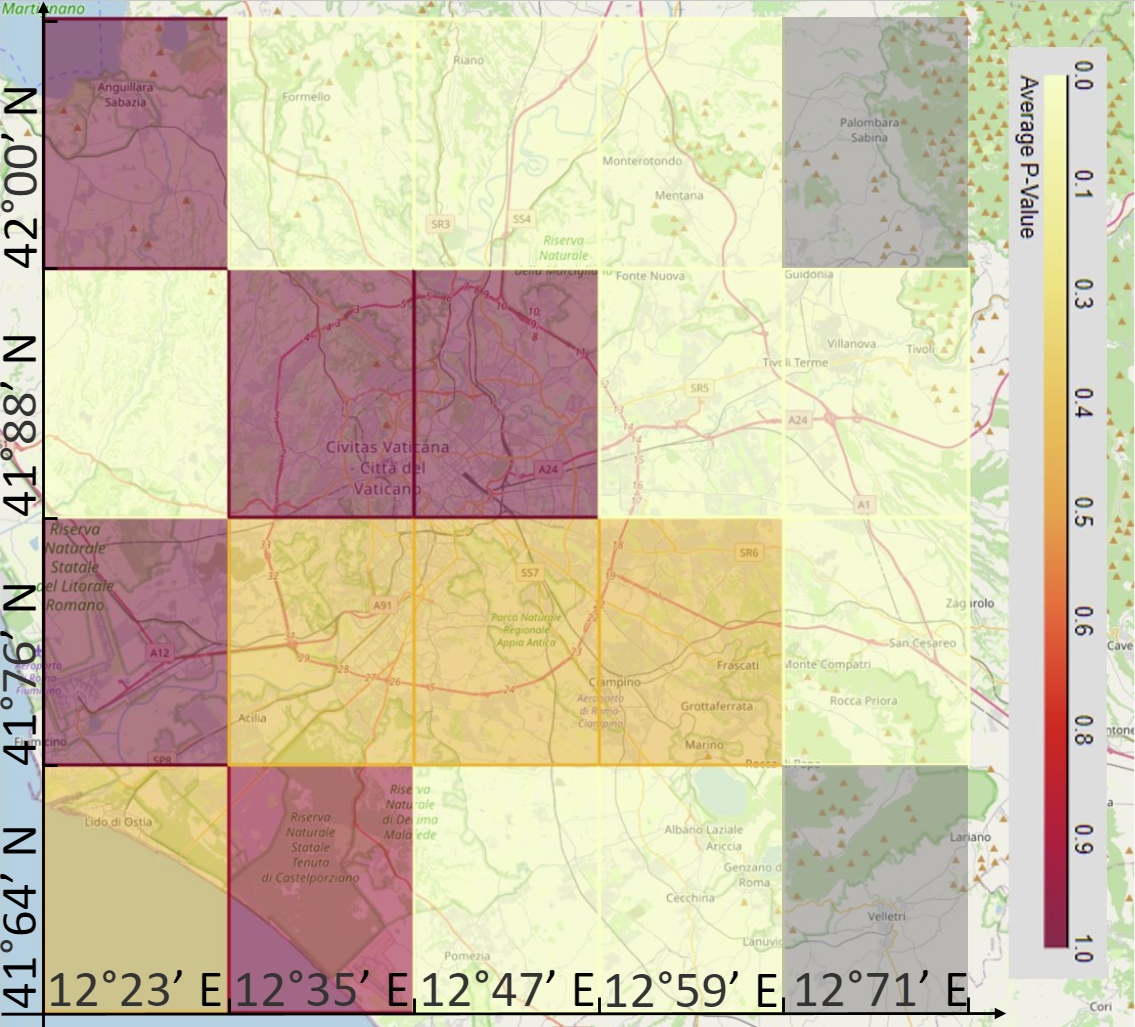}}
\label{}
\centering
  \subfigure[\footnotesize $|B_{X_t}| = 2$ and $m = 2$]{
\includegraphics[width=0.18\textwidth, height = 0.12\textheight]{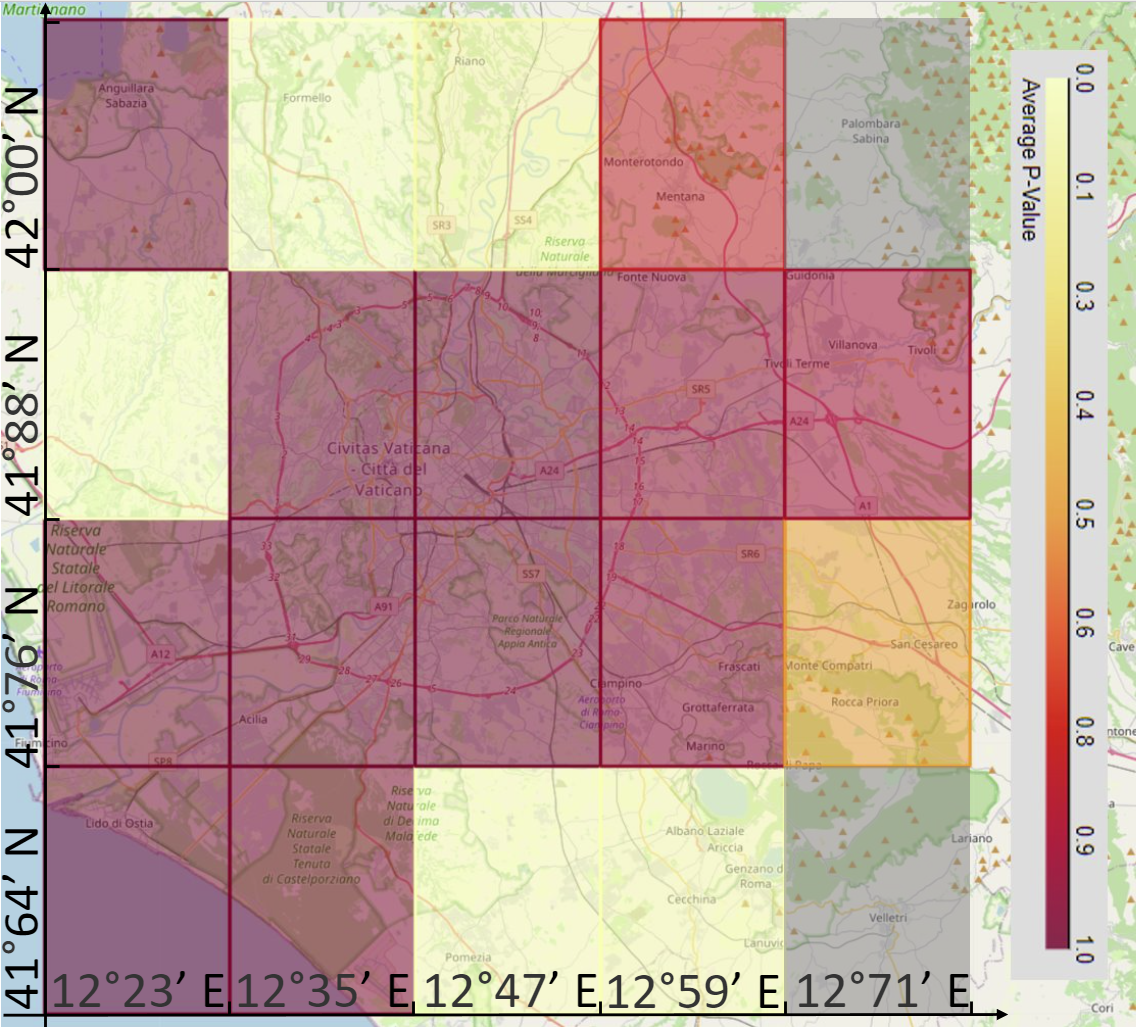}}
\label{}
\centering
  \subfigure[\footnotesize $|B_{X_t}| = 3$ and $m = 3$]{
\includegraphics[width=0.18\textwidth, height = 0.12\textheight]{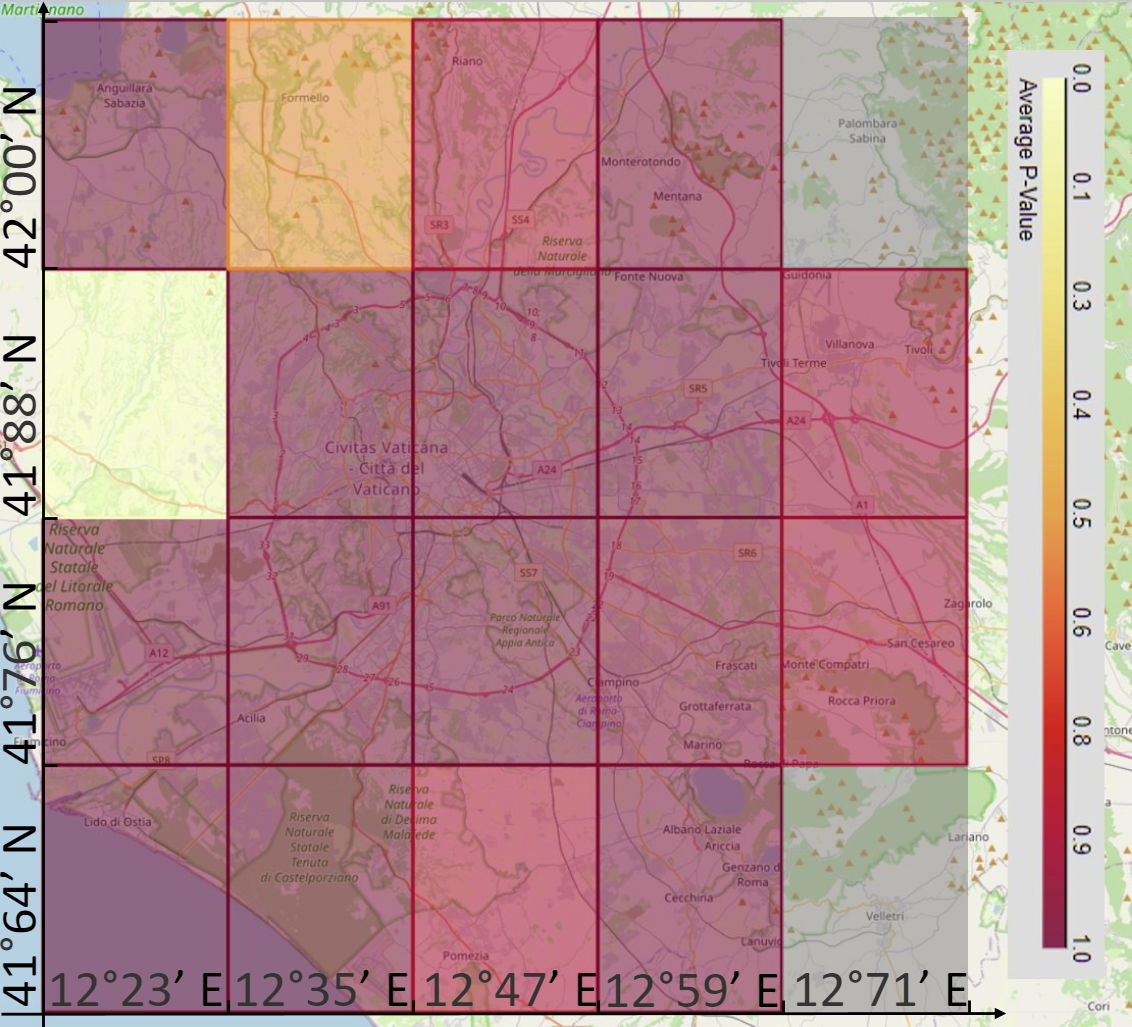}}
\label{}
\centering
  \subfigure[\footnotesize $|B_{X_t}| = 4$ and $m = 4$]{
\includegraphics[width=0.18\textwidth, height = 0.12\textheight]{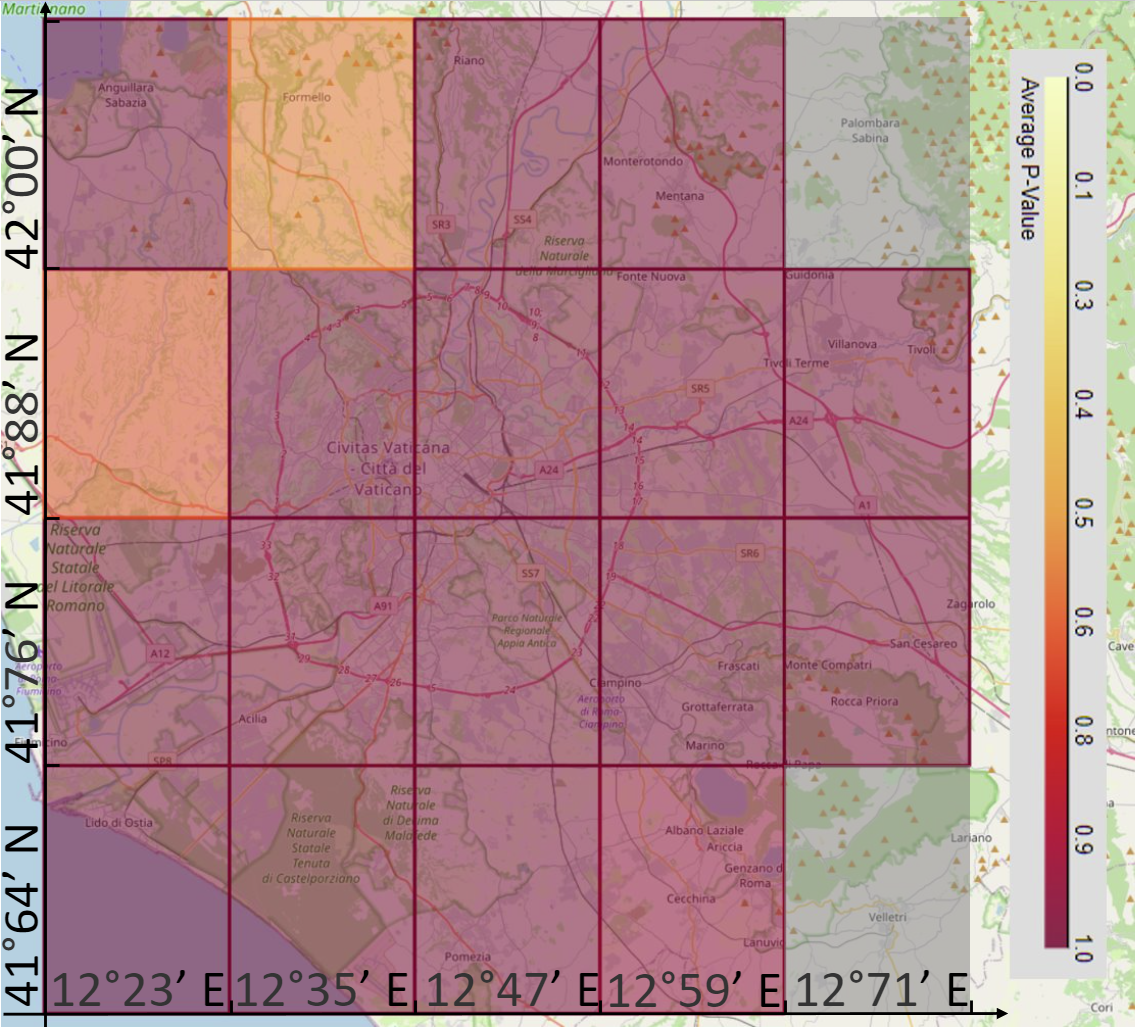}}
\label{}
\centering
  \subfigure[\footnotesize $|B_{X_t}| = 5$ and $m = 5$]{
\includegraphics[width=0.18\textwidth, height = 0.12\textheight]{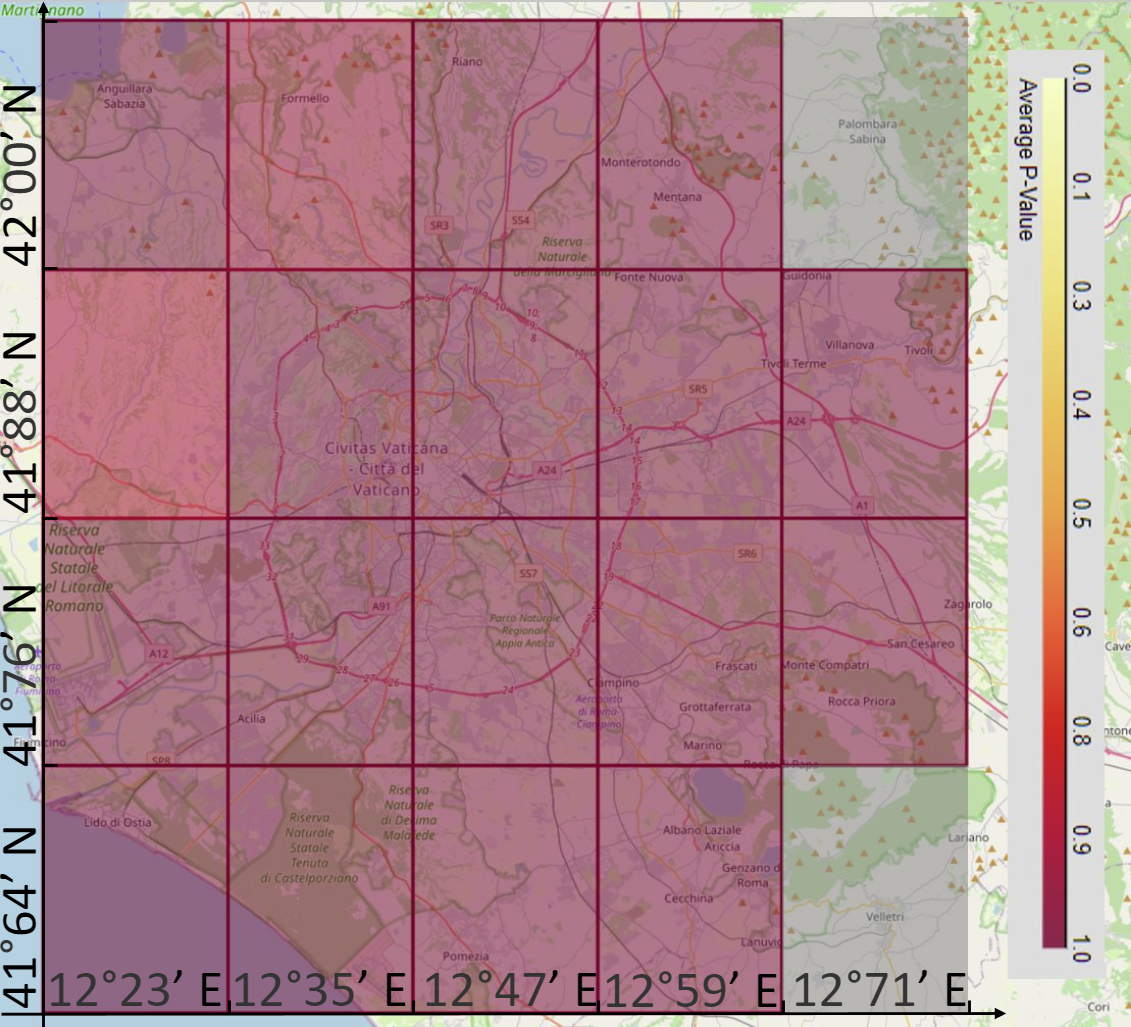}}
\label{}
\end{minipage}
\vspace{-0.08in}
\caption{Heatmap of $p$-values returned by the null hypothesis $\mathcal{H}_0$ across different regions in Rome, Italy.}
\label{fig:regionvspvalueRome}
\vspace{-0.05in}
\end{figure*}

\begin{figure*}[t]
\centering
\begin{minipage}{1.0\textwidth}
\centering
  \subfigure[\footnotesize $|B_{X_t}| = 1$ and $m = 1$]{
\includegraphics[width=0.18\textwidth, height = 0.12\textheight]{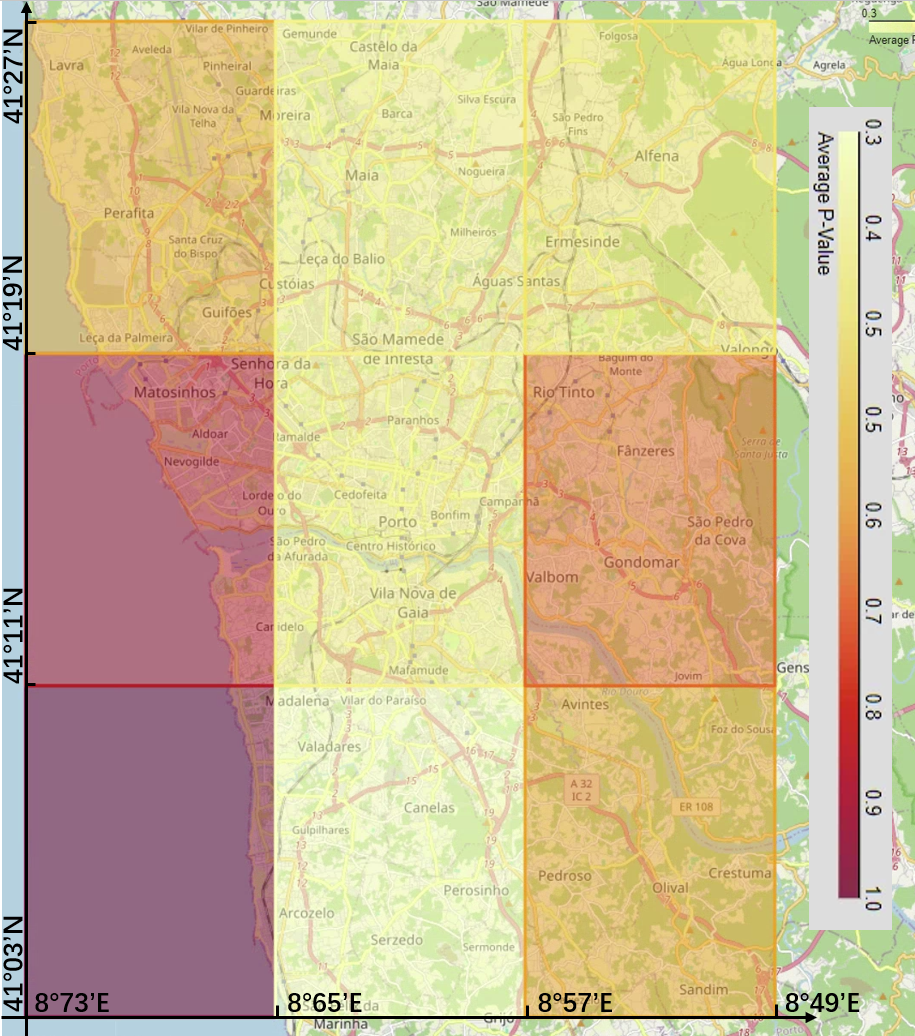}}
\label{}
\centering
  \subfigure[\footnotesize $|B_{X_t}| = 2$ and $m = 2$]{
\includegraphics[width=0.18\textwidth, height = 0.12\textheight]{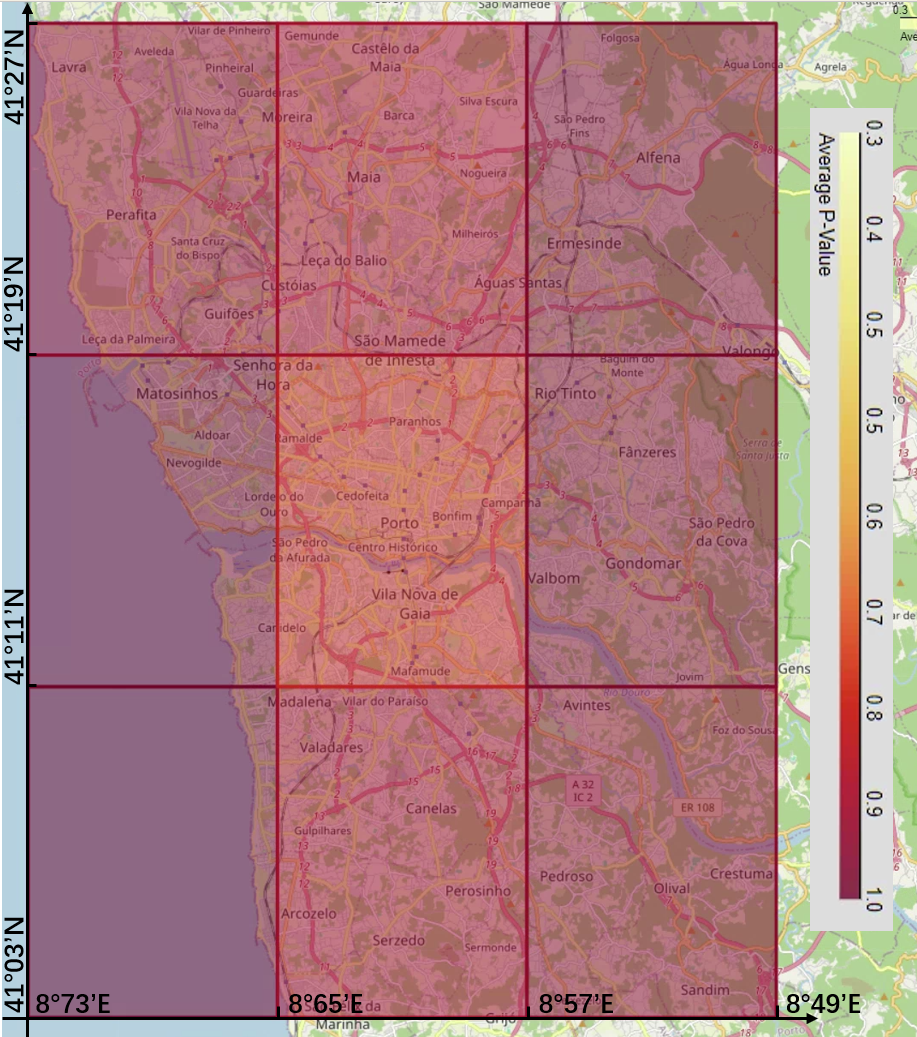}}
\label{}
\centering
  \subfigure[\footnotesize $|B_{X_t}| = 3$ and $m = 3$]{
\includegraphics[width=0.18\textwidth, height = 0.12\textheight]{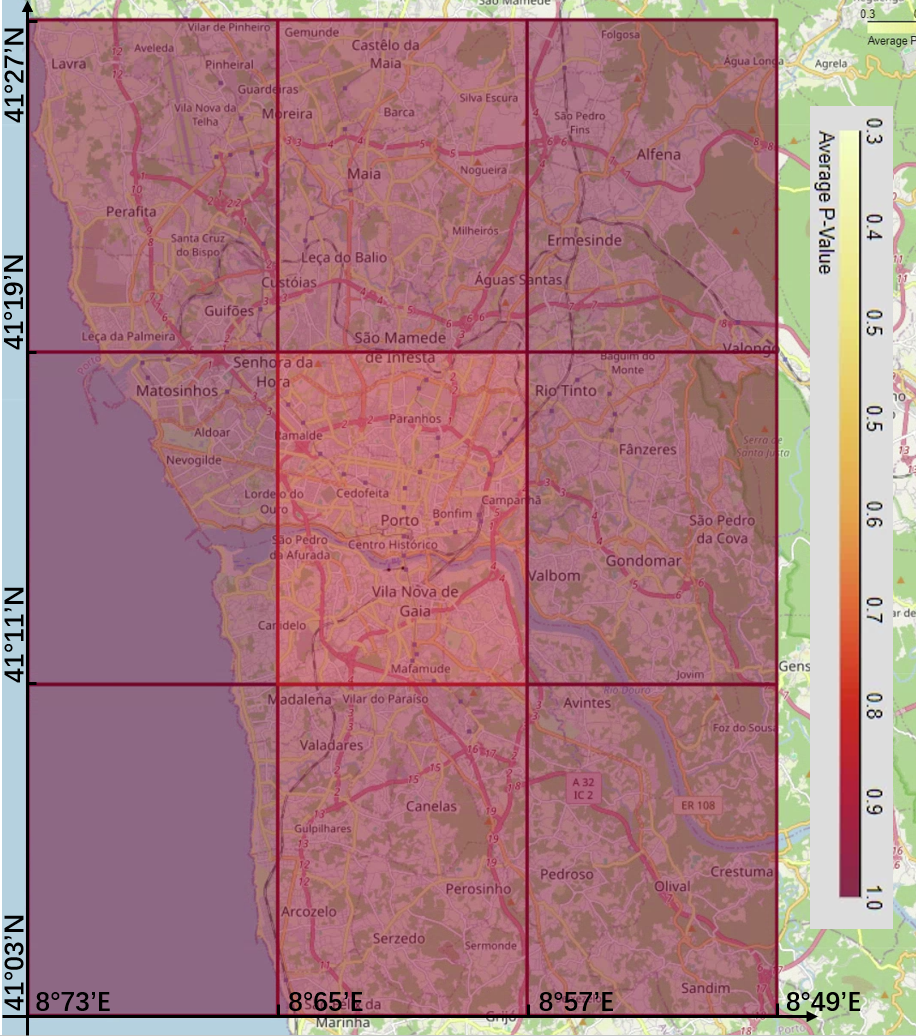}}
\label{}
\centering
  \subfigure[\footnotesize $|B_{X_t}| = 4$ and $m = 4$]{
\includegraphics[width=0.18\textwidth, height = 0.12\textheight]{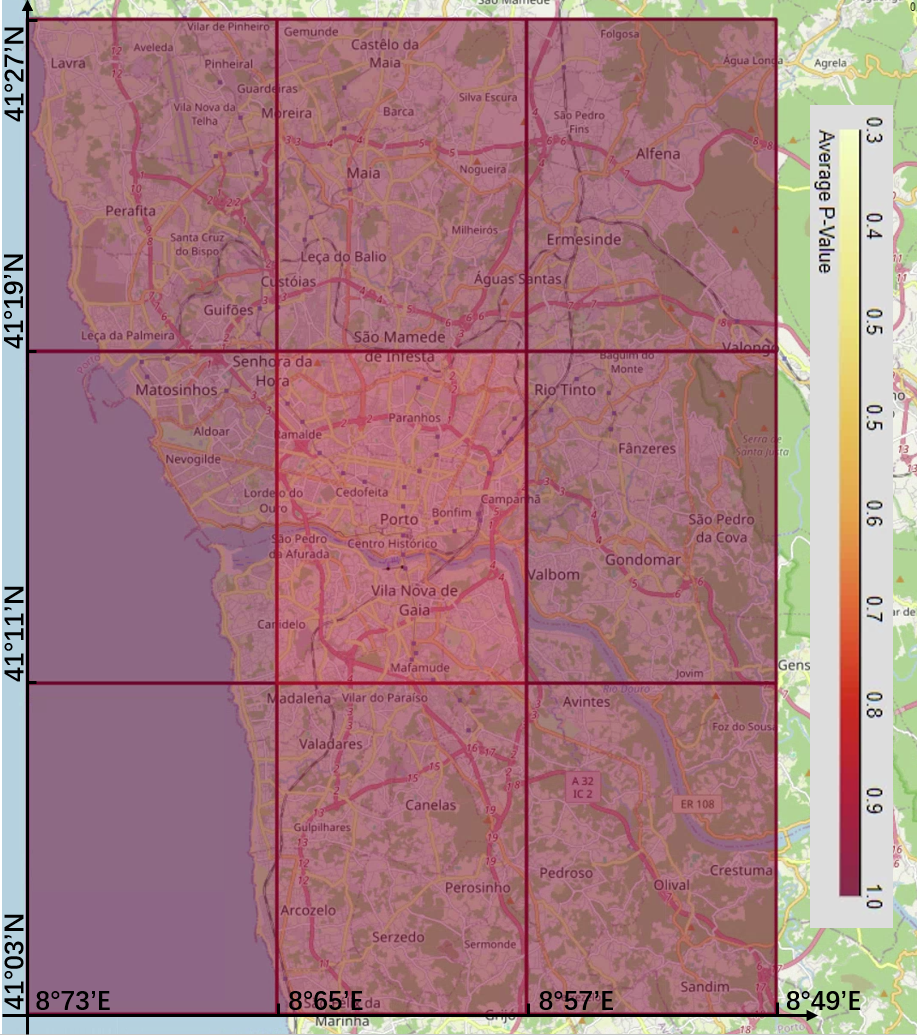}}
\label{}
\centering
  \subfigure[\footnotesize $|B_{X_t}| = 5$ and $m = 5$]{
\includegraphics[width=0.18\textwidth, height = 0.12\textheight]{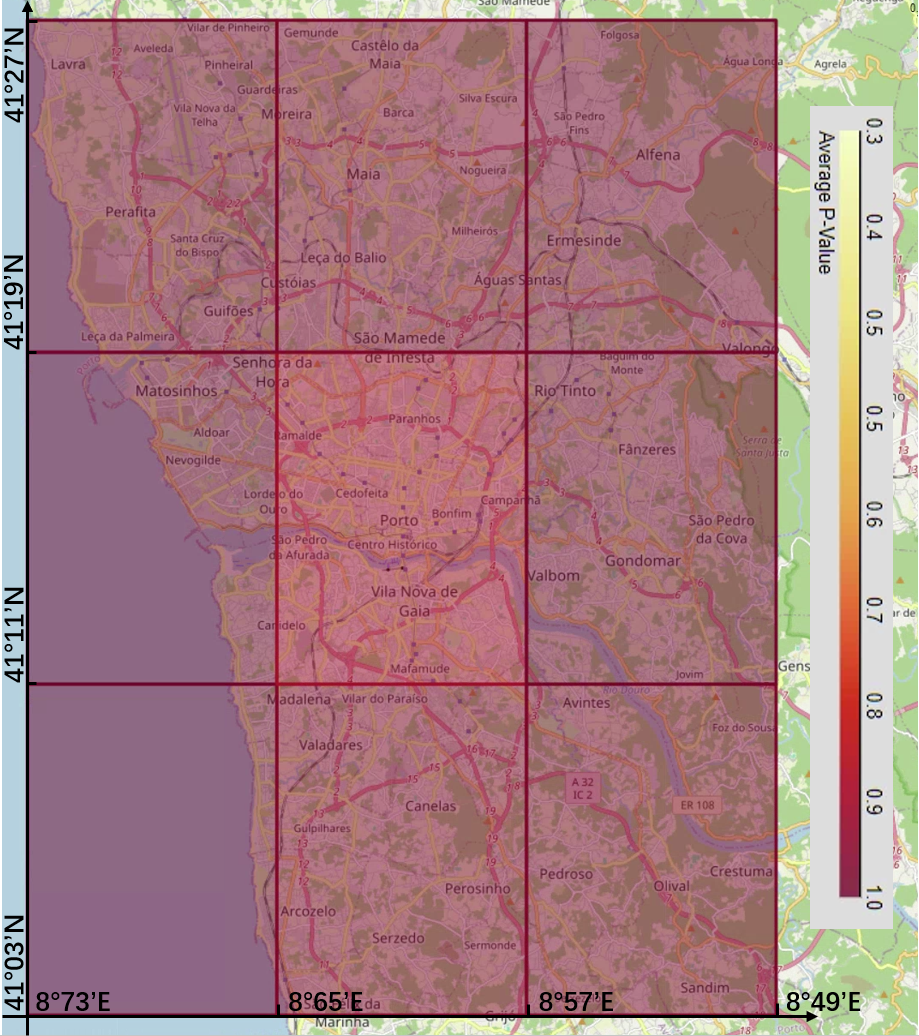}}
\label{}
\end{minipage}
\vspace{-0.08in}
\caption{Heatmap of $p$-values returned by the null hypothesis $\mathcal{H}_0$ across different regions in Porto, Portugal.}
\label{fig:regionvspvaluePorto}
\vspace{-0.05in}
\end{figure*}

\vspace{0.03in}
\noindent \textbf{(1) $p$-value vs. regions}. As Fig. \ref{fig:regionvspvalueRome} and Fig. \ref{fig:regionvspvaluePorto} show, we define the approximated boundary of Rome and Porto using two bounding boxes (as described earlier). Due to insufficient data in two regions of the Rome dataset, we excluded these two regions from our analysis (of which the bounding boxes are (i) from 42.00°N to 42.12°N and 12.71°E to 12.83°E, and (ii) from 41.64°N to 41.76°N and 12.71°E to 12.83°E, respectively). We test the null hypothesis $\mathcal{H}_0: X_t  \perp \!\!\!\perp X_{t-m-1} | B_{X_t}$ for the trajectories using the CI test. In Fig. \ref{fig:regionvspvalueRome}(a)-(e), we display the heatmap of the $p$-values returned by the CI test across the 18 regions in Rome when $|B_{X_t}| = 1, 2, 3, 4, 5$, respectively. The figures show that when $m = 1$ or $2$, the average $p$-value is higher in downtown compared to the suburban area, suggesting that the hypothesis $\mathcal{H}_0$ is more likely to be rejected in the suburban area than in downtown when $m = 1$ or $2$. The heatmap for Porto in Fig. \ref{fig:regionvspvaluePorto}(a)-(e) reveals different conclusions. When $m = 1$, the average $p$-value is lower in downtown compared to the suburban area, and when $m = 2$, all $p$-value turn relatively high. The varying $p$-values across regions reflect differences in vehicle mobility patterns. For example, vehicles may move more slowly in downtown areas (potentially leading to higher $p$-values) and faster in suburban areas (potentially resulting in lower $p$-values).



\begin{figure}[h]
\centering
\begin{minipage}{0.50\textwidth}
\centering
  \subfigure[Rome, Italy]{
\includegraphics[width=0.48\textwidth, height = 0.12\textheight]{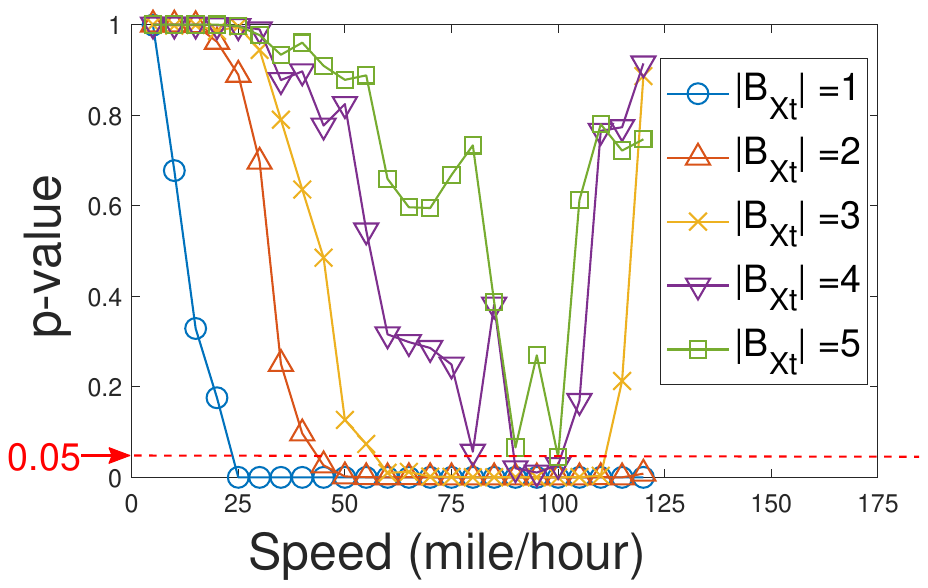}}
  \subfigure[Porto, Portugal]{
\includegraphics[width=0.48\textwidth, height = 0.12\textheight]{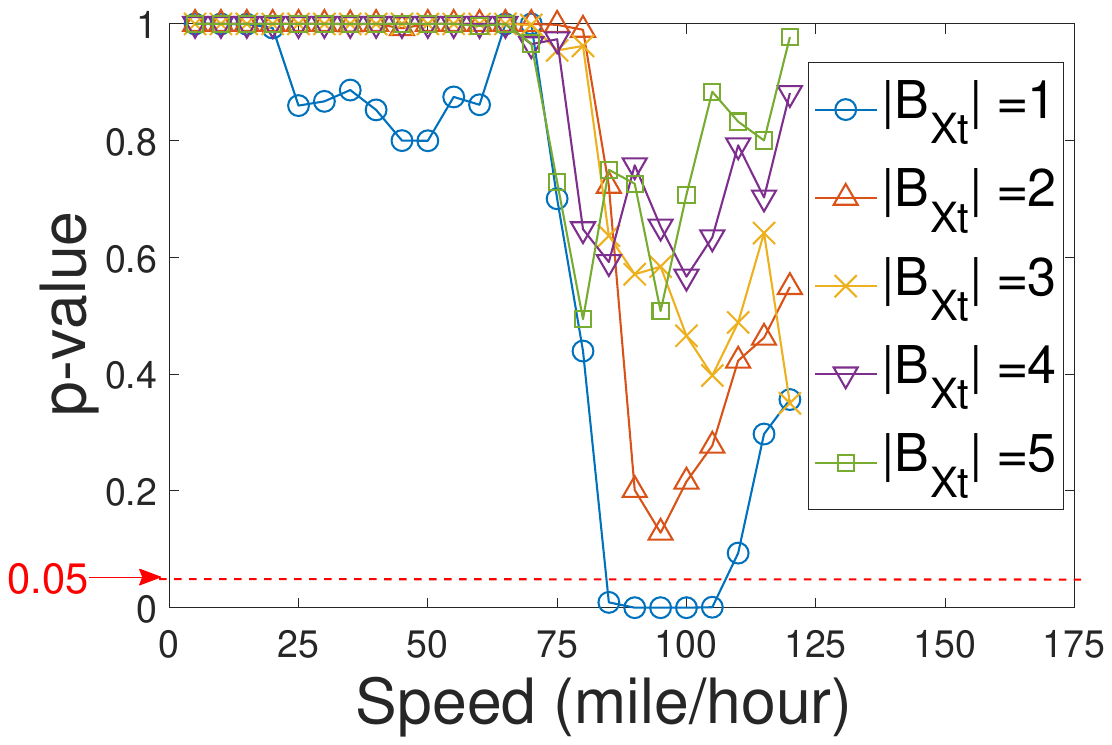}}
\label{}
\end{minipage}
\vspace{-0.15in}
\caption{Relationship between speeds and $p$-values. 
\newline $p$-values are returned by the null hypothesis $\mathcal{H}_0$ given different speed ranges.}
\label{fig:pvaluevsspeed}
\vspace{-0.05in}
\end{figure}

\begin{figure*}[h]
\centering
\begin{minipage}{1.0\textwidth}
\centering
\vspace{-0.0in}
  \subfigure[\small $|B_{X_t}| = 1$ and $m = 1$]{
\includegraphics[width=0.19\textwidth, height = 0.08\textheight]{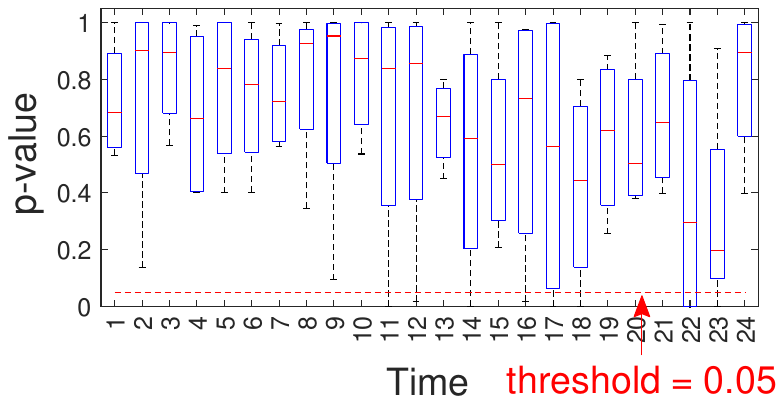}}
\vspace{-0.0in}
  \subfigure[\small $|B_{X_t}| = 2$ and $m = 2$]{
\includegraphics[width=0.19\textwidth, height = 0.08\textheight]{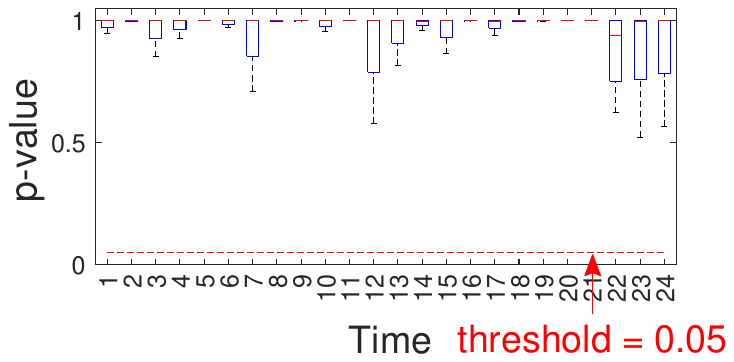}}
\vspace{-0.0in}
  \subfigure[\small $|B_{X_t}| = 3$ and $m = 3$]{
\includegraphics[width=0.19\textwidth, height = 0.08\textheight]{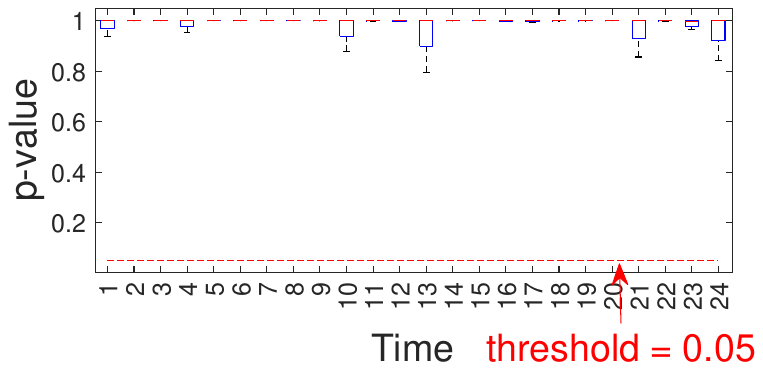}}
\vspace{-0.0in}
  \subfigure[\small $|B_{X_t}| = 4$ and $m = 4$]{
\includegraphics[width=0.19\textwidth, height = 0.08\textheight]{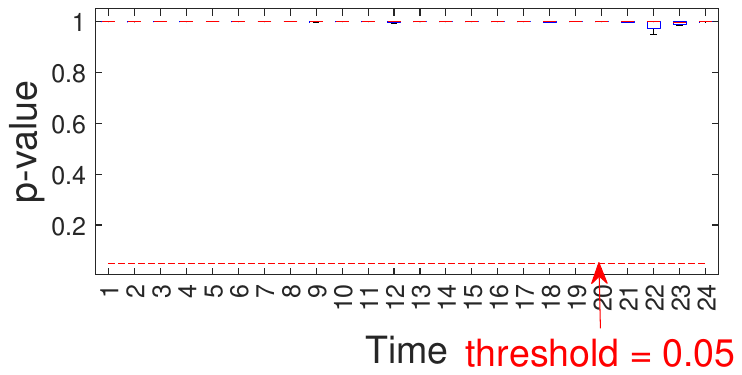}}
\vspace{-0.0in}
  \subfigure[\small $|B_{X_t}| = 5$ and $m = 5$]{
\includegraphics[width=0.19\textwidth, height = 0.08\textheight]{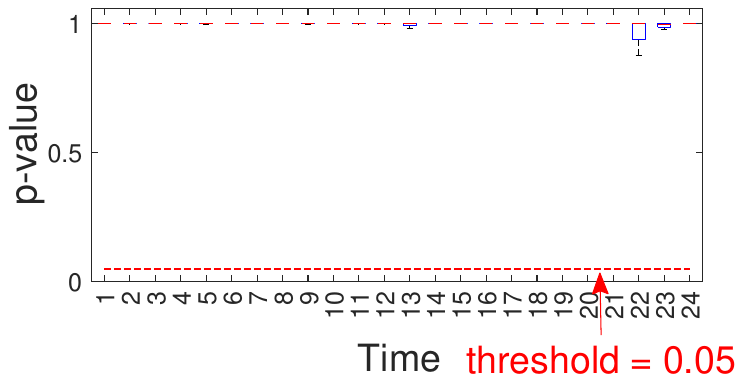}}
\end{minipage}
\vspace{-0.1in}
\caption{Relationship between time and $p$-values in Rome, Italy (returned by testing $\mathcal{H}_0$ in different intervals). }
\label{fig:pvaluevstimeRome}
\vspace{-0.05in}
\end{figure*}

\begin{figure*}[h]
\centering
\begin{minipage}{1.0\textwidth}
\centering
\vspace{-0.0in}
  \subfigure[\small $|B_{X_t}| = 1$ and $m = 1$]{
\includegraphics[width=0.19\textwidth, height = 0.08\textheight]{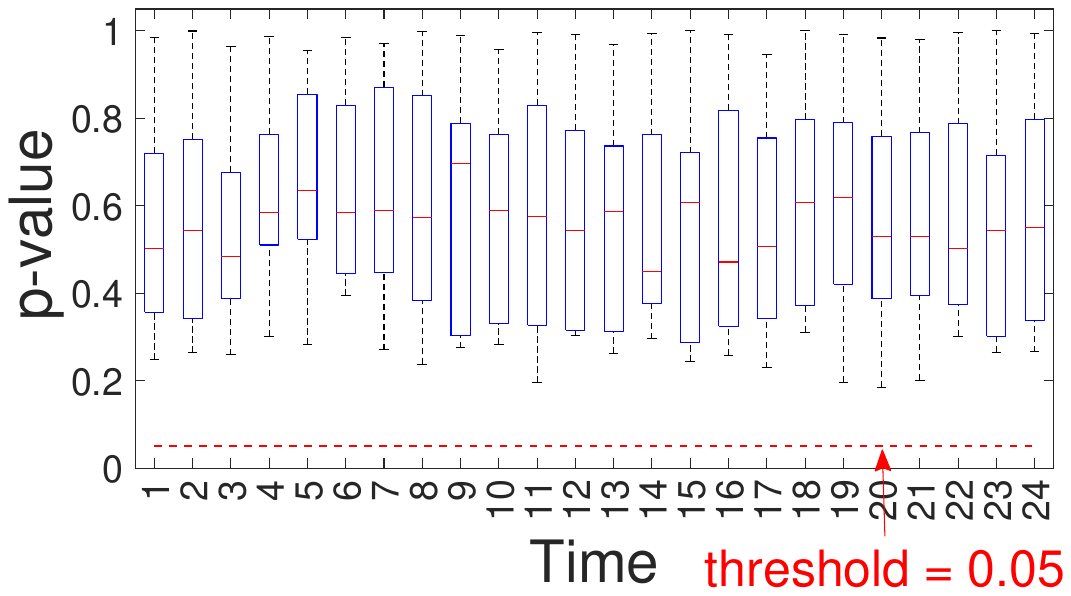}}
\vspace{-0.00in}
  \subfigure[\small $|B_{X_t}| = 2$ and $m = 2$]{
\includegraphics[width=0.19\textwidth, height = 0.08\textheight]{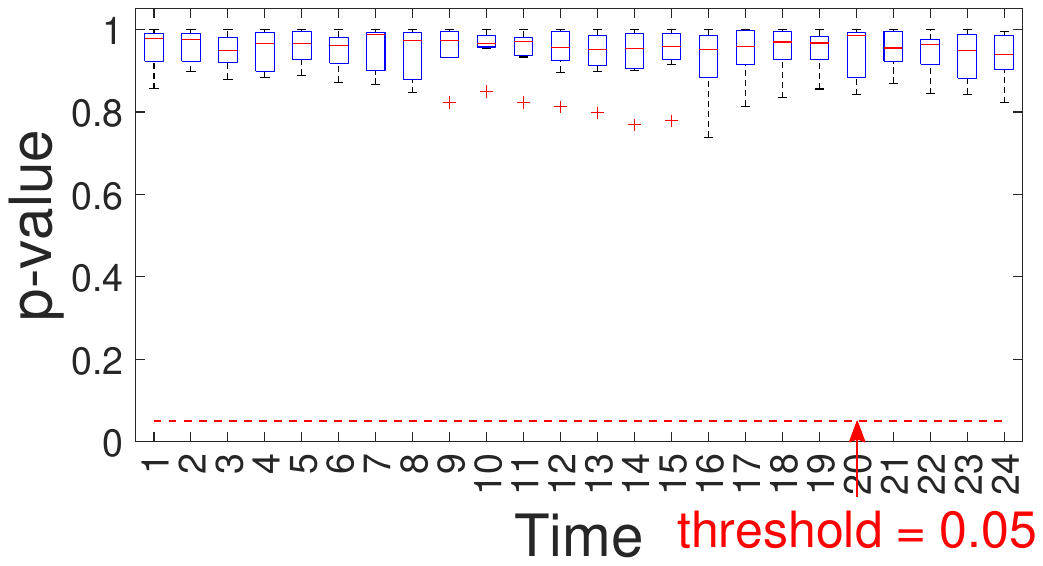}}
\vspace{-0.00in}
  \subfigure[\small $|B_{X_t}| = 3$ and $m = 3$]{
\includegraphics[width=0.19\textwidth, height = 0.08\textheight]{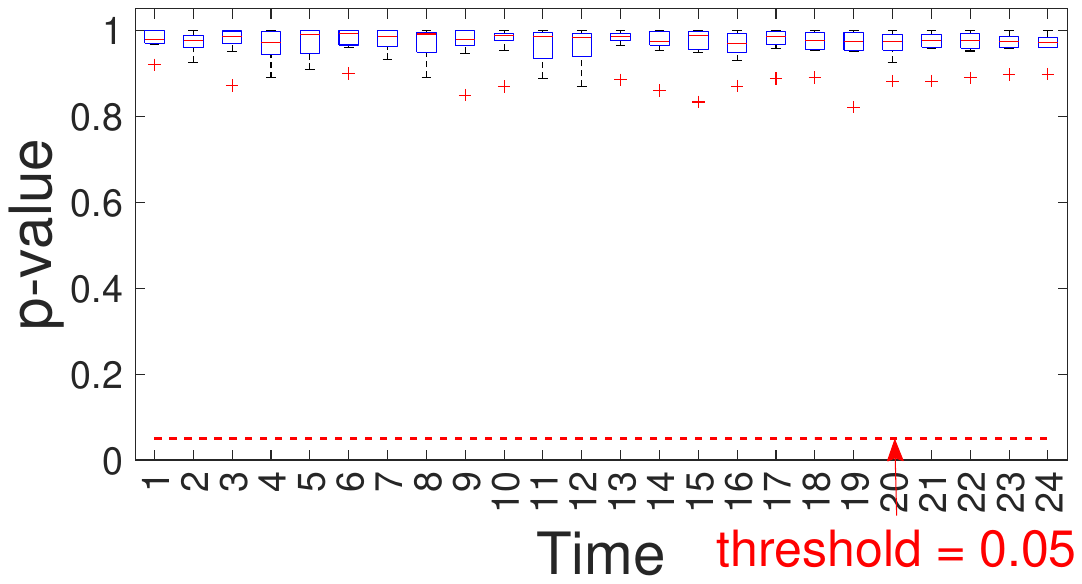}}
\vspace{-0.00in}
  \subfigure[\small $|B_{X_t}| = 4$ and $m = 4$]{
\includegraphics[width=0.19\textwidth, height = 0.08\textheight]{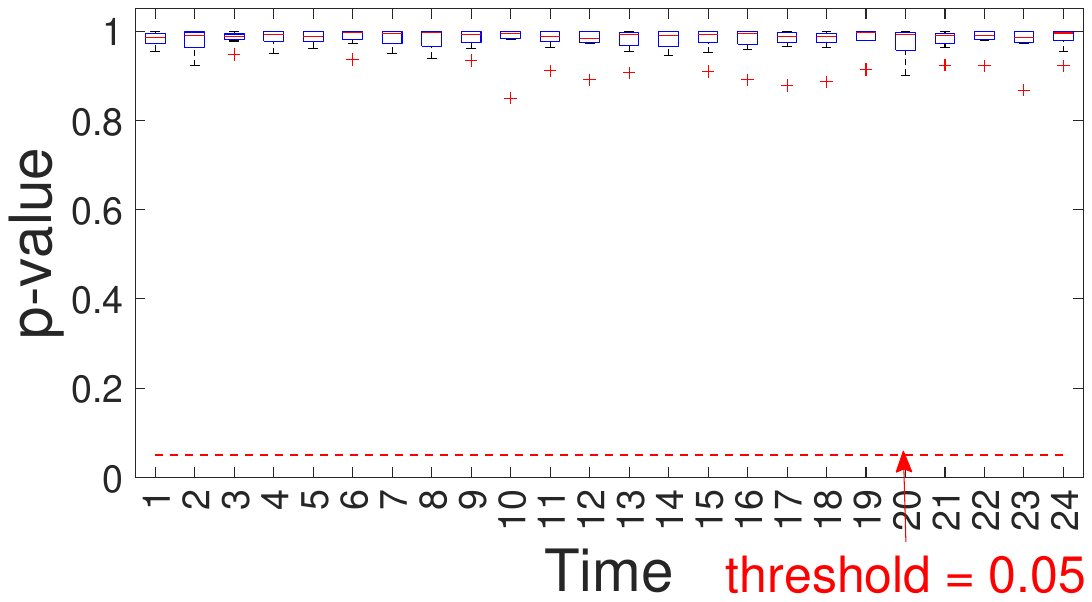}}
\vspace{-0.00in}
  \subfigure[\small $|B_{X_t}| = 5$ and $m = 5$]{
\includegraphics[width=0.19\textwidth, height = 0.08\textheight]{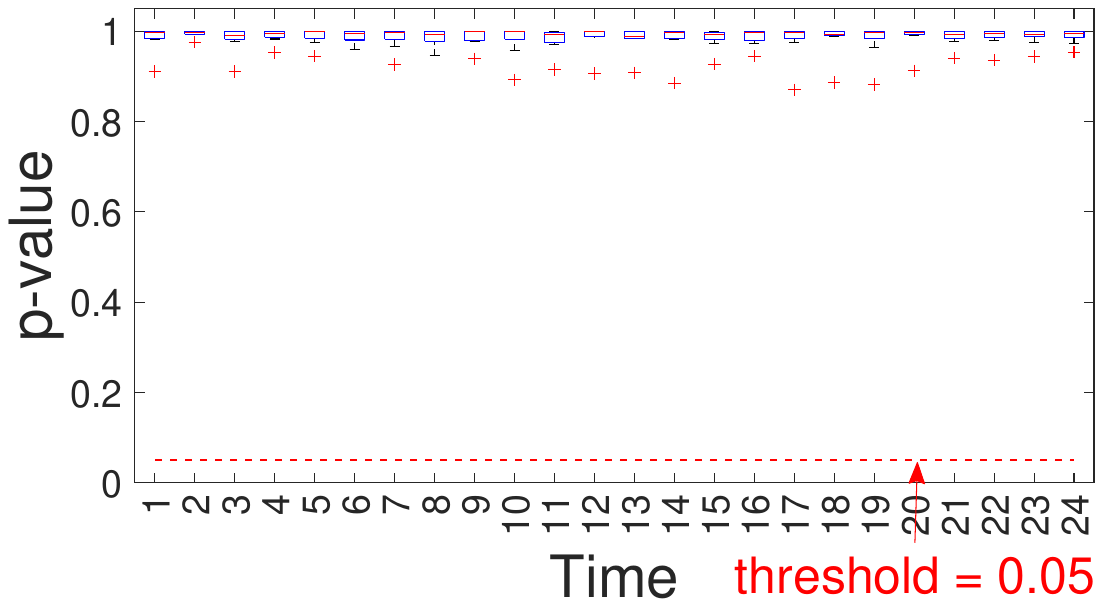}}
\end{minipage}
\vspace{-0.1in}
\caption{Relationship between time and $p$-values in Porto, Portugal (returned by testing $\mathcal{H}_0$ in different intervals). }
\label{fig:pvaluevstimePorto}
\vspace{-0.05in}
\end{figure*}

\vspace{0.03in}
\noindent \textbf{(2) $p$-value vs. speeds}. 
We calculate the average speed of each \emph{examined sub-trajectory} $\{X_t, \hat{B}_{X_t}\}$, based on which we divide all the sub-trajectories into groups of 5 mph intervals across the entire speed range from 0 to 120 mph. We test the null hypothesis $\mathcal{H}_0$ for the trajectories in different speed intervals and display the corresponding $p$-values in the two cities in Fig. \ref{fig:pvaluevsspeed}(a)(b), respectively, given $|B_{X_t}| = 1, 2, 3, 4, 5$. Results from both cities reveal a similar trend: within a certain range, speed is negatively correlated with the $p$-value, but beyond this range, the correlation becomes positive, forming a U-shaped pattern. In other words, at very low and very high speeds, we fail to reject the null hypothesis, indicating a high correlation between trajectory location points. Conversely, at intermediate speed levels, we reject the null hypothesis and accept the alternative hypothesis, suggesting lower correlation between trajectory location points. Specifically, in the Rome dataset, $p$-values reach their minimum within the speed range of 75 to 100, before rising again, while in the Porto dataset, this minimum occurs at a higher speed range, highlighting city-specific differences.


\vspace{0.03in}
\noindent \textbf{(3) $p$-value vs. time intervals}. 
According to the 24-hour time format, we group the trajectories as described earlier. We test the null hypothesis $\mathcal{H}_0$ for the trajectories in different time intervals and display the corresponding $p$-values when $|B_{X_t}| = 1, 2, 3, 4, 5$ in Fig. \ref{fig:pvaluevstimeRome}(a)-(e) and Fig. \ref{fig:pvaluevstimePorto}(a)-(e), respectively. We observe that when $|B_{X_t}| = 3, 4, 5$, all the $p$-values are close to 1 in both cities. Conversely, when $|B_{X_t}| = 1, 2$, the $p$-values for Rome are relatively lower during the intervals 11-14 (11 am -- 2 pm ), 17-18 (5 pm -- 6 pm) and 22-23 (10 pm -- 11 pm). This observation aligns with the findings in Fig. \ref{fig:pvaluevsspeed}, where during these time intervals, vehicle traffic is relatively lower, indicating higher average vehicle speeds. Consistent with Fig. \ref{fig:pvaluevsspeed}, $p$-values decrease as speed increases up to an intermediate range (approximately 75–100 mph) and then rise again. 


\noindent\textbf{Conclusion.} From all the figures above, we find \textbf{\emph{the results of CI testing (the $p$-values) exhibit variations across different regions, time intervals, and speed intervals}}. This observation challenges the assumption made by many related works, which posit that targets' mobility follows a first-order Markov process \cite{Qiu-SIGSPATIAL2022, Cao-ICDE2017}. 

Moreover, we observe that when more locations are encompassed in $B_{X_t}$, there is a reduced likelihood of rejecting the null hypothesis $\mathcal{H}_0$. This observation supports the design of MBI  in Algorithm \ref{al:MBD}: If we fail to reject $\mathcal{H}_0$, we consider that adding any additional locations to $B_{X_t}$ would further decrease the likelihood of rejecting $\mathcal{H}_0$. As a result, we consider that $B_{X_t}$ has been successfully identified and label $\mathcal{H}_0: X_t  \perp \!\!\!\perp X_{t-m-1} | B$ with $B\supset B_{X_t}$ as ``fail to reject''. \looseness = -1

\DEL{
\vspace{-0.00in}
Next, we provide detailed empirical analysis of the correlation between \emph{regions, speed, and time}, and the $p$-value of the CI test.

\vspace{0.03in}
\noindent \textbf{(1) $p$-value vs. regions}. As Fig. \ref{fig:regionvspvalueRome} and Fig. \ref{fig:regionvspvaluePorto} show, we define the approximated boundary of Rome and Porto using two bounding boxes (as described earlier). Due to insufficient data in two regions of the Rome dataset, we excluded these two regions from our analysis. We test the null hypothesis $\mathcal{H}_0: X_t  \perp \!\!\!\perp X_{t-m-1} | B_{X_t}$ for the trajectories using the CI test. In Fig. \ref{fig:regionvspvalueRome}(a)(b)(c)(d)(e), we display the heatmap of the $p$-values returned by the CI test across the 18 regions in Rome when $|B_{X_t}| = 1, 2, 3, 4, 5$, respectively. The figures show that when $m = 1$ or $2$, the average $p$-value is higher in downtown compared to the suburban area, suggesting that the hypothesis $\mathcal{H}_0$ is more likely to be rejected in the suburban area than in downtown when $m = 1$ or $2$. The heatmap for Porto in Fig. \ref{fig:regionvspvaluePorto}(a)(b)(c)(d)(e) reveals different conclusions. When $m = 1$, the average $p$-value is lower in downtown compared to the suburban area, and when $m = 2$, all $p$-value turn relatively high. 


\vspace{0.03in}
\noindent \textbf{(2) $p$-value vs. speeds}. 
We calculate the average speed of each \emph{examined sub-trajectory} $\{X_t, \hat{B}_{X_t}\}$, based on which we divide all the sub-trajectories into groups of 5 mph intervals across the entire speed range from 0 to 120 mph. We test the null hypothesis $\mathcal{H}_0$ for the trajectories in different speed intervals and display the corresponding $p$-values in the two cities in Fig. \ref{fig:pvaluevsspeed}(a)(b), respectively, given $|B_{X_t}| = 1, 2, 3, 4, 5$. Results from both cities reveal a similar trend: within a certain range, speed is negatively correlated with the $p$-value, but beyond this range, the correlation becomes positive, forming a U-shaped pattern. In other words, at very low and very high speeds, we fail to reject the null hypothesis, indicating a high correlation between trajectory location points. Conversely, at intermediate speed levels, we reject the null hypothesis and accept the alternative hypothesis, suggesting lower correlation between trajectory location points. Specifically, in the Rome dataset, $p$-values reach their minimum within the speed range of 75 to 100, before rising again, while in the Porto dataset, this minimum occurs at a higher speed range, highlighting city-specific differences.

\begin{figure*}[t]
\centering
\begin{minipage}{1.0\textwidth}
\centering
\vspace{-0.0in}
  \subfigure[\small $|B_{X_t}| = 1$ and $m = 1$]{
\includegraphics[width=0.19\textwidth, height = 0.088\textheight]{./fig/pvaluevstime1}}
\vspace{-0.05in}
  \subfigure[\small $|B_{X_t}| = 2$ and $m = 2$]{
\includegraphics[width=0.19\textwidth, height = 0.088\textheight]{./fig/pvaluevstime2}}
\vspace{-0.05in}
  \subfigure[\small $|B_{X_t}| = 3$ and $m = 3$]{
\includegraphics[width=0.19\textwidth, height = 0.088\textheight]{./fig/pvaluevstime3}}
\vspace{-0.05in}
  \subfigure[\small $|B_{X_t}| = 4$ and $m = 4$]{
\includegraphics[width=0.19\textwidth, height = 0.088\textheight]{./fig/pvaluevstime4}}
\vspace{-0.05in}
  \subfigure[\small $|B_{X_t}| = 5$ and $m = 5$]{
\includegraphics[width=0.19\textwidth, height = 0.088\textheight]{./fig/pvaluevstime5}}
\end{minipage}
\vspace{-0.00in}
\caption{Relationship between time and $p$-values in Rome, Italy (returned by testing $\mathcal{H}_0$ in different intervals). }
\label{fig:pvaluevstimeRome}
\vspace{-0.12in}
\end{figure*}

\begin{figure*}[t]
\centering
\begin{minipage}{1.0\textwidth}
\centering
\vspace{-0.0in}
  \subfigure[\small $|B_{X_t}| = 1$ and $m = 1$]{
\includegraphics[width=0.19\textwidth, height = 0.088\textheight]{./fig/pvaluevstime1_Porto}}
\vspace{-0.05in}
  \subfigure[\small $|B_{X_t}| = 2$ and $m = 2$]{
\includegraphics[width=0.19\textwidth, height = 0.088\textheight]{./fig/pvaluevstime2_Porto}}
\vspace{-0.05in}
  \subfigure[\small $|B_{X_t}| = 3$ and $m = 3$]{
\includegraphics[width=0.19\textwidth, height = 0.088\textheight]{./fig/pvaluevstime3_Porto}}
\vspace{-0.05in}
  \subfigure[\small $|B_{X_t}| = 4$ and $m = 4$]{
\includegraphics[width=0.19\textwidth, height = 0.088\textheight]{./fig/pvaluevstime4_Porto}}
\vspace{-0.05in}
  \subfigure[\small $|B_{X_t}| = 5$ and $m = 5$]{
\includegraphics[width=0.19\textwidth, height = 0.088\textheight]{./fig/pvaluevstime5_Porto}}
\end{minipage}
\vspace{-0.00in}
\caption{Relationship between time and $p$-values in Porto, Portugal (returned by testing $\mathcal{H}_0$ in different intervals). }
\label{fig:pvaluevstimePorto}
\vspace{-0.10in}
\end{figure*}


\vspace{0.03in}
\noindent \textbf{(3) $p$-value vs. time intervals}. 
According to the 24-hour time format, we categorize the trajectories into 24 groups [12 am--1 am), [1 am--2 am), ..., [11 pm--12 am) on the UTC +0 timestamps provided in the initial datasets. 
We test the null hypothesis $\mathcal{H}_0$ for the trajectories in different time intervals and display the corresponding $p$-values when $|B_{X_t}| = 1, 2, 3, 4, 5$ in Fig. \ref{fig:pvaluevstimeRome}(a)-(e) and Fig. \ref{fig:pvaluevstimePorto}(a)-(e), respectively. We observe that when $|B_{X_t}| = 3, 4, 5$, all the $p$-values are close to 1 in both cities. Conversely, when $|B_{X_t}| = 1, 2$, the $p$-values for Rome are relatively lower during the intervals 11-14 (11 am -- 2 pm ), 17-18 (5 pm -- 6 pm) and 22-23 (10 pm -- 11 pm). This observation aligns with the findings in Fig. \ref{fig:pvaluevsspeed}, where during these time intervals, vehicle traffic is relatively lower, indicating higher average vehicle speeds. Consequently, lower speeds, especially below 100 miles/hour, result in lower $p$ values. In contrast, the results for Porto appear more stable. 


From all the figures above, we find \textbf{the results of CI testing (the $p$-values) exhibit variations across different regions, time intervals, and speed intervals}. This observation challenges the assumption made by many related works, which posit that targets' mobility follows a first-order Markov process \cite{Qiu-SIGSPATIAL2022, Cao-ICDE2017}. 

Moreover, we observe that when more locations are encompassed in $B_{X_t}$, there is a reduced likelihood of rejecting the null hypothesis $\mathcal{H}_0$. This observation supports the design of MBI  in Algorithm \ref{al:MBD}: If we fail to reject $\mathcal{H}_0$, we consider that adding any additional locations to $B_{X_t}$ would further decrease the likelihood of rejecting $\mathcal{H}_0$. As a result, we consider that $B_{X_t}$ has been successfully identified and label $\mathcal{H}_0: X_t  \perp \!\!\!\perp X_{t-m-1} | B$ with $B\supset B_{X_t}$ as ``fail to reject''. \looseness = -1
}


\vspace{-0.00in}
\subsection{MBI Accuracy Evaluation} 
\label{subsec:MBIperformance}
\vspace{-0.00in}

\subsubsection{Experiment settings} 
\label{subsubsec:settings}
We implement the MBI predictor in PyTorch~\cite{PyTorch} as a three-layer fully connected binary classifier (Fig.~\ref{fig:MBIframework}(b)) with ReLU activations and a sigmoid output.
We split the dataset into training/validation/testing sets with a ratio of 60\%, 20\%, 20\%. The model is pretrained offline and only used for fast Markov-blanket prediction at runtime. As shown in Fig.~\ref{fig:MBItime}, the average inference time is 0.1557\,ms (Rome) and 0.1588\,ms (Porto), and over 98\% of predictions finish within 2\,ms. Additional training details are provided in Appendix~\ref{sec:additional_detail}.

{\revisiondone In our framework, we train the MBI module using the full city-scale trajectory corpus. We report utility/privacy of the proposed perturbation mechanism on a subset of about $500$ trajectories; this evaluation subset is sampled from the same corpus and therefore may overlap with the training data. The full city-scale trajectories are used to learn population-level regularities (e.g., how dependency strength and Markov-blanket size vary with coarse features), while our formal privacy budget $\epsilon$ is defined for the online release $Y=Q(X,V_X)$ (details in Appendix~\ref{sec:additional_detail}).}


\vspace{-0.02in}
\subsubsection{Experimental results} We display the performance of Markov Blanket Prediction in Table \ref{Tb:exp:accuracy}. The table shows that the prediction accuracy, defined as the proportion of instances that are accurately predicted, is 0.9512 and 0.9413 for Rome and Porto, respectively. Due to the imbalance between positive and negative samples, in order to capture the imbalanced binary classification of ``reject'' and ``fail to reject'', we also test the \emph{precision}, \emph{recall}, and \emph{F1 score} of the predictive model. In terms of positive F measures, the \emph{precision} and \emph{recall} for Rome are 0.9580 and 0.9868, for Porto are 0.9710 and 0.9630, suggesting that both models excel at identifying positive cases (``fail to reject'') with minimal false positives and negatives. The \emph{F1 scores}, at 0.9722 and 0.9670, confirm the well-balanced trade-off between precision and recall, further demonstrating that the models effectively handle both errors. The high \emph{recall} scores are particularly notable, indicating that the models are adept at identifying nearly all relevant positive instances, reducing the risk of missed detections.


\begin{figure}[t]
\centering
\begin{minipage}{0.240\textwidth}
\centering
\includegraphics[width=1.00\textwidth, height = 0.13\textheight]{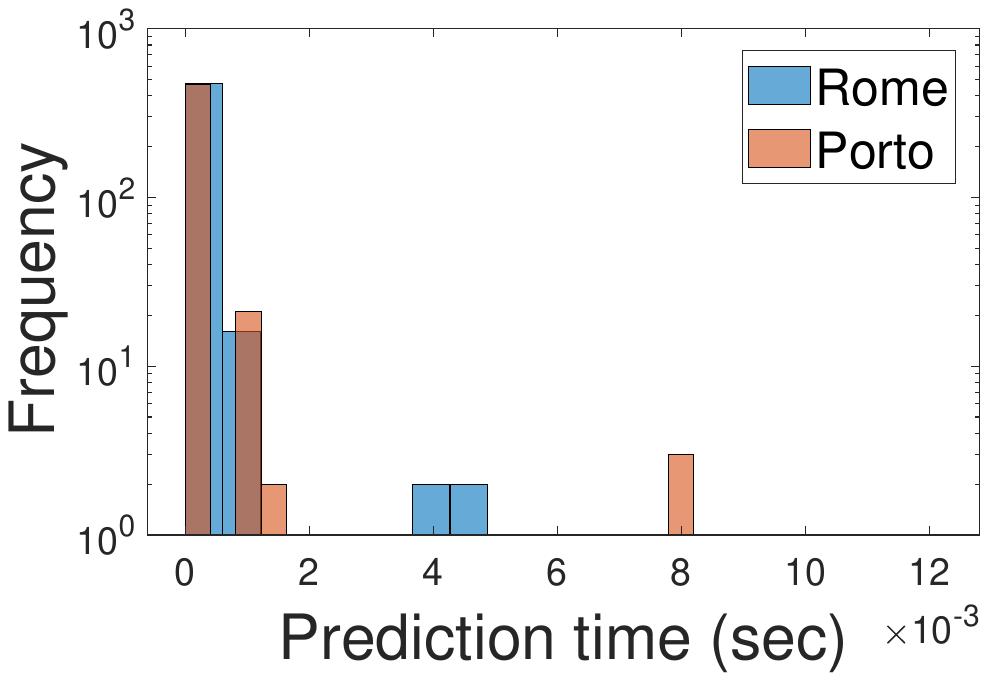}
\vspace{-0.18in}
\caption{Markov blanket prediction time distribution.}
\label{fig:MBItime}
\end{minipage}
\vspace{-0.05in}
\hspace{0.05in}
\begin{minipage}{0.203\textwidth}
\centering
  \subfigure{
\includegraphics[width=1.00\textwidth, height = 0.13\textheight]{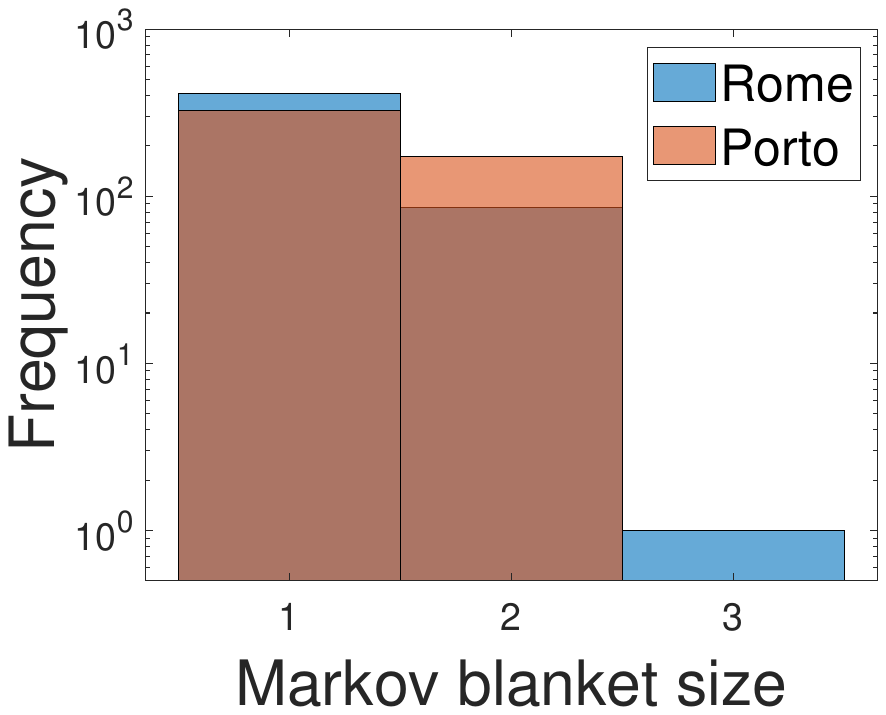}}
\vspace{-0.18in}
\caption{Markov blanket size distribution.}
\label{fig:MBI_size}
\end{minipage}
\vspace{-0.05in}
\end{figure}

Additionally, we calculate the \emph{negative precision}, \emph{negative recall}, and \emph{negative F1 score} by treating ``reject'' as a ``negative instance''. These metrics specifically assess the model's performance in predicting ``reject''. We can see that the models' performance declines when dealing with negative samples. For Rome, the table displays the \emph{negative precision} and \emph{negative recall} as 0.8972 and 0.7273. While the \emph{negative precision} remains relatively high, the \emph{negative recall} is notably lower, which suggests that the model is less effective in correctly identifying negative classes compared to its performance in the positive class, likely a result of the class imbalance within the dataset, where “fail to reject” dominates. For Porto, the \emph{negative precision} drops to 0.7123, significantly lower than Rome's 0.8972, indicating a higher rate of false negatives in predicting “reject” cases. However, \emph{negative recall} improves, reflecting a better capability in detecting true negatives compared to Rome. The \emph{negative F1 score} for Rome is 0.8033, which suggests that while the model maintains a reasonable balance, there remains some room for improvement. The \emph{negative F1 score} for Porto is slightly lower than Rome's, suggesting that while Porto's model is better at identifying negatives, the increased false negatives hinder its overall performance.

\vspace{-0.02in}
\begin{table}[h]
\caption{Prediction performance of DNN.}
\vspace{-0.02in}
\label{Tb:exp:accuracy}
\centering
\small 
\begin{tabular}{ c|c|c|c}
\toprule
City &\multicolumn{2}{ c|  }{Metrics} & Value \\
\hline
\hline 
& \multicolumn{2}{ |c|  }{BCE loss} & 0.1211 \\
\cline{2-4}
& \multicolumn{2}{ c|  }{ Prediction accuracy} & 0.9512  \\
\cline{2-4}
& \multicolumn{1}{ c|  }{ } & Precision & 0.9580
 \\ 
Rome, Italy & \multicolumn{1}{ c|  }{ } & Recall & 0.9868
 \\ 
& \multicolumn{1}{ c|  }{F measures} & F1 score & 0.9722
\\ 
\cline{3-4}
& \multicolumn{1}{ c|  }{ } & Negative precision & 0.8972
\\ 
& \multicolumn{1}{ c|  }{ } & Negative recall & 0.7273 \\ 
& \multicolumn{1}{ c|  }{} & Negative F1 score & 0.8033
\\ 
\hline
\hline
& \multicolumn{2}{ c|  }{BCE loss} & 0.1321 \\
\cline{2-4}
& \multicolumn{2}{ c|  }{ Prediction accuracy} & 0.9413  \\
\cline{2-4}
& \multicolumn{1}{ c|  }{ } & Precision & 0.9710
 \\ 
Porto, Portugal & \multicolumn{1}{ c|  }{ } & Recall & 0.9630
 \\ 
& \multicolumn{1}{ c|  }{F measures} & F1 score & 0.9670
\\ 
\cline{3-4}
& \multicolumn{1}{ c|  }{ } & Negative precision & 0.7123
\\ 
& \multicolumn{1}{ c|  }{ } & Negative recall & 0.7610 \\ 
& \multicolumn{1}{ c|  }{} & Negative F1 score & 0.7358
\\ 
\hline
\end{tabular}
\vspace{-0.05in}
\end{table}


We observe that the model's performance differs slightly between Rome and Porto, with Rome achieving higher overall prediction accuracy and a better balance in handling both positive and negative cases. Porto’s model, while more precise in predicting positive instances, struggles with higher false negatives when it comes to negative classification. 

Recall that the number of decision variables in the LP problem (Eq. (28)-(31)) is $O(|\mathcal{X}||\mathcal{Y}||\mathcal{B}|)$, where $|\mathcal{X}|$, $|\mathcal{Y}|$, and $|\mathcal{B}|$ respectively denote the size of the secret dataset, the size of perturbed dataset, and the number of possible values of the random variables within the Markov blanket. Fig. \ref{fig:MBI_size} shows the distributions of the predicted Markov blanket sizes in both cities. From the figure, we observe that 99.8\% and 100.0\% of the predicted Markov blanket sizes are no larger than 2 in Rome and Porto respectively. This indicates that the distribution of most locations primarily depends on their previous two locations. Hence, the number of possible Markov blanket states remains constant as the road network expands, as the number of neighboring locations within a road network is generally limited and does not scale with the network size.


\vspace{-0.00in}
\subsection{Utility Loss Evaluation} 
\vspace{-0.00in}
\label{subsec:UL}


\subsubsection{Utility loss measure} 
\label{subsubsec:ULmeasure} 
Given the task location $x_{\mathrm{task}}$ and the real (next) location $x_\ell$, the utility loss caused by a perturbed location $y_k$ is given by $\left|path(x_\ell,x_{\mathrm{task}}) - path(y_k,x_{\mathrm{task}})\right|$, where $path(x_\ell,x_{\mathrm{task}})$ (resp.  $path(y_k,x_{\mathrm{task}})$) represents the path distance from $x_\ell$ (resp.  $y_k$) to $x_{\mathrm{task}}$. The data utility loss $c_{(x_i,\mathbf{b}),y_k}$ is calculated by 
\vspace{-0.05in}
\begin{equation}
\label{eq:UL}
\small c_{(x_i,\mathbf{b}),y_k} =\sum_{x_{\mathrm{task}} \in \mathcal{V}}p_{x_{\mathrm{task}}} \cdot \sum_{x_\ell\in \mathcal{X}}p_{(x_\ell,\mathbf{b})}\cdot \left|path(x_\ell,x_{\mathrm{task}}) - path(y_k,x_{\mathrm{task}})\right|.
\end{equation}
where $p_{x_{\mathrm{task}}}$ is the prior probability of the task being at location $x_{\mathrm{task}}$ and $p_{(x_\ell,\mathbf{b})}$ is the probability of real location being at $x_\ell$ given the current location $x_i$ and the context information $\mathbf{b}$. Here, when computing the path distance a vehicle travels to reach its destination, we approximate both the vehicle's location and the destination location as ``nodes'' within the road network, where $\mathcal{V}$ is the node set. Then, we employ the Dijkstra algorithm \cite{Algorithm} to determine the shortest path distance from the vehicle's location to the destination node.

\vspace{0.03in}
\noindent \emph{Time complexity of utility loss measure:} Given a task with the location $x_{\mathrm{task}}$, we first build the shortest path tree rooted at $x_{\mathrm{task}}$ using Dijkstra's algorithm \cite{Algorithm}, of which the time complexity is $O(|\mathcal{V}|+|\mathcal{E}|)$, where $\mathcal{V}$ and $\mathcal{E}$
are the node (location) set and edge set of the road network of the target city, respectively. Since there are $|\mathcal{V}|$ possible task locations, the time complexity of building the shortest path trees for possible task locations is $O(|\mathcal{V}|^2+|\mathcal{E}||\mathcal{V}|)$. The shortest path tree can be used to calculate $path(x_i,x_{\mathrm{task}})$ and  $path(y_k,x_{\mathrm{task}})$ for all pairs $(x_i, y_k)$, with $O(|\mathcal{X}||\mathcal{V}|)$ ($|\mathcal{X}| \leq |\mathcal{V}|$) subtraction operations to calculate $c_{(x_i,\mathbf{b}),y_k}$ defined in Eq. (\ref{eq:UL}). Considering that there are $|\mathcal{X}|$ different possible $x_i$, $|\mathcal{Y}|$ different $y_k$, and $|\mathcal{B}|$ different Markov blanket $\mathbf{b}$, the time complexity of calculating all $c_{(x_i,\mathbf{b}),y_k}$ is $O(|\mathcal{B}||\mathcal{X}|^2|\mathcal{Y}||\mathcal{V}|+|\mathcal{V}|^2+|\mathcal{E}||\mathcal{V}|)$.


\DEL{
\vspace{0.02in}
\noindent {\revision \textbf{Case II: The error rate of finding the nearest vehicle.} In this case, we evaluate the error rate when identifying the nearest vehicle to a given task location $x_{\mathrm{task}}$ among a group of candidate vehicles (50 vehicles) randomly deployed within the target region. Given a vehicle, we perturbed its location and compute the utility loss of its perturbed location, $y_k$, relative to its actual location $x_i$ and context $\mathbf{v}$ as follows: If the perturbed location $y_k$ results in an incorrect outcome—either identifying the vehicle as the nearest when it is not, or failing to identify it as the nearest when it is—we classify it as an error. Formally, 
\vspace{-0.03in}
\begin{equation}
\nonumber 
I^{x_{\mathrm{task}}}_{x_\ell, y_k} = \left\{\begin{array}{ll} 1 & \mbox{if $path(x_\ell,x_{\mathrm{task}})$ has the shortest distance} \\ 
& \mbox{but $path(y_k,x_{\mathrm{task}})$ does not; } \\
1 & \mbox{if $path(x_\ell,x_{\mathrm{task}})$ doesn't have the shortest distance} \\ 
& \mbox{but $path(y_k,x_{\mathrm{task}})$ does;} \\
0 & \mbox{otherwise.} \end{array}\right.
\vspace{-0.00in}
\end{equation}
Then, the data utility loss is calculated by 
\begin{equation}
\vspace{-0.05in}
\label{eq:UL2}
\textstyle \overline{c}_{(x_i,\mathbf{b}),y_k} =\sum_{x_{\mathrm{task}} \in \mathcal{V}}p_{x_{\mathrm{task}}}\sum_{x_\ell\in \mathcal{X}}p_{x_\ell|(x_i,\mathbf{b})}I^{x_{\mathrm{task}}}_{x_\ell, y_k}.
\end{equation}
Considering that vehicles are deployed randomly, resulting different $I^{x_{\mathrm{task}}}_{(x_i, \mathbf{b}), y_k}$ each time. Therefore, we repeat deploying vehicles 10,000 times and calculate $I^{x_{\mathrm{task}}}_{x_\ell, y_k}$ for each trial. The average error over 10,000 trials is then calculated to quantify the utility loss caused by the perturbed location $y_k$ compared to the real location $x_i$:
$c_{(x_i,\mathbf{b}),y_k} =\frac{\sum_{\tau=1}^{10000} \overline{c}^\tau_{(x_i,\mathbf{b}),y_k}}{10000}$, where $\overline{c}^\tau_{(x_i,\mathbf{b}),y_k}$ represents the utility loss estimated in the $\tau$th trial.}

\vspace{0.03in} 
{\revision \noindent \emph{Time complexity of utility loss measure:} Similar to Case I, we first build the shortest path tree rooted at each task location $x_{\mathrm{task}}$. We then use the shortest path tree to find the nearest real vehicle location (which takes 50 comparisons, or $O(1)$ time complexity), and check whether $y_k$ results an error (which takes 1 comparison with the nearest vehicle), leading to totally $O(1)$ operations to calculate $I^{x_{\mathrm{task}}}_{x_\ell, y_k}$ and $O(|\mathcal{X}||\mathcal{V}|)$ to derive $\overline{c}_{(x_i,\mathbf{b}),y_k}$ (repeated for $T$ times ,where $T = 10,000$ in our experiment). 
Since there are $|\mathcal{X}|$ different possible $x_i$, $|\mathcal{Y}|$ different $y_k$, and $|\mathcal{B}|$ different Markov blanket $\mathbf{b}$, the total time complexity is $O(T|\mathcal{B}||\mathcal{X}|^2|\mathcal{Y}||\mathcal{V}|+|\mathcal{V}|^2+|\mathcal{E}||\mathcal{V}|)$.}}

\vspace{-0.02in}
\subsubsection{Compared methods} 
\label{subsubsec:comparedmethods}
In each dataset, we randomly sampled 500 trajectories, selecting one location within each trajectory to represent the target vehicles' positions. For each vehicle, we placed 1 destination within the target region. We evaluate our method ``\emph{LP+C-mDP}'' 
by comparing it with the following benchmarks, which are all based on mDP: 
\newline (1) ``\emph{LP}'' \cite{Qiu-TMC2020}, which optimizes mDP using LP framework with the consideration of vehicles' mobility in the road network.  Notably, other works such as \cite{Pappachan-EDBT2023, Qiu-EDBT2024} also fall into this category, as they both use an LP framework to optimize location perturbation. However, they differ in their approaches—\cite{Pappachan-EDBT2023} represents locations hierarchically, while \cite{Qiu-EDBT2024} incorporates quality constraints into the optimization process.
\newline (2) ``\emph{ExpMech}'' \cite{McSherry-FOCS2007}, where the perturbation probability of each secret record follows an exponential distribution. 
\newline (3) ``\emph{ConstOPTMech}'' or ``\emph{ConstOPT}'' \cite{ImolaUAI2022}, which integrates the exponential mechanism into the LP framework. 
{\revisiondone \newline (4) \emph{``LP+TrueMB''} follows our LP+C-mDP framework but replaces the Markov blanket predicted by the learning-based MBI module with the \emph{true} Markov blanket obtained via CI testing on the trajectory data. We use this benchmark as an oracle reference to quantify how blanket-prediction errors affect utility. Importantly, the MBI prediction errors do not change our privacy guarantee: C-mDP is enforced by constraints over the chosen secret domain and metric, independent of predictor accuracy, and the required distances are computed exactly from the map metric by the user/device. Consequently, prediction errors primarily impact utility (and efficiency) by selecting less/more relevant context, rather than weakening the mechanism-level privacy bound on the released output.}
\newline
(5) \emph{``LP+Markov''} removes the Markov blanket identification module (Section~\ref{sec:MBI}) and instead assumes a fixed-order Markov mobility model. Concretely, \emph{LP+Markov1} uses the first-order context $X_{t-1}$.
{\revisiondone
We do not include higher-order Markov baselines because the optimization cost grows rapidly with the model order: optimizing $q_{(x_t,x_{t-1}),y}$  already requires $\mathcal{O}(|\mathcal{X}|^2|\mathcal{Y}|)$ variables, whereas optimizing $q_{(x_t,x_{t-1},x_{t-2}),y}$ (i.e., a second-order Markov model) increases this to $\mathcal{O}(|\mathcal{X}|^3|\mathcal{Y}|)$, with a comparable blow-up in the number of constraints. In contrast, our Markov-blanket-based approach avoids committing to a fixed global order by selecting only the relevant past locations when needed.}

{\revisiondone 
Notably, the privacy budget $\epsilon$ is a \emph{shared scalar} across all guarantees, appearing in the standard likelihood-ratio bound in the form $\exp\!\big(\epsilon \cdot d(\cdot,\cdot)\big)$. However, the resulting indistinguishability is defined with respect to a particular \emph{secret domain} and its associated \emph{distance metric}. Context-free mechanisms (e.g., LP, ConstOPT, and ExpMech) enforce $\epsilon$-mDP on the single-location domain $\mathcal{X}$ using the base location metric $d_{x,x'}$. In contrast, context-aware mechanisms (including our C-mDP and the Markov-based variants) enforce the \emph{same} $\epsilon$ on an augmented secret domain (e.g., $\mathcal{X}\times\mathcal{V}$ or its Markov-blanket reduction) under an extended metric $d_{(x,\mathbf{v}),(x',\mathbf{v}')}$ that accounts for multiple (historical) locations. Thus, fixing $\epsilon$ keeps the privacy \emph{budget parameter} consistent across mechanisms, while changing the secret domain/metric changes \emph{what is being protected}; empirical comparisons at a fixed $\epsilon$ should therefore be interpreted as utility trade-offs under different protection targets (single-location secrecy versus context/trajectory-aware secrecy).
}

The detailed description of the objective functions of the LP-based methods, LP, ConstOPTMech, and LP+Markov, is given in Appendix \ref{sec:measureEUL}. 
We applied the above five mDP methods to perturb the vehicles' locations and recommended destinations based on their perturbed locations. We set $\eta = 5.0 km$  by default. 

\looseness = -1

\DEL{
\vspace{-0.00in}
\begin{table}[h]
\caption{Performance of different data perturbation methods.}
\vspace{-0.10in}
\label{Tb:exp:QL}
\centering
\small 
\begin{tabular}{ c|c||c|c}
\hline
\multicolumn{1}{ c  }{Perturbation}& \multicolumn{3}{ c }{Metrics} \\
\cline{2-4}
\multicolumn{1}{ c|  }{algorithms}
& \multicolumn{1}{ |c|| }{Utility loss}
& \multicolumn{1}{ c| }{PL}&\multicolumn{1}{ |c }{Expected PL } \\
\hline
\multicolumn{4}{ c }{Rome, Italy} 
\\
\hline
\hline
\multicolumn{1}{ c|  }{ \textbf{LP+C-mDP}} & \textbf{0.065$\pm$0.162} & \textbf{0.099$\pm$0.015} & \textbf{0.062$\pm$0.014}  \\ 
\multicolumn{1}{ c|  }{ LP+Markov} & 0.071$\pm$0.203 & 0.098$\pm$0.013 & 0.062$\pm$0.013  \\ 
\multicolumn{1}{ c|  }{ LP} & 0.077$\pm$0.216 & 0.100$\pm$0.000 & 0.064$\pm$0.009   \\
\multicolumn{1}{ c|  }{ConstOPT} & 0.084$\pm$0.263 & 0.100$\pm$0.000 & 0.073$\pm$0.009  \\ 
\multicolumn{1}{ c|  }{ ExpMech} & 0.146$\pm$0.326 & 0.074$\pm$0.010 & 0.028$\pm$0.008   \\ 
\hline
\multicolumn{4}{ c }{Porto, Portugal} 
\\
\hline
\hline
\multicolumn{1}{ c|  }{ \textbf{LP+C-mDP}} & \textbf{0.097$\pm$0.291} & \textbf{0.098$\pm$0.050} & \textbf{0.056$\pm$0.028}  \\ 
\multicolumn{1}{ c|  }{ LP+Markov} & 0.102$\pm$0.297 & 0.098$\pm$0.055 & 0.054$\pm$0.030  \\ 
\multicolumn{1}{ c|  }{ LP} & 0.110$\pm$0.317 & 0.100$\pm$0.000 & 0.051$\pm$0.017   \\
\multicolumn{1}{ c|  }{ConstOPT} & 0.130$\pm$0.376 & 0.100$\pm$0.000 & 0.069$\pm$0.010  \\ 
\multicolumn{1}{ c|  }{ ExpMech} & 0.234$\pm$0.270 & 0.072$\pm$0.023 & 0.023$\pm$0.015   \\ 
\hline
\end{tabular}
\vspace{-0.00in}
\end{table}
\vspace{-0.00in}
\begin{table}[t]
\caption{Utility loss of different data perturbation methods (mean$\pm$1.96$\times$standard deviation).}
\vspace{-0.10in}
\label{Tb:exp:QL}
\centering
\footnotesize 
\begin{tabular}{ c|c|c||c|c}
\hline
\multicolumn{1}{ c  }{Perturbation}& \multicolumn{2}{ |c|| }{Case I}
& \multicolumn{2}{ |c }{Case II} \\
\cline{2-5}
\multicolumn{1}{ c|  }{algorithms}
& \multicolumn{1}{ |c| }{Rome}
& \multicolumn{1}{ c|| }{Porto}&\multicolumn{1}{ |c }{Rome}&\multicolumn{1}{ |c }{Porto} \\
\hline
\multicolumn{1}{ c|  }{ \textbf{LP+C-mDP}} & \textbf{0.034$\pm$0.138} & \textbf{0.097$\pm$0.291} & \textbf{0.2586$\pm$1.60} & \textbf{0.2130$\pm$3.76} \\ 
\multicolumn{1}{ c|  }{ LP+Markov1} & 0.037$\pm$0.155 & 0.102$\pm$0.297 & 0.5887$\pm$4.10 & 0.4837$\pm$0.80\\ 
\multicolumn{1}{ c|  }{ LP+Markov2} & 0.037$\pm$0.155 & 0.102$\pm$0.297 & 0.5887$\pm$4.10 & 0.4837$\pm$0.80\\ 
\multicolumn{1}{ c|  }{ LP} & 0.044$\pm$0.138 & 0.110$\pm$0.317 & 0.6702$\pm$5.20 &  2.7031$\pm$22.16\\
\multicolumn{1}{ c|  }{ConstOPT} & 0.038$\pm$0.220 & 0.130$\pm$0.376 & 0.7752$\pm$6.60 & 2.8320$\pm$22.55\\ 
\multicolumn{1}{ c|  }{ ExpMech} & 0.045$\pm$0.177 & 0.234$\pm$0.270 & 1.612$\pm$15.35 & 6.6985$\pm$51.32\\ 
\hline
\end{tabular}
\vspace{-0.10in}
\end{table}}


\begin{figure}[t]
\centering
\begin{minipage}{0.5\textwidth}
\centering
\subfigure[\small Rome, Italy]{
\includegraphics[width=0.46\textwidth, height = 0.135\textheight]{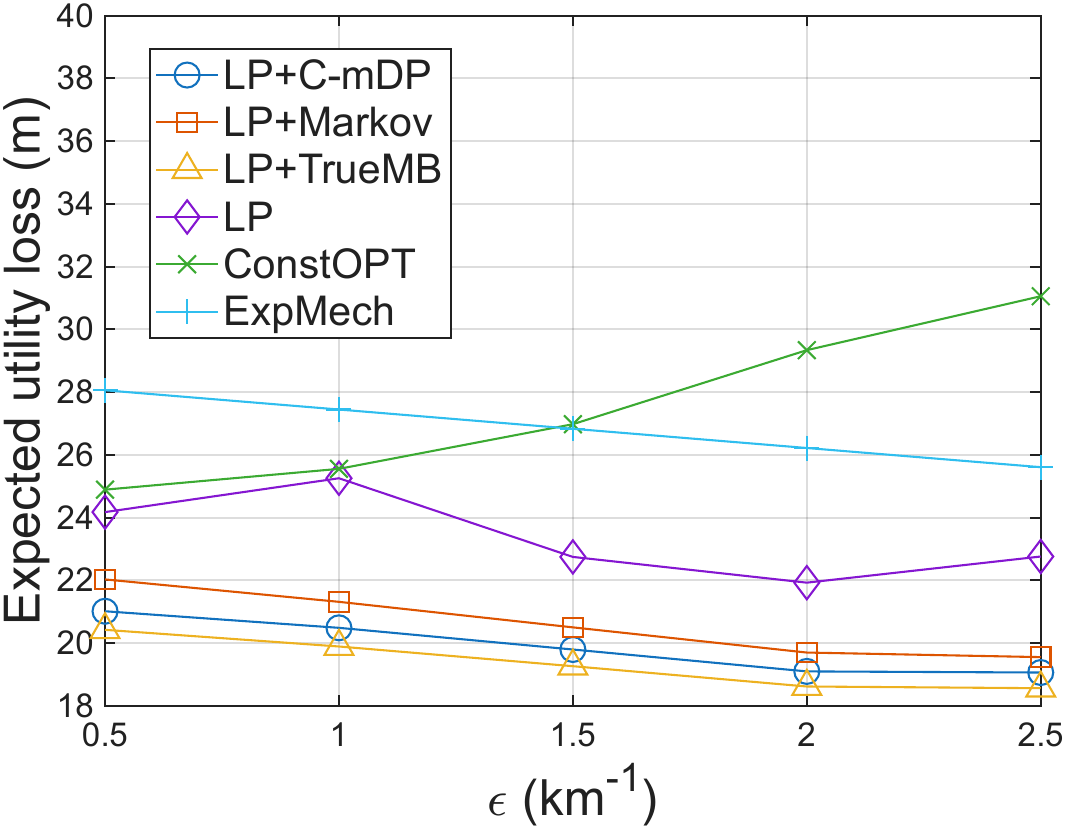}
}
\subfigure[\small Porto, Portugal]{
\includegraphics[width=0.46\textwidth, height = 0.135\textheight]{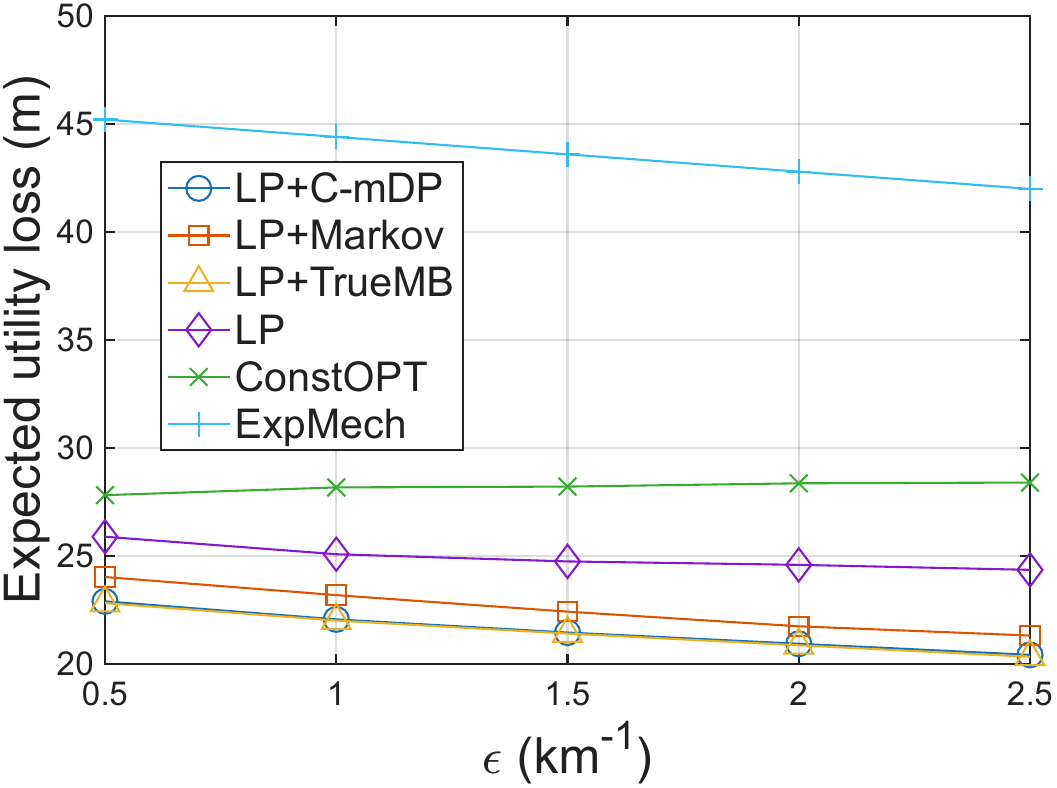}
}
\end{minipage}
\vspace{-0.15in}
\caption{Expected utility loss vs. privacy budget $\epsilon$.}
\label{fig:ULepsilon}
\vspace{-0.05in}
\end{figure}

\vspace{-0.05in}
\subsubsection{Experimental results} We evaluate the expected utility loss of the six data perturbation methods by varying the privacy budget $\epsilon$ from 0.5 $km^{-1}$ to 0.25 $km^{-1}$ and present the results in Fig. \ref{fig:ULepsilon}(a)(b). The experimental results demonstrate the superior performance of LP+C-mDP in terms of utility loss across both the Rome and Porto datasets. 
Across all evaluated $\epsilon$ values, LP+C-mDP achieves approximately $13\%$--$17\%$ lower utility loss than LP, $25\%$--$36\%$ lower than ConstOPT, and more than $25\%$--$50\%$ lower than ExpMech (with the largest gaps observed in Porto under moderate-to-large $\epsilon$). 
Moreover, LP+C-mDP performs very close to LP+TrueMB, typically within $1\%$--$2\%$, indicating that the C-mDP formulation effectively approximates the full Markov blanket dependency structure. 
In contrast, LP+Markov consistently incurs about $3\%$--$8\%$ higher utility loss than LP+C-mDP, reflecting the limitation of modeling vehicle mobility using only a first-order Markov assumption.

The performance gain of LP+C-mDP over LP stems from its incorporation of context information within the Markov blanket, which enables a more accurate characterization of context-dependent utility loss during the optimization process. 
By contrast, ConstOPT and ExpMech rely (fully or partially) on exponential mechanisms that do not explicitly account for road-network constraints and mobility dependencies, leading to higher utility loss. 
Furthermore, LP+Markov assumes a first-order Markov chain and ignores additional preceding locations that are not conditionally independent of the next location, which explains its consistently higher utility loss compared to LP+C-mDP.



\begin{figure}[t]
\centering
\begin{minipage}{0.48\textwidth}
\centering
  \subfigure[Perturbation optimization]{
\includegraphics[width=0.48\textwidth, height = 0.13\textheight]{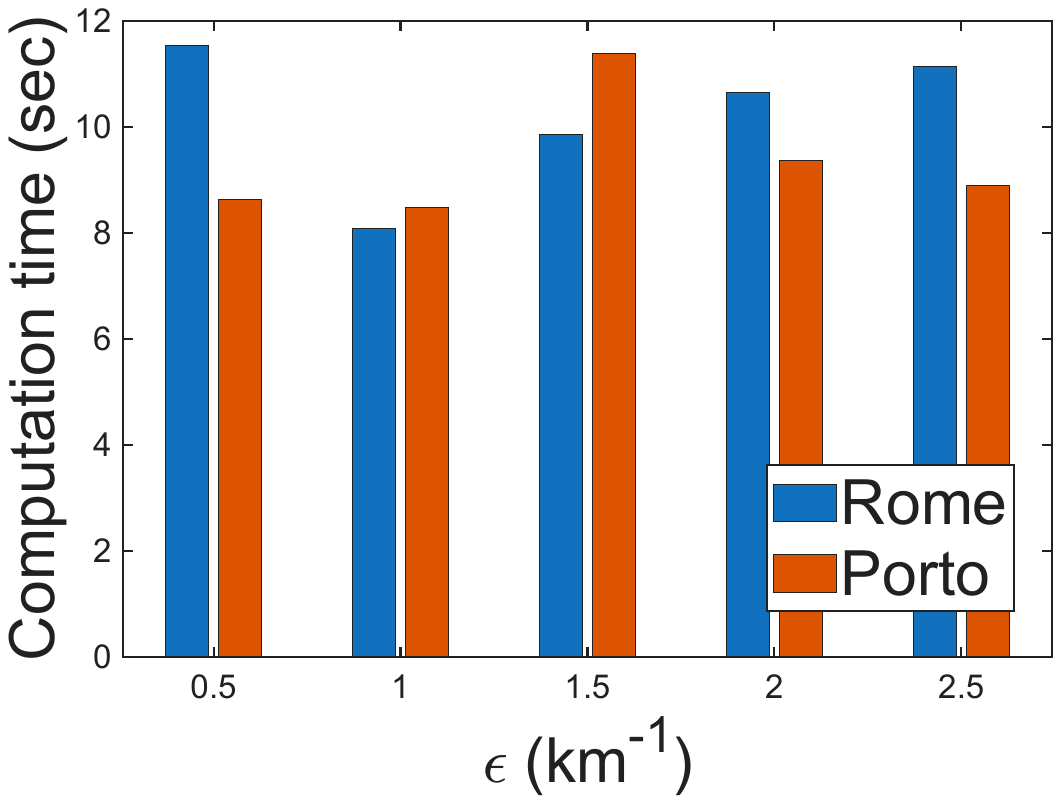}}
  \subfigure[Perturbation record selection]{
\includegraphics[width=0.48\textwidth, height = 0.13\textheight]{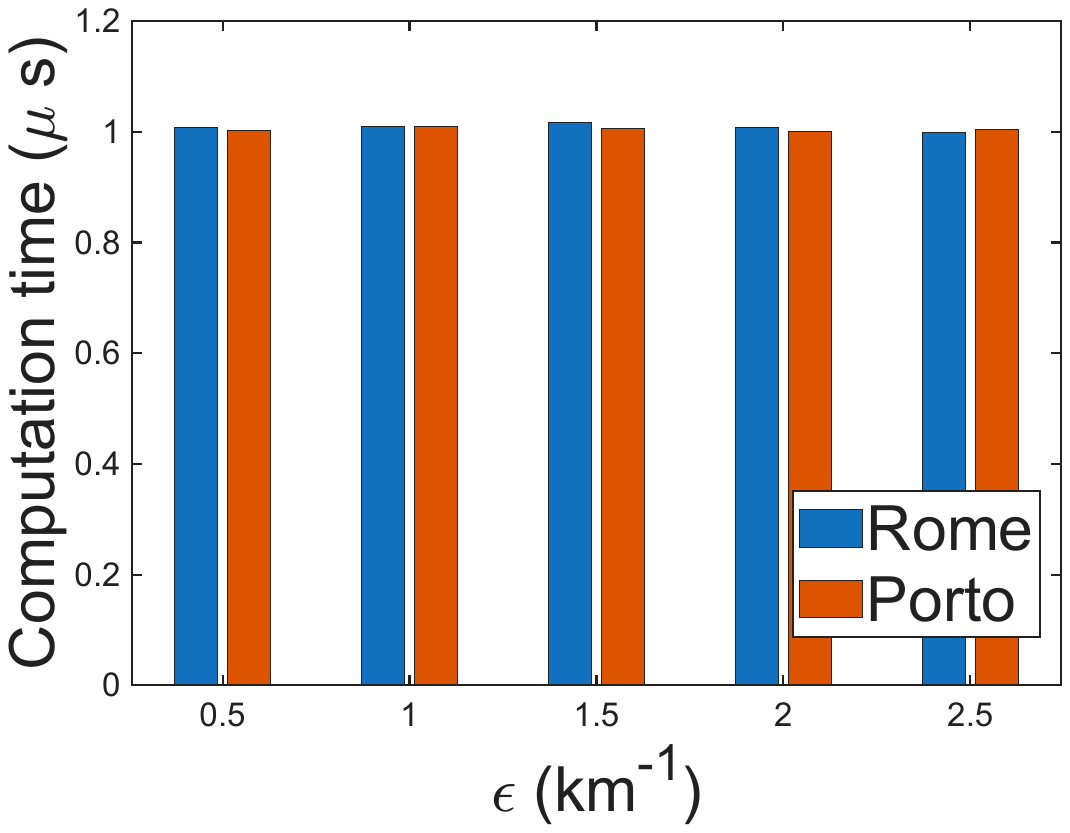}}
\label{}
\end{minipage}
\vspace{-0.15in}
\caption{Computation time of data perturbation.}
\label{fig:LPtime}
\vspace{-0.05in}
\end{figure}

\vspace{0.02in}
\noindent \textbf{Tradeoff between privacy and utility}. 
In the mDP framework, the privacy budget $\epsilon$ represents the required privacy level. A lower $\epsilon$ indicates a higher privacy level, requiring more noise to be added to the protected data, which can degrade utility due to reduced data fidelity. 
As expected, a clear privacy--utility tradeoff is observed in both Fig. \ref{fig:ULepsilon}(a)(b). As the privacy budget $\epsilon$ increases, the utility loss of all mechanisms (LP+C-mDP, LP, ExpMech, LP+TrueMB, and LP+Markov) decreases. This trend arises because a larger $\epsilon$ relaxes the privacy constraints, enabling the mechanism to select perturbed locations that are closer to the true location.  While this leads to improved utility by preserving higher location accuracy, it simultaneously weakens privacy protection, as the released data reveals more information about the original location.



\vspace{-0.00in}
\DEL{
\begin{table}[t]
\caption{Prediction performance of DNN (the 3rd column displays mean$\pm$standard deviation values across the groups with different sample sizes)}
\vspace{-0.10in}
\label{Tb:exp:accuracy}
\centering
\footnotesize 
\begin{tabular}{ c|c|c|c}
\hline 
\hline 
\multicolumn{2}{ c|  }{Merics} & Overall & Diff. sample size \\
\hline
\hline 
\multicolumn{2}{ c|  }{BCE loss} & 0.1211 & 0.1736$\pm$0.1286\\
\hline
\multicolumn{2}{ c|  }{ Prediction accuracy} & 0.9512 & 0.9221$\pm$0.0796 \\
\hline
\multicolumn{1}{ c|  }{ } & Positive precision & 0.9580 & 0.9247$\pm$0.0735
\\ 
\multicolumn{1}{ c|  }{ } & Positive recall & 0.9868 & 0.9890$\pm$0.0246\\ 
\multicolumn{1}{ c|  }{\revision F-scores} & Positive F1 score & 0.9722 & 0.9543$\pm$0.0455
\\ 
\cline{2-4}
\multicolumn{1}{ c|  }{ } & Negative precision & 0.8972 & 0.7847$\pm$0.3829
 \\ 
\multicolumn{1}{ c|  }{ } & Negative recall & 0.7273 & 0.6384$\pm$0.3664
 \\ 
\multicolumn{1}{ c|  }{ } & Negative F1 score & 0.8033 & 0.6839$\pm$0.3604
\\ 
\hline
\end{tabular}
\vspace{-0.15in}
\end{table}}




\DEL{

\begin{figure*}[t]
\centering
\begin{minipage}{1.0\textwidth}
\centering
\subfigure[\small Case I: Rome, Italy]{
\includegraphics[width=0.23\textwidth, height = 0.105\textheight]{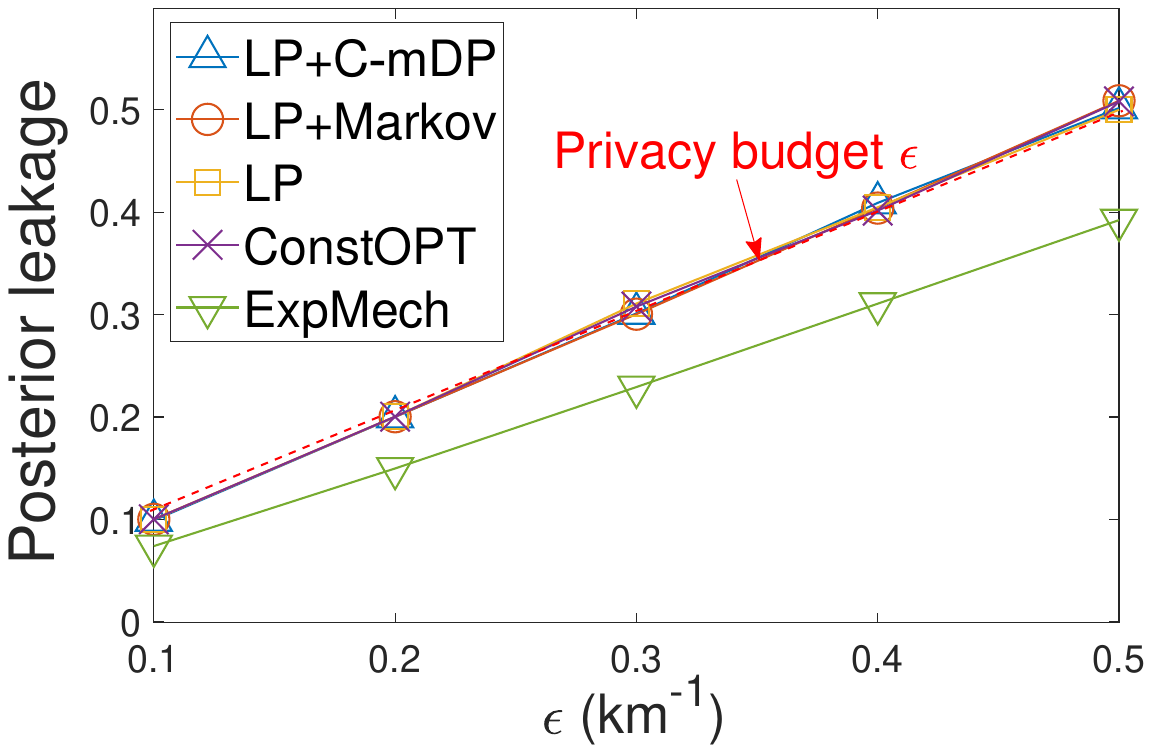}
}
\vspace{-0.00in}
\subfigure[\small Case I: Porto, Portugal]{
\includegraphics[width=0.23\textwidth, height = 0.105\textheight]{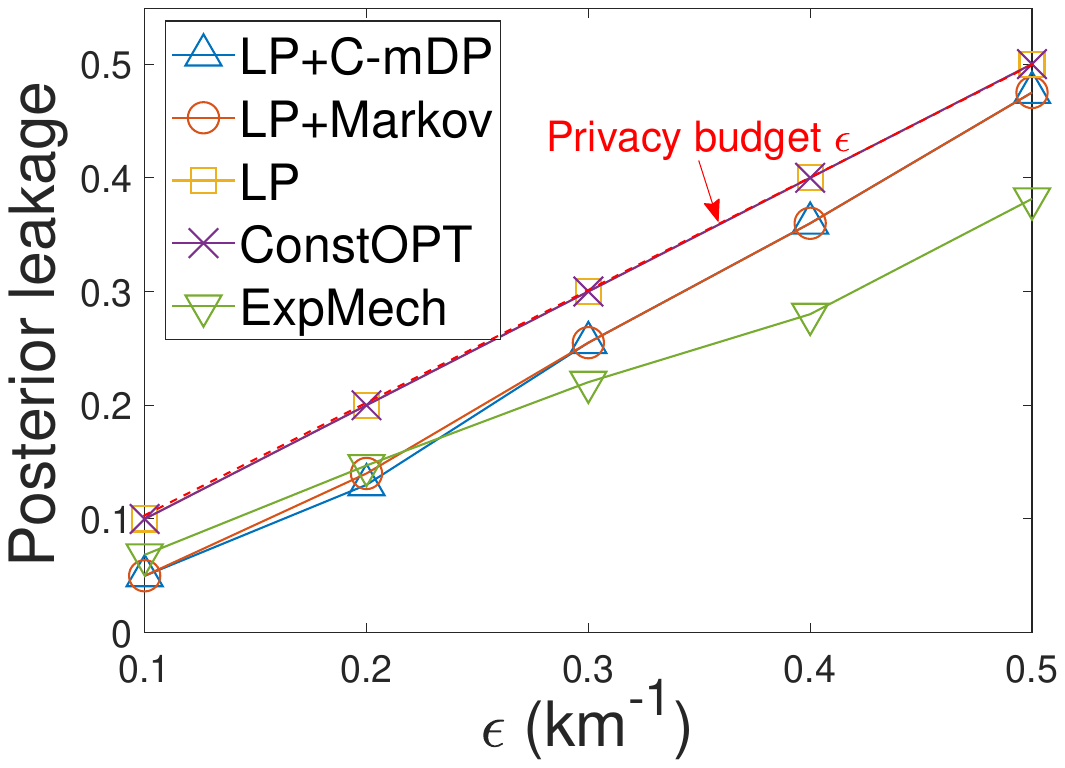}
}
\subfigure[\small Case II: Rome, Italy]{
\includegraphics[width=0.23\textwidth, height = 0.105\textheight]{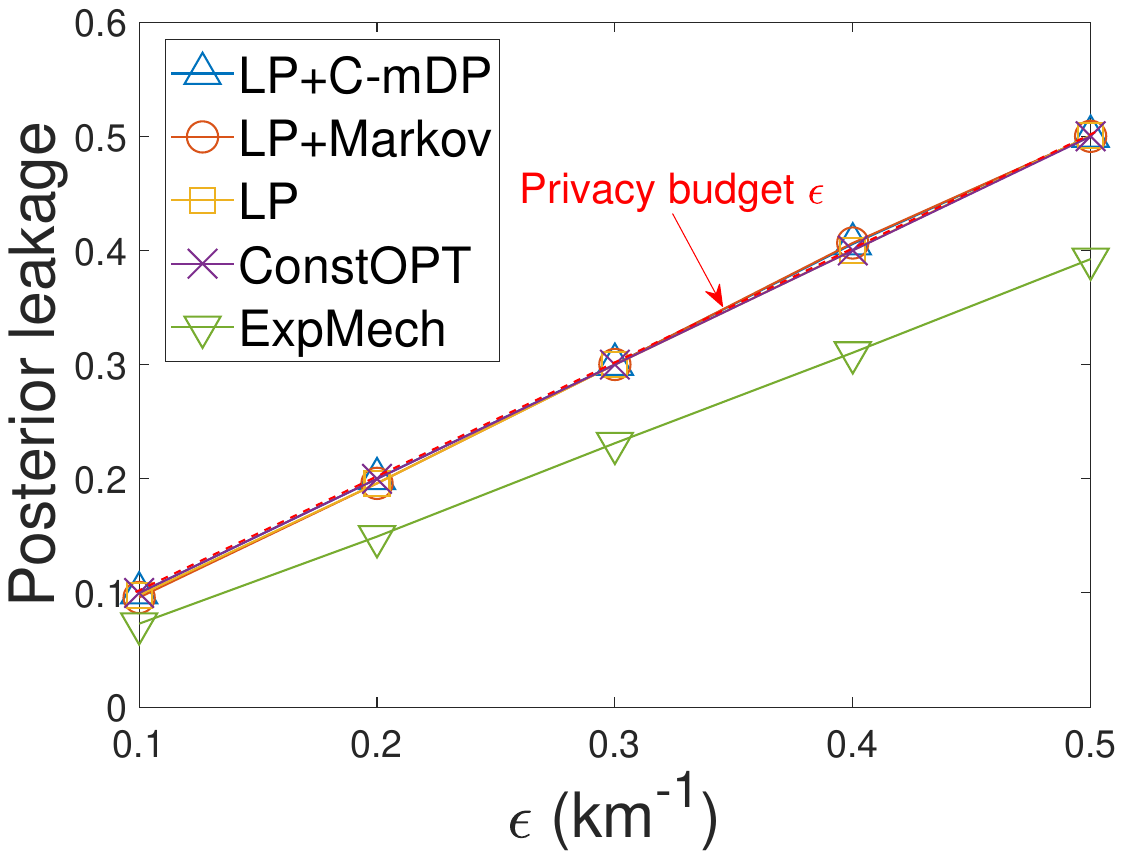}
}
\subfigure[\small Case II: Porto, Portugal]{
\includegraphics[width=0.23\textwidth, height = 0.105\textheight]{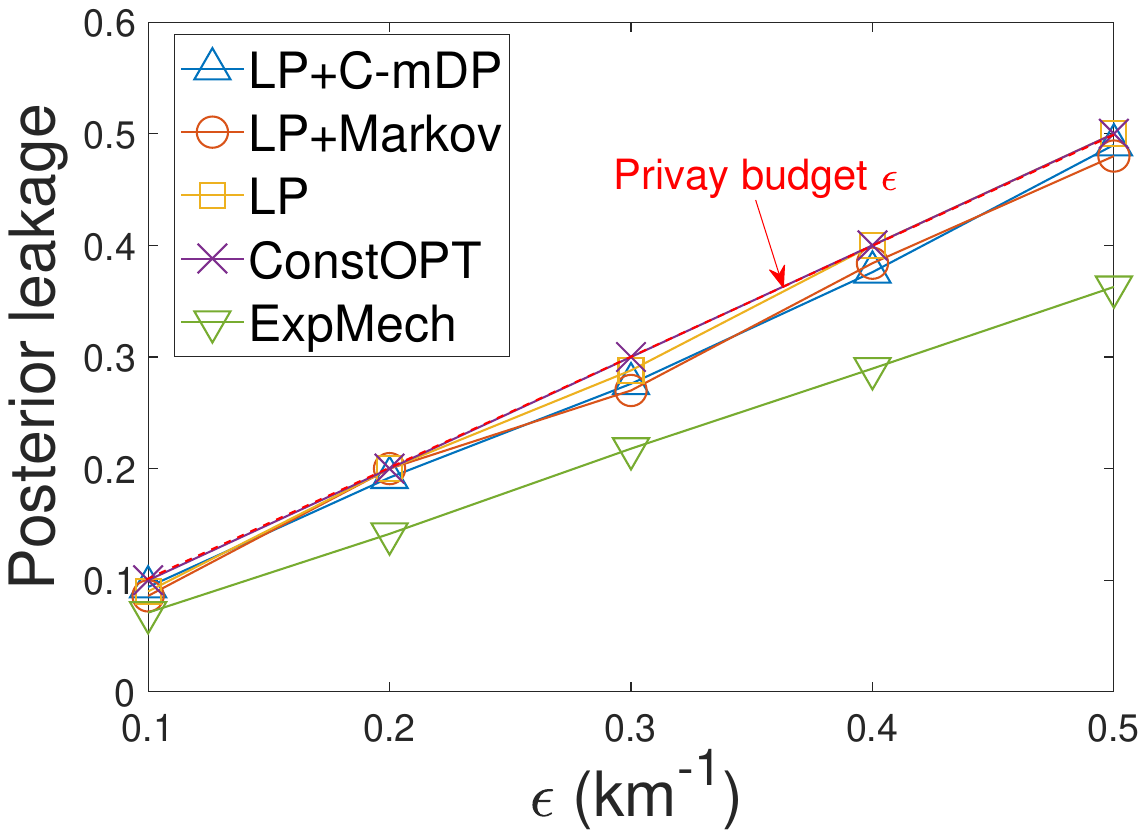}
}
\end{minipage}
\vspace{-0.20in}
\caption{Posterior leakage vs. privacy budget $\epsilon$.}
\label{fig:PLepsilon}
\vspace{-0.15in}
\end{figure*}

\begin{figure*}[t]
\centering
\begin{minipage}{1.0\textwidth}
\centering
\subfigure[\small Case I: Rome, Italy]{
\includegraphics[width=0.23\textwidth, height = 0.105\textheight]{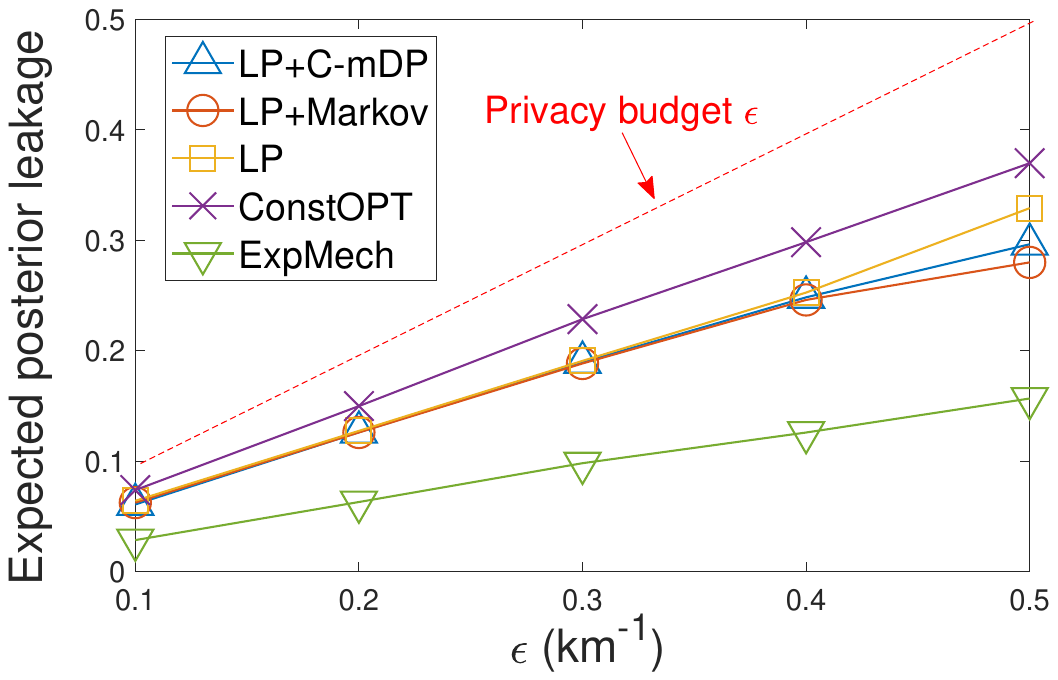}
}
\vspace{-0.00in}
\subfigure[\small Case I: Porto, Portugal]{
\includegraphics[width=0.23\textwidth, height = 0.105\textheight]{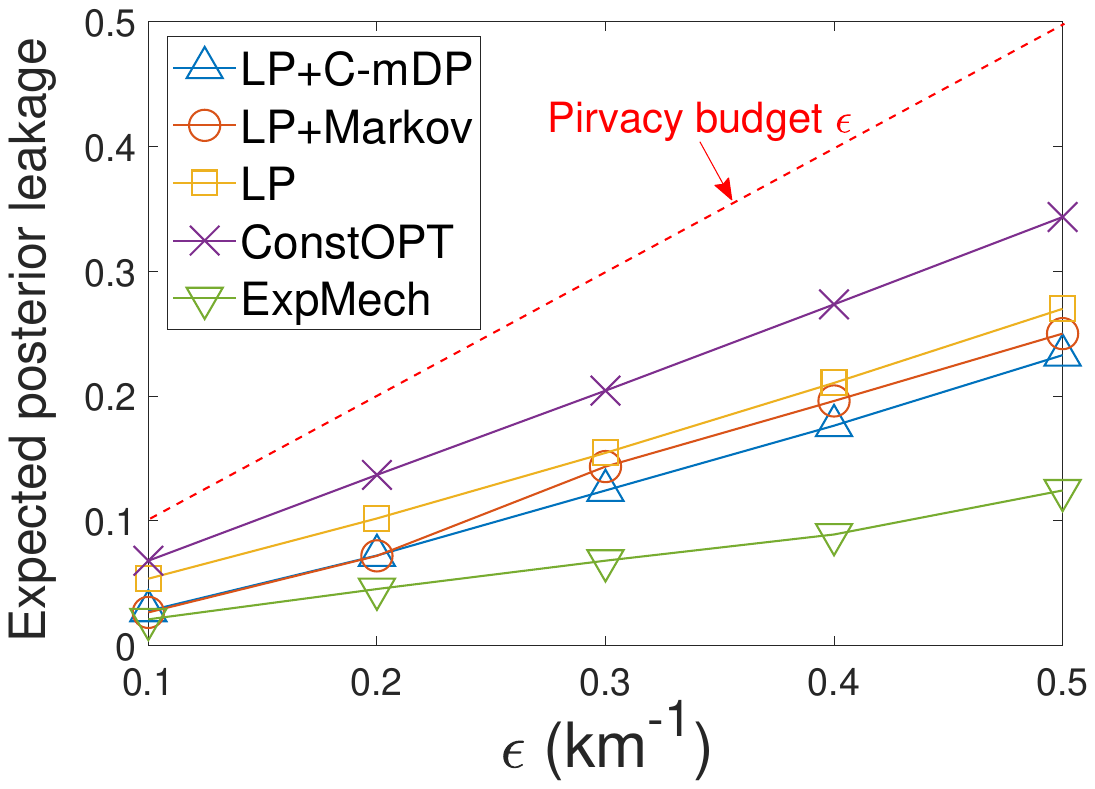}
}
\subfigure[\small Case II: Rome, Italy]{
\includegraphics[width=0.23\textwidth, height = 0.105\textheight]{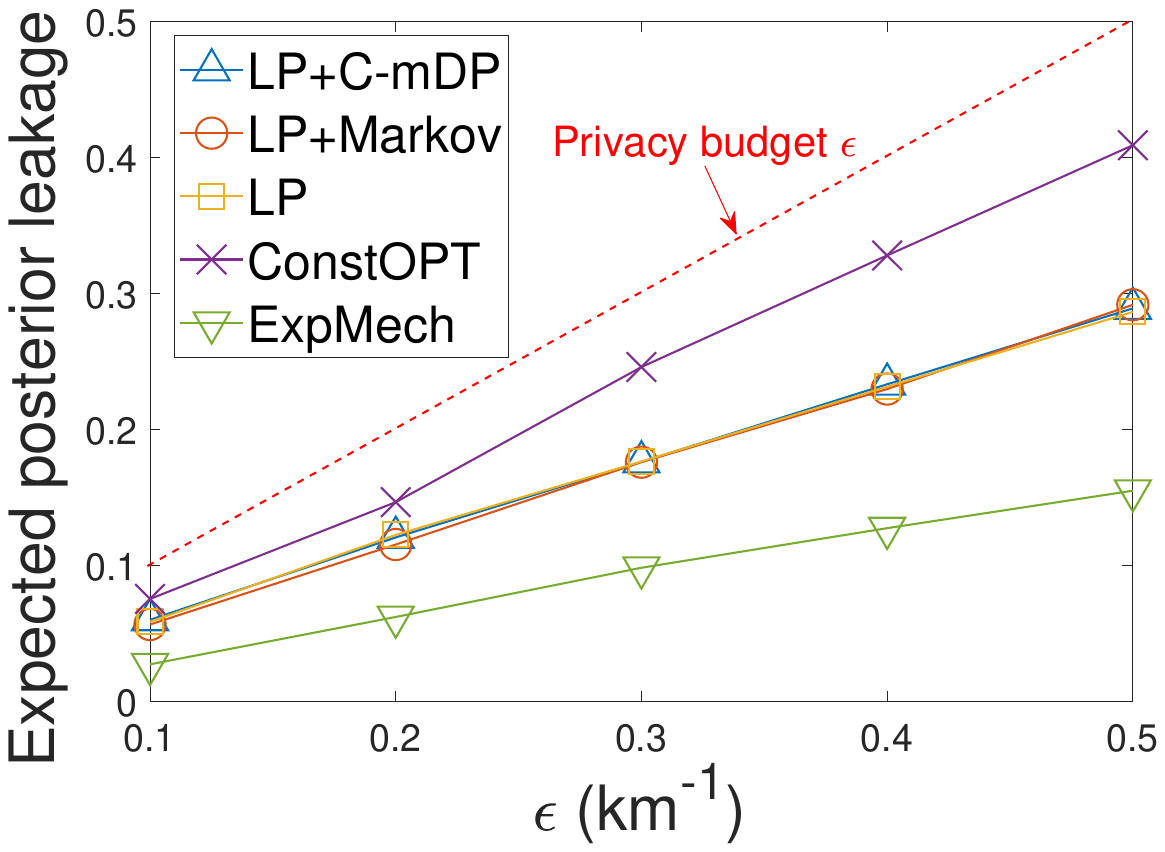}
}
\subfigure[\small Case II: Porto, Portugal]{
\includegraphics[width=0.23\textwidth, height = 0.105\textheight]{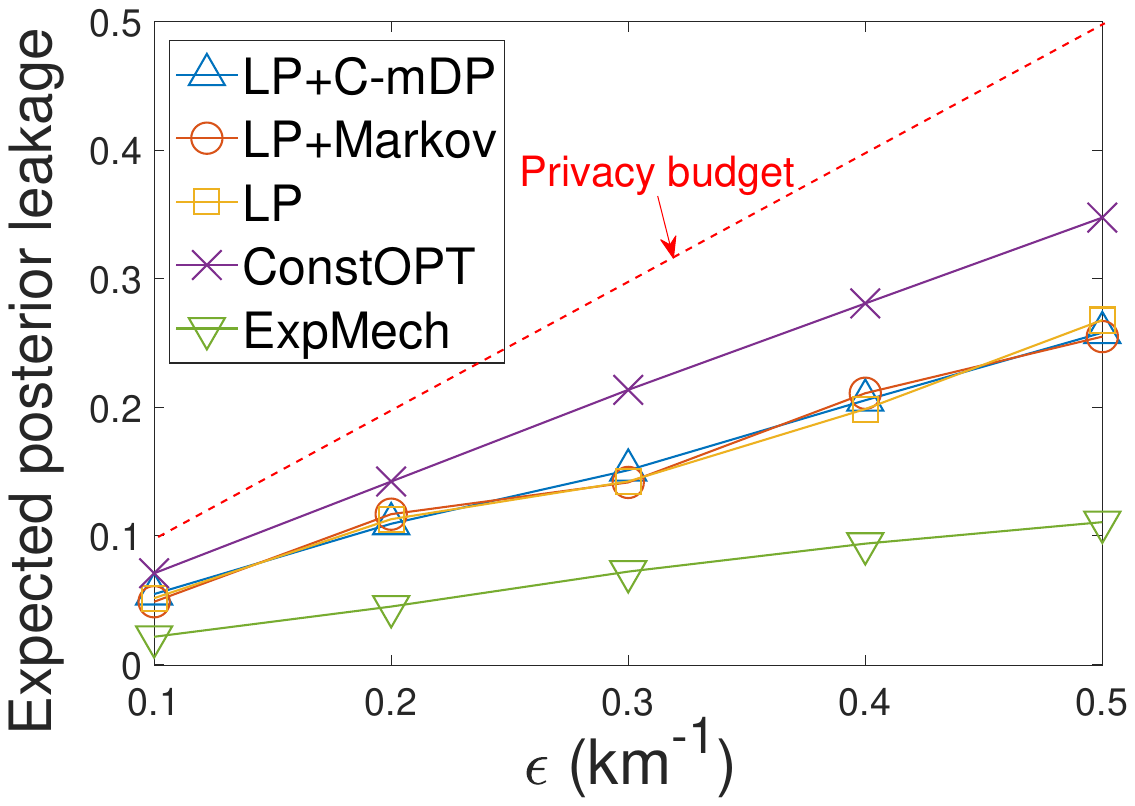}
}
\label{}
\end{minipage}
\vspace{-0.20in}
\caption{Expected posterior leakage vs. privacy budget $\epsilon$.}
\label{fig:exPLepsilon}
\vspace{-0.15in}
\end{figure*}

\begin{figure*}[t]
\centering
\begin{minipage}{1.0\textwidth}
\centering
\subfigure[\small Case I: Rome, Italy]{
\includegraphics[width=0.23\textwidth, height = 0.105\textheight]{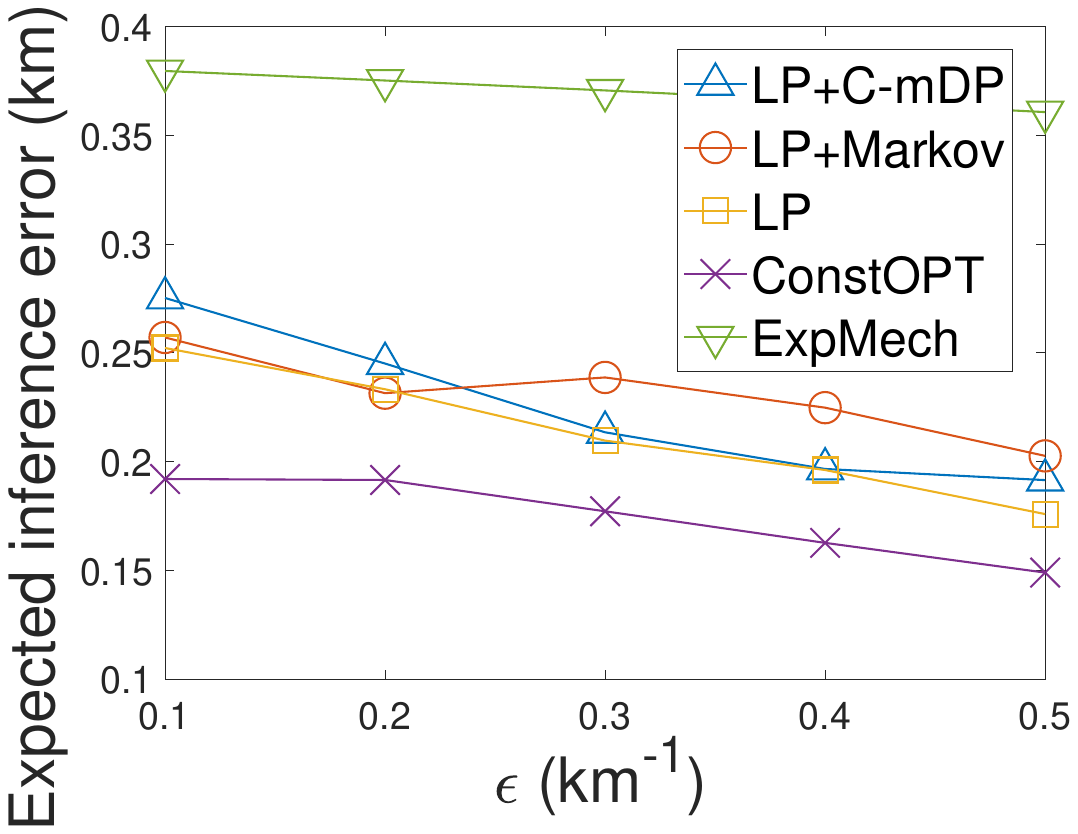}
}
\vspace{-0.00in}
\subfigure[\small Case I: Porto, Portugal]{
\includegraphics[width=0.23\textwidth, height = 0.105\textheight]{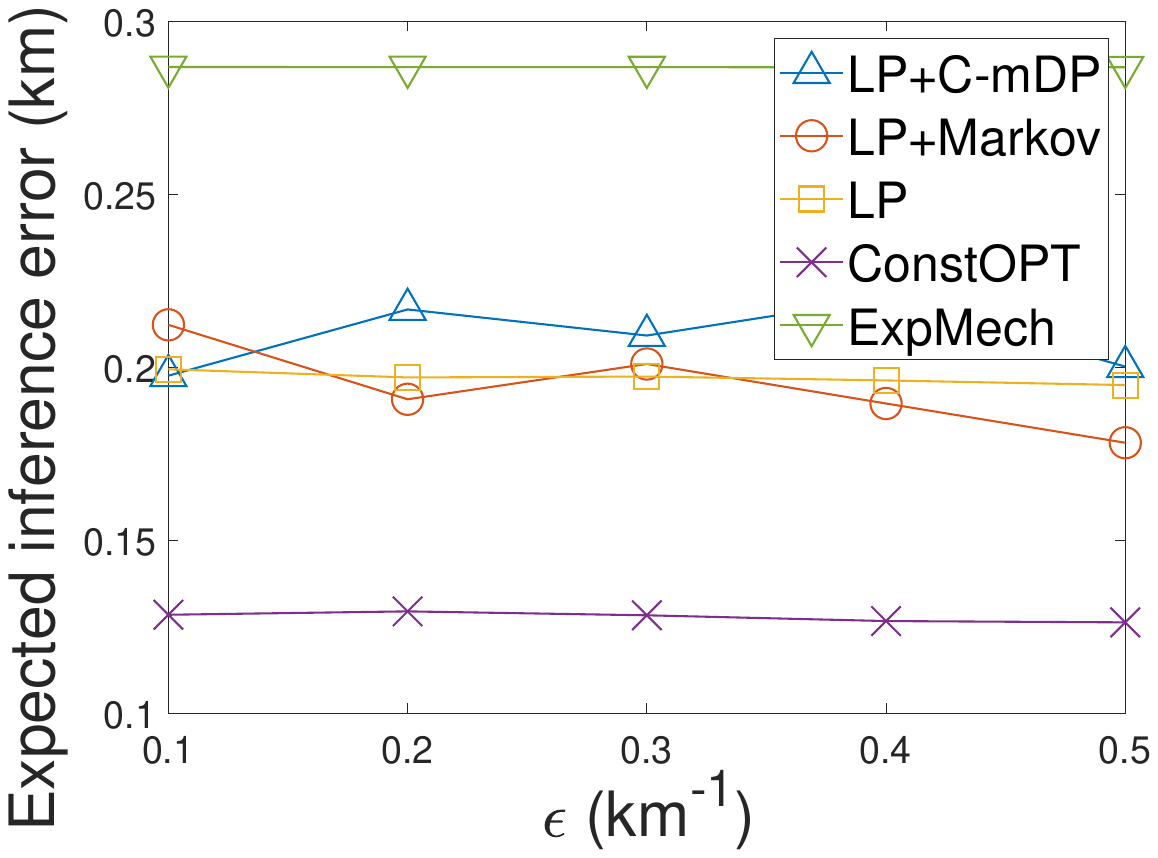}
}
\subfigure[\small Case II: Rome, Italy]{
\includegraphics[width=0.23\textwidth, height = 0.105\textheight]{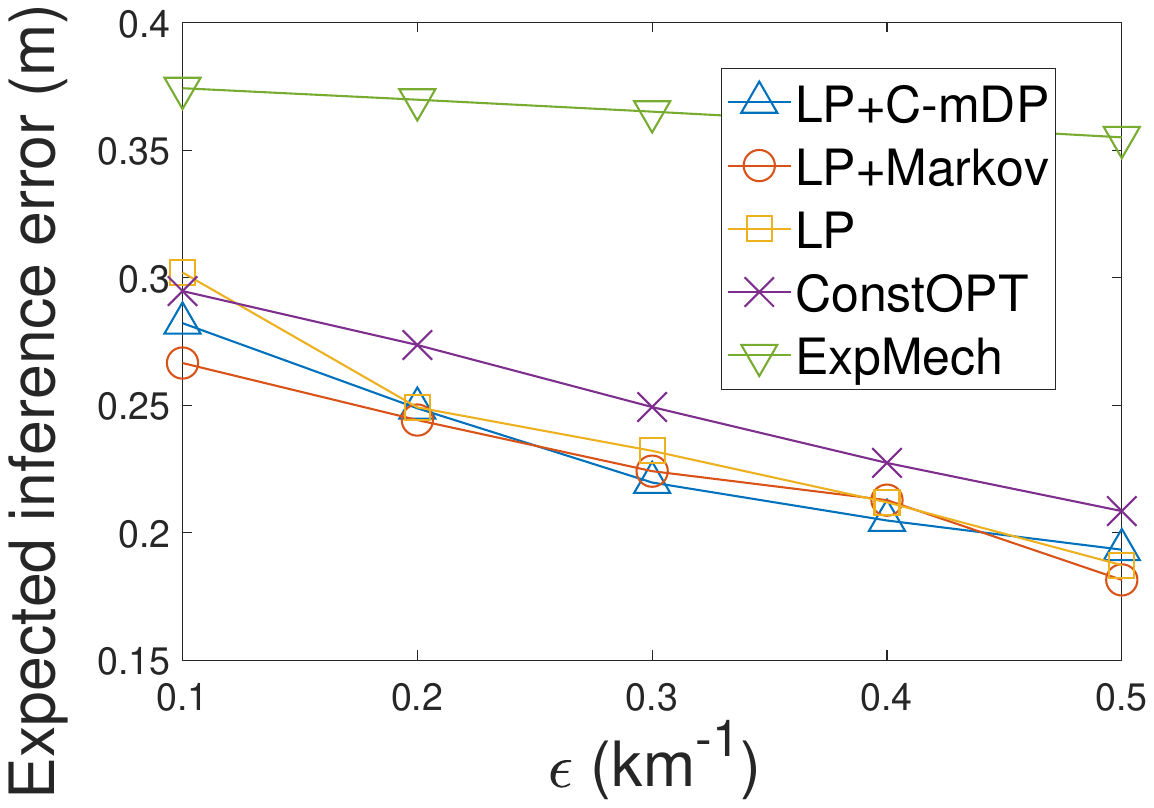}
}
\subfigure[\small Case II: Porto, Portugal]{
\includegraphics[width=0.23\textwidth, height = 0.105\textheight]{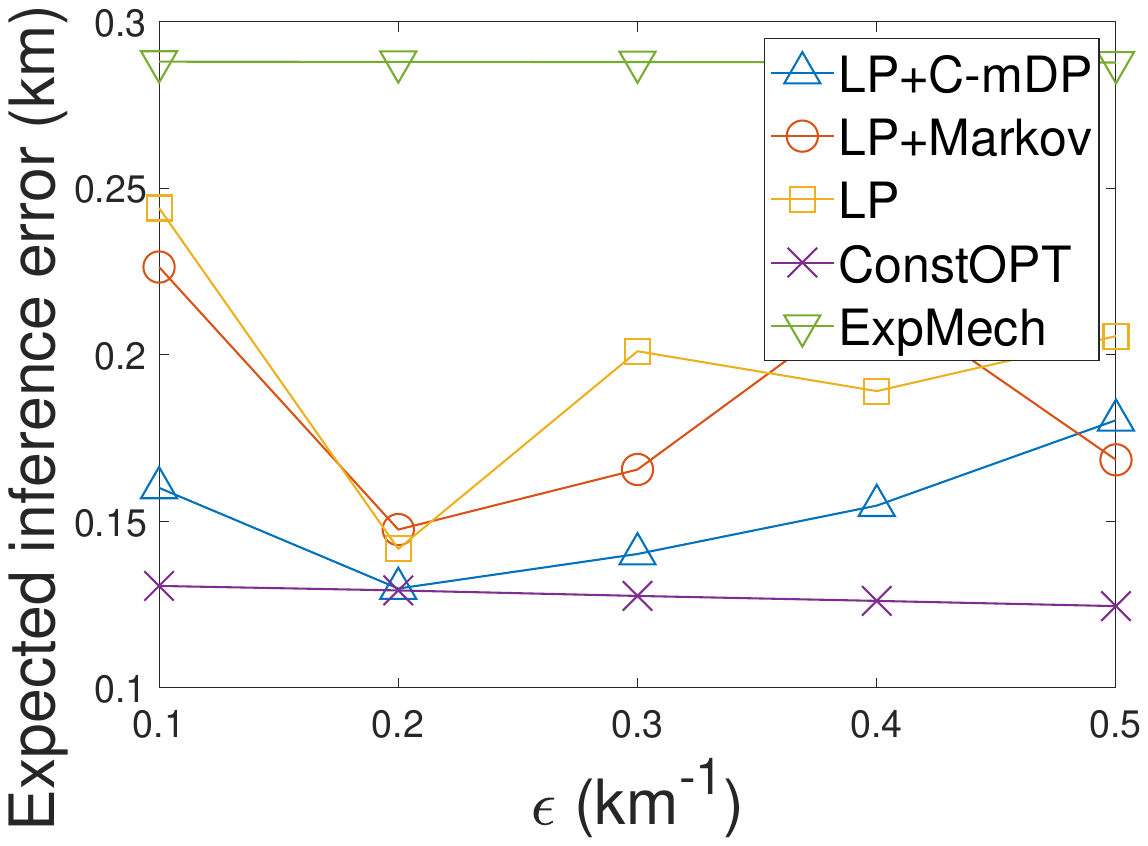}
}
\label{}
\end{minipage}
\vspace{-0.20in}
\caption{Expected inference error vs. privacy budget $\epsilon$.}
\label{fig:EIEepsilon}
\vspace{-0.15in}
\end{figure*}

\begin{figure*}[t]
\centering
\begin{minipage}{1.0\textwidth}
\centering
\subfigure[\small Case I: Rome, Italy]{
\includegraphics[width=0.23\textwidth, height = 0.105\textheight]{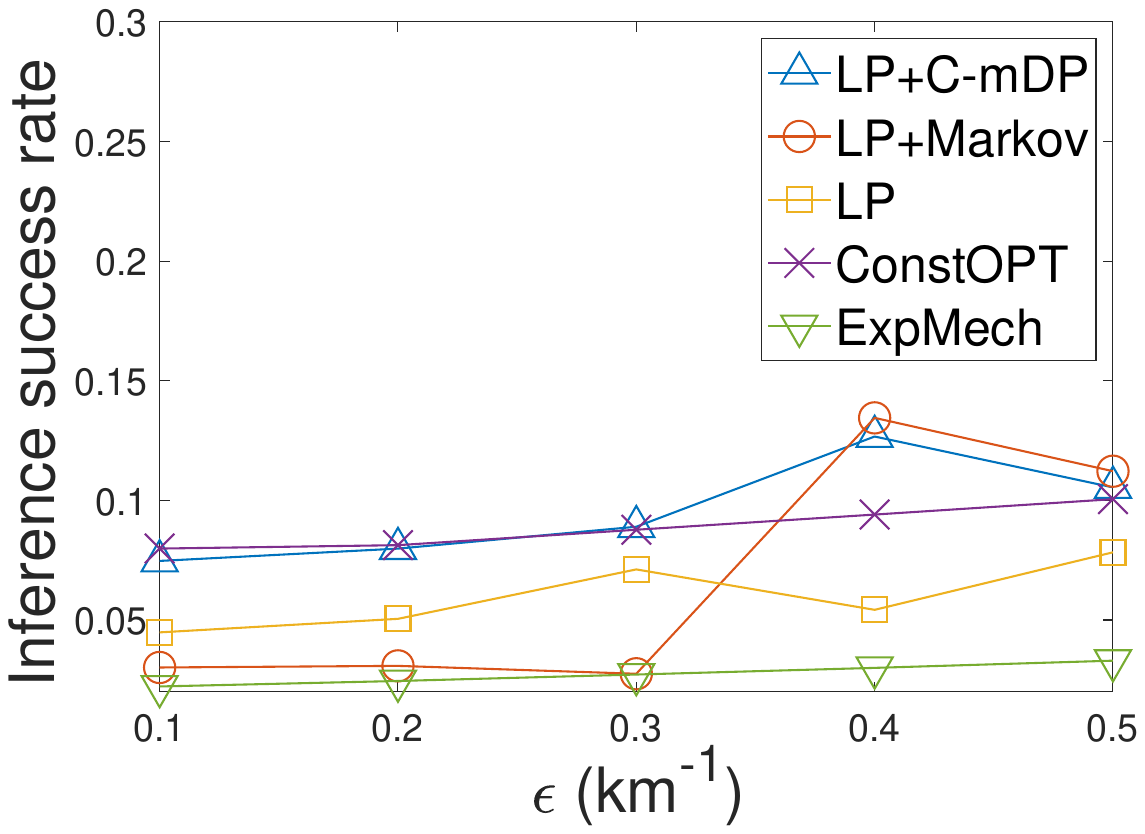}
}
\vspace{-0.00in}
\subfigure[\small Case I: Porto, Portugal]{
\includegraphics[width=0.23\textwidth, height = 0.105\textheight]{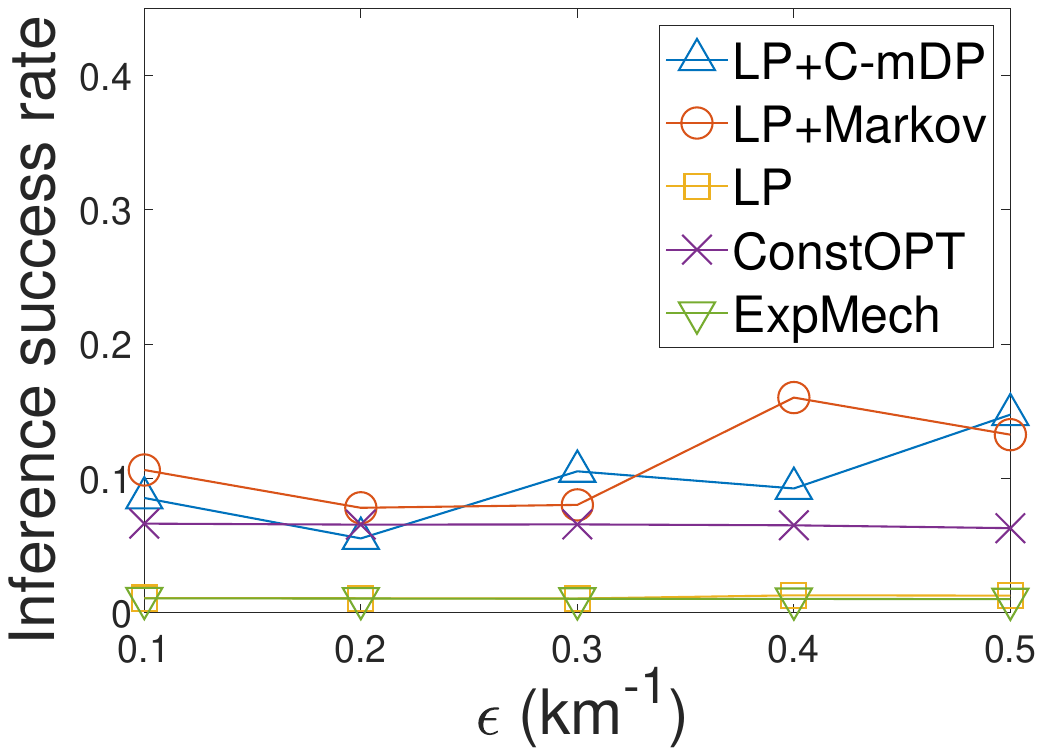}
}
\subfigure[\small Case II: Rome, Italy]{
\includegraphics[width=0.23\textwidth, height = 0.105\textheight]{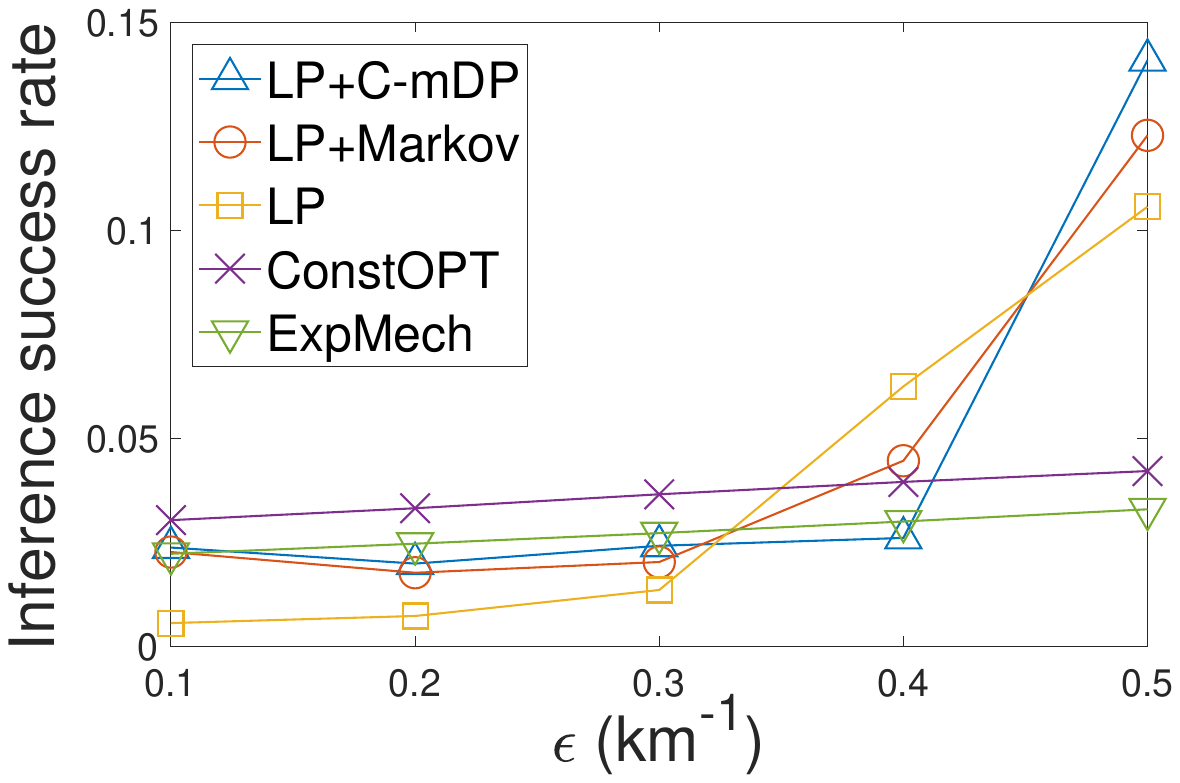}
}
\subfigure[\small Case II: Porto, Portugal]{
\includegraphics[width=0.23\textwidth, height = 0.105\textheight]{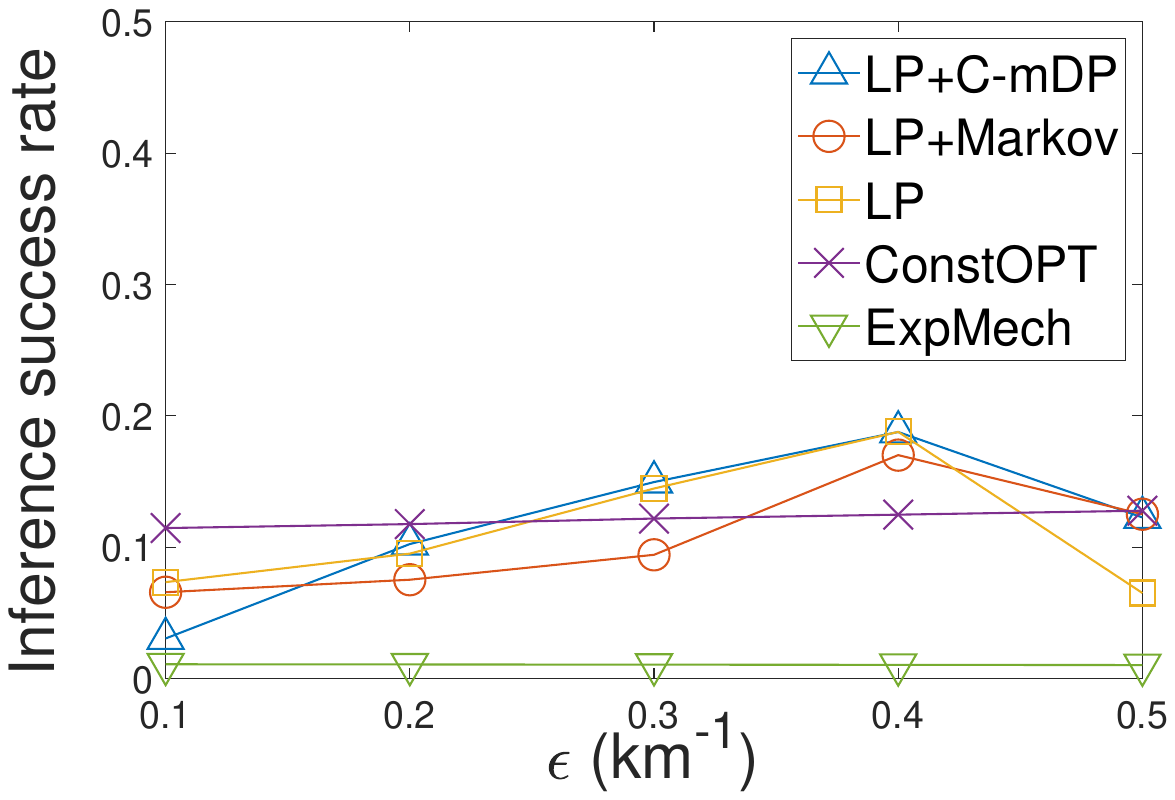}
}
\label{}
\end{minipage}
\vspace{-0.15in}
\caption{Inference success rate vs. privacy budget $\epsilon$.}
\label{fig:ISRepsilon}
\vspace{-0.15in}
\end{figure*}

\begin{figure}[t]
\centering
\begin{minipage}{0.5\textwidth}
\centering
  \subfigure[\small Rome, Italy]{
\includegraphics[width=0.465\textwidth, height = 0.12\textheight]{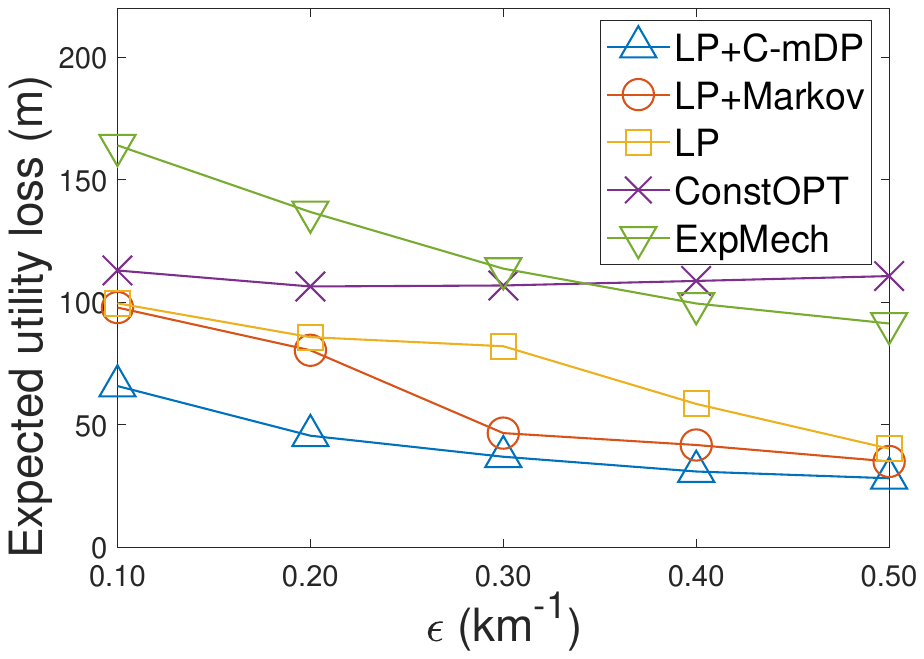}
}
\subfigure[\small Porto, Portugal]{
\includegraphics[width=0.465\textwidth, height = 0.12\textheight]{./fig/ULepsilon_Porto}
}
\end{minipage}
\vspace{-0.15in}
\caption{Expected utility loss vs. privacy budget $\epsilon$.}
\label{fig:ULepsilon}
\centering
\begin{minipage}{0.5\textwidth}
\centering
\subfigure[\small Rome, Italy]{
\includegraphics[width=0.465\textwidth, height = 0.12\textheight]{./fig/PLepsilon1}
}
\vspace{-0.00in}
\subfigure[\small Porto, Portugal]{
\includegraphics[width=0.465\textwidth, height = 0.12\textheight]{./fig/maxPLepsilon_Porto}
}
\end{minipage}
\vspace{-0.15in}
\caption{Posterior leakage vs. privacy budget $\epsilon$.}
\label{fig:PLepsilon}
\centering
\begin{minipage}{0.5\textwidth}
\centering
\subfigure[\small Rome, Italy]{
\includegraphics[width=0.465\textwidth, height = 0.12\textheight]{./fig/PLepsilon}
}
\vspace{-0.00in}
\subfigure[\small Porto, Portugal]{
\includegraphics[width=0.465\textwidth, height = 0.12\textheight]{./fig/meanPLepsilon_Porto}
}
\label{}
\end{minipage}
\vspace{-0.15in}
\caption{Expected posterior leakage vs. privacy budget $\epsilon$.}
\label{fig:exPLepsilon}
\centering
\begin{minipage}{0.5\textwidth}
\centering
\subfigure[\small Rome, Italy]{
\includegraphics[width=0.465\textwidth, height = 0.12\textheight]{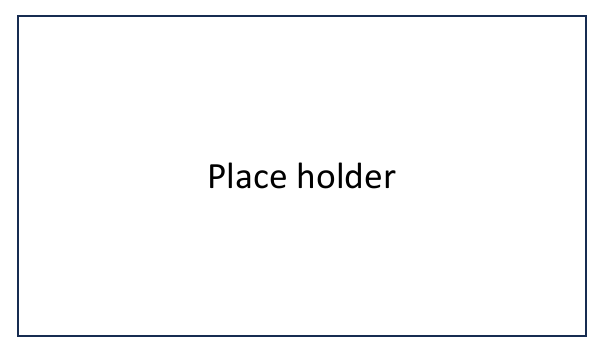}
}
\vspace{-0.00in}
\subfigure[\small Porto, Portugal]{
\includegraphics[width=0.465\textwidth, height = 0.12\textheight]{./fig/placeholder}
}
\label{}
\end{minipage}
\vspace{-0.15in}
\caption{Expected posterior leakage vs. privacy budget $\epsilon$.}
\label{fig:EIE}
\vspace{-0.00in}
\centering
\begin{minipage}{0.5\textwidth}
\centering
\subfigure[\small Rome, Italy]{
\includegraphics[width=0.465\textwidth, height = 0.12\textheight]{./fig/placeholder}
}
\vspace{-0.00in}
\subfigure[\small Porto, Portugal]{
\includegraphics[width=0.465\textwidth, height = 0.12\textheight]{./fig/placeholder}
}
\label{}
\end{minipage}
\vspace{-0.15in}
\caption{Expected posterior leakage vs. privacy budget $\epsilon$.}
\label{fig:EIE}
\vspace{-0.15in}
\end{figure}}

\DEL{
\vspace{-0.10in}
\subsection{Privacy Evaluation} 
\label{subsec:PL}
\vspace{-0.03in}
\subsubsection{Privacy measures} Finally, we measure the privacy levels achieved by different methods using the following two metrics 
\newline {\revision (1) \emph{Posterior leakage (PL)}, defined in Eq. (\ref{eq:PLcontext}), reflecting how much additional information attackers can gain from perturbed data compared to the prior knowledge of the locations. Specifically, for each observed perturbed record $y$ and the context information $\mathbf{b}$, we calculate the posterior of each pair $x_i, x_j \in \mathcal{X}$, i.e., $\Pr\left[X = x_i | Q(X, B_X) = y, {B}_X = \mathbf{b}\right]$ and $\Pr\left[X = x_j | Q(X, B_X) = y, {B}_X = \mathbf{b}\right]$ using the Bayes' formula, based on which we can then derive the PL by:   
\vspace{-0.20in}
\begin{eqnarray}
&& \mathrm{PL}\left((x_i, x_j);Q|\mathbf{b}\right) \\ \nonumber 
&=& {\small  \sup_{y} |\ln \left(  \frac{\scriptsize \Pr\left[X = x_i | Q(X, B_X) = y, {B}_X = \mathbf{b}\right]}{\scriptsize \Pr\left[X = x_j| Q(X, B_X) = y, {B}_X = \mathbf{b}\right]}\right.\left\slash\left.\frac{p_{(x_i,\mathbf{b})}}{p_{x_j|\mathbf{b}}}\right)\right.|}. 
\end{eqnarray}
Here, each prior $p_{x|\mathbf{b}} = P(X=x|B_X = \mathbf{b})$ ($x \in \mathcal{X}$) is the conditional probability of $X = x$ given the previous locations $\mathbf{b}$, which can be directly derived from the datasets.}
\newline (2) \emph{Expected PL}, defined as the expected absolute value of the logarithmic ratio of the posterior ratio and the prior ratio of all the neighboring records, or mathematically represented by  
$$\textstyle \mathbb{E}\left(|\ln \left(\frac{\Pr\left[X = x_i | Q(X) = y\right]}{\Pr\left[X = x_j| Q(X) = y\right]}\right.\left\slash\left.\frac{\Pr\left[X=x_i\right]}{\Pr\left[X=x_j\right]}\right)\right.|\right).$$
As opposed to measuring the maximum posterior leakage like PL, the expected PL evaluates the expected value.   \looseness = -1
{\revision \newline (3) \emph{Expected inference error (EIE)}, also known as the \emph{unconditional expected privacy}  \cite{Qiu-CIKM2020,Shokri-CCS2012}, is defined by
\vspace{-0.03in}
\begin{equation}
\label{eq:EIEsingle}
\small \mathrm{EIE}\left(y_k\right) = \sum_{\hat{x}\in \mathcal{X}} \sum_{y_k\in \mathcal{Y}} P(X=x_i| Y = y_k)h_{\hat{X}|Y = y_k}(\hat{x})d_{\hat{x}, x_i}, 
\vspace{-0.03in}
\end{equation}
where $h_{\hat{X}|Y = y_k}(\hat{x})$ represents the probability that the attacker estimates $\hat{x}$ as the vehicle's location given the vehicle's reported location is $y_k$. EIE represents the expected distortion between the estimated location, $\hat{x}$, and the actual location, $x_i$, with a higher EIE indicating a greater level of privacy achieved. In this context, we assume the attacker employs a \emph{Bayesian inference attack} \cite{Yu-NDSS2017}, which estimates the posterior probability of the true location and selects the location with the highest posterior as the estimated location.  \looseness = -1
\newline (4) \emph{Successful Inference Rate (SIR)} is defined as the probability that the location estimated by a Bayesian inference attack \cite{Yu-NDSS2017} matches the true location, $x_i$. Higher SIR indicates lower privacy level achieved. }

\vspace{-0.10in}
\begin{table}[h]
\caption{Privacy levels of different data perturbation methods (mean$\pm$1.96$\times$standard deviation).}
\vspace{-0.10in}
\label{Tb:exp:privacy}
\centering
\footnotesize 
\begin{tabular}{ c|c|c|c|c}
\hline
\multicolumn{1}{ c  }{Perturbation}& \multicolumn{4}{ c }{Privacy Metrics}\\
\cline{2-5}
\multicolumn{1}{ c|  }{algorithms}
& \multicolumn{1}{ c| }{PL}&\multicolumn{1}{ c |}{Expected PL} & \multicolumn{1}{ c| }{EIE}&\multicolumn{1}{ |c }{ISR} \\
\hline
\multicolumn{5}{ c }{Case I: Rome, Italy} 
\\
\hline
\multicolumn{1}{ c|  }{ \textbf{LP+C-mDP}} & 0.099$\pm$0.015 & 0.062$\pm$0.014 & 0.275$\pm$0.641 & 0.075$\pm$0.457  \\ 
\multicolumn{1}{ c|  }{ LP+Markov} & 0.098$\pm$0.013 & 0.062$\pm$0.013 & 0.257$\pm$0.601 & 0.112$\pm$0.576  \\ 
\multicolumn{1}{ c|  }{ LP} & 0.100$\pm$0.000 & 0.064$\pm$0.009 & 0.252$\pm$0.573 & 0.045$\pm$0.301   \\
\multicolumn{1}{ c|  }{ConstOPT} & 0.100$\pm$0.000 & 0.073$\pm$0.009 & 0.259$\pm$0.633 & 0.119$\pm$0.588  \\ 
\multicolumn{1}{ c|  }{ ExpMech} &  0.074$\pm$0.010 & 0.028$\pm$0.008 & 0.380$\pm$0.921 & 0.022$\pm$0.001   \\ 
\hline
\multicolumn{5}{ c }{Case I: Porto, Portugal} 
\\
\hline
\multicolumn{1}{ c|  }{ \textbf{LP+C-mDP}} & \textbf{0.098$\pm$0.050} & \textbf{0.056$\pm$0.028} & 0.198$\pm$0.372 & 0.171$\pm$0.728  \\ 
\multicolumn{1}{ c|  }{ LP+Markov} & 0.098$\pm$0.055 & 0.054$\pm$0.030 & 0.212$\pm$0.432 & 0.213$\pm$0.782  \\ 
\multicolumn{1}{ c|  }{ LP} & 0.100$\pm$0.000 & 0.051$\pm$0.017  & 0.199$\pm$0.328 & 0.022$\pm$0.212  \\
\multicolumn{1}{ c|  }{ConstOPT} & 0.1285$\pm$0.000 & 0.069$\pm$0.010 & 0.129$\pm$0.231 & 0.266$\pm$0.619  \\ 
\multicolumn{1}{ c|  }{ ExpMech} & 0.072$\pm$0.023 & 0.023$\pm$0.015  & 0.287$\pm$0.288 & 0.022$\pm$0.001  \\ 
\hline
\hline
\multicolumn{5}{ c }{Case II: Rome, Italy} 
\\
\hline
\multicolumn{1}{ c|  }{ \textbf{LP+C-mDP}} & \textbf{0.100$\pm$0.000} & \textbf{0.060$\pm$0.017} & 0.282$\pm$0.921 & 0.024$\pm$0.218  \\ 
\multicolumn{1}{ c|  }{ LP+Markov} & 0.096$\pm$0.039 & 0.057$\pm$0.027 & 0.267$\pm$0.933 & 0.027$\pm$0.170  \\ 
\multicolumn{1}{ c|  }{ LP} & 0.098$\pm$0.028 & 0.059$\pm$0.026 & 0.302$\pm$0.984 & 0.006$\pm$0.084\\ 
\multicolumn{1}{ c|  }{ConstOPT} & 0.100$\pm$0.000 & 0.076$\pm$0.020 & 0.295$\pm$0.761 & 0.042$\pm$0.706  \\  
\multicolumn{1}{ c|  }{ ExpMech} & 0.073$\pm$0.021 & 0.028$\pm$0.015 & 0.374$\pm$0.973 & 0.033$\pm$0.001
   \\  
\hline
\multicolumn{5}{ c }{Case II: Porto, Portugal} 
\\
\hline
\multicolumn{1}{ c|  }{ \textbf{LP+C-mDP}} & \textbf{0.094$\pm$0.047} & \textbf{0.055$\pm$0.031} & 0.193$\pm$0.333  & 0.030$\pm$0.234  \\ 
\multicolumn{1}{ c|  }{ LP+Markov} & 0.086$\pm$0.069 & 0.049$\pm$0.041 &  0.221$\pm$0.436 & 0.066$\pm$0.325  \\ 
\multicolumn{1}{ c|  }{ LP} & 0.090$\pm$0.060 & 0.052$\pm$0.038 & 0.241$\pm$0.357 & 0.073$\pm$0.338  \\ 
\multicolumn{1}{ c|  }{ConstOPT} & 0.100$\pm$0.000 & 0.071$\pm$ 0.008 & 0.135$\pm$0.220 & 0.128$\pm$0.286  \\ 
\multicolumn{1}{ c|  }{ ExpMech} & 0.071$\pm$0.022 & 0.022$\pm$0.014 & 0.287$\pm$0.288 & 0.010$\pm$0.000  \\  
\hline
\end{tabular}
\vspace{-0.18in}
\end{table}

\vspace{-0.03in}
\subsubsection{Experimental results} Table \ref{Tb:exp:privacy} shows the PL, the expected PL, EIE, and ISR of the five methods when the privacy budget $\epsilon = 0.1$km$^{-1}$. Fig. \ref{fig:PLepsilon}(a)(b)(c)(d) and Fig. \ref{fig:exPLepsilon}(a)(b)(c)(d) show the PL and the expected PL of these methods given different values of the privacy budget $\epsilon$. As $\epsilon$ increases from 0.1 km$^{-1}$ to 0.5 km$^{-1}$. We observe a corresponding increase in both PL and expected PL due to the relaxation of mDP constraints. {\revision  Fig. \ref{fig:EIEepsilon}(a)-(d) and Fig. \ref{fig:ISRepsilon}(a)-(d) present the EIE and ISR metrics for the five methods under varying values of the privacy budget, $\epsilon$.} 

From Table \ref{Tb:exp:privacy} and Fig. \ref{fig:PLepsilon}(a)(b)(c)(d), we observe that the PL of all methods remains within the privacy budget, $\epsilon$. This finding aligns with the theoretical result in \textbf{Proposition \ref{prop:ContextDP}} (for C-mDP) and prior work \cite{Koufogiannis2015GradualRO} (for mDP), both of which confirm that satisfying $\epsilon$-mDP is sufficient to ensure that the PL does not exceed $\epsilon$." 

{\revision According to Table \ref{Tb:exp:privacy}, Fig. \ref{fig:PLepsilon}(a)-(d), Fig. \ref{fig:exPLepsilon}(a)-(d), Fig. \ref{fig:EIEepsilon}(a)-(d), and Fig. \ref{fig:ISRepsilon}(a)-(d), ExpMech exhibits a lower PL, a lower expected PL, a higher EIE, and a lower ISR compared to the other four methods, indicating a higher level of privacy protection based on these metrics. However, this increased privacy comes at a significant cost to utility, as demonstrated in Table \ref{Tb:exp:QL} and Fig. \ref{fig:ULepsilon}(a)-(d). Importantly, within the mDP framework, data is generally considered adequately protected as long as the PL remains within a predetermined budget. All five methods meet this requirement, as shown in Table \ref{Tb:exp:privacy} and Fig. \ref{fig:PLepsilon}(a)-(d). Beyond this point, further maximizing indistinguishability is often unnecessary, with the primary focus shifting toward minimizing utility loss.

Additionally, we find that EIE decreases while ISR increases as $\epsilon$ becomes larger across all figures. This is because a higher $\epsilon$ corresponds to less deviation between obfuscated and real locations. Consequently, this results in a smaller expected distortion between the estimated location (as inferred by attackers) and the real location, and increases the likelihood of the estimated location matching the real location.}}

\vspace{-0.00in}
\subsection{Computation Efficiency Evaluation}
\label{subsec:efficiency}
\vspace{-0.00in}
Finally, we evaluate the computational efficiency of perturbation matrix optimization and perturbed record selection. 
We use the MATLAB optimization toolbox \texttt{linearprog} to solve the LP problem formulated in 
Eq.~(\ref{eq:C-mDPObjred})--(\ref{eq:CLPconstraint1red}). 

Fig.~\ref{fig:LPtime}(a) reports the computation time (in seconds) for solving the LP with \texttt{linearprog} as a function of the privacy budget $\epsilon$ (km$^{-1}$) on the Rome and Porto datasets. Overall, the LP solving time remains low and stable, ranging from $\approx 8.07$\,s to $\approx 11.53$\,s across all settings. Specifically, for Rome, the runtime across different privacy budget is approximately 8-12 seconds, with an average of $\approx 10.24$ seconds.
For Porto, the corresponding runtime is approximately 8-10 seconds, with an average of $\approx 9.34$ seconds.
These results indicate that the LP-based perturbation matrix optimization is computationally efficient (on the order of $\sim$10 seconds per solve) and does not exhibit a sharp runtime increase as $\epsilon$ varies.

Overall, the figure demonstrates that solving the perturbation matrix optimization problem via \texttt{linearprog} is computationally efficient and practically feasible, as the runtime remains stable and lightweight across all evaluated settings.

Note that, both perturbation matrix and utility loss can be precomputed, indicating such computation time is acceptable. Given the perturbation perturbation matrix, we then measure the computation time for perturbed record selection and depict the results in Fig. \ref{fig:LPtime}(b). From the figure, we observe that the computation time doesn't exceed  1.10 milliseconds, while the average computation time is 1.01 milliseconds, which is also acceptable for vehicle-based spatial crowdsourcing. 


\vspace{-0.00in}
\section{Related Works}
\vspace{-0.00in}
\label{sec:related}
\textbf{mDP}. mDP was originally introduced in the domain of location privacy, requiring ``geo-indistinguishability'' for each pair of nearby locations \cite{Andres-CCS2013}. In this context, ``neighboring records'' are defined based on their Euclidean distance, diverging from the original DP classification, which relies on the Hamming distance of databases. Currently, mDP has been explored using various metric choices, including Haversine distance and Euclidean distance in geo-location data \cite{Pappachan-EDBT2023}, as well as Hyperbolic distance \cite{Feyisetan-ICDM2019} and Levenshtein distance \cite{Jones-ACC2018} in text data. Considering 
the varied sensitivity of utility loss to perturbation in general distance metric spaces, a widely used paradigm of finding the optimal mDP is to formulate it as an LP problem \cite{Bordenabe-CCS2014}, which, however, may suffer from the polynomial explosion of variables and constraints \cite{ImolaUAI2022}. Recent efforts such as \cite{Qiu-CIKM2020, Qiu-TMC2020, Qiu-IJCAI2024, Liu-CCS2025} have enhanced the computational efficiency of LP-based methods by employing optimization decomposition. 
Another line of work applies exponential noise \cite{Carvalho2021TEMHU} or combines the predefined noise with LP \cite{ImolaUAI2022, qiu-interpolation2026}, which achieves higher computation efficiency but at the expense of compromising data utility. 

\vspace{0.02in}
\noindent \textbf{Context-aware data perturbation}. Many research works have extended DP to context-aware data perturbation. A notable such framework is called \emph{Pufferfish privacy} \cite{Kifer-PODS2012}, which allows data protectors to define privacy for their own data-sharing demands by considering datasets' background information. DP actually can be viewed as a specific instance of Pufferfish privacy when the correlation between secret records is not considered. Inspired by Pufferfish, He et al. \cite{He-SIGMOD2014} proposed the \emph{Blowfish privacy} framework by allowing users to specify the secret records to protect. Following these frameworks, several other studies have focused on describing the dataset background information using various statistical models such as parameter-based descriptions of correlation \cite{Chen-VLDB2014, Zhu-TIFS2015, Liu-NDSS2016},  Gaussian correlation models \cite{Yang-SIGMOD2015, Chen-TBD2021}, and Markov chain models \cite{Qiu-SIGSPATIAL2022}.

These related works differ from ours in two key aspects. Firstly, they primarily examine how context information (data correlation) affects DP, whereas \emph{our focus is on how context information influences utility loss caused by data perturbation}. Secondly, they rely on the Hamming distance to define neighboring databases, as is typical in traditional DP, whereas \emph{we consider mDP, where the distance between secret records is defined within a general metric space}. This extension to a metric space in mDP introduces unique challenges, as it requires more precise noise control due to the finer-grained indistinguishability between secret data and the varying utility loss caused by noise in different directions and magnitudes.

\vspace{0.02in}
\noindent \textbf{Spatiotemporal location privacy}. Some recent studies address context-aware location privacy by analyzing the spatiotemporal correlations in mobile users' reported locations, focusing either on data from a single user across multiple time points (e.g., trajectory data) \cite{LIAO2007311, Xu-WWW2017, Cao-ICDE2017, Emrich-ICDE2012, Li-Sigspatial2008, Arain-MTA2018} or on data from multiple users \cite{Cao-ICDE2019, Li-SigSpatial2017}. Many of these works assume that users' movements follow a first-order Markov process \cite{LIAO2007311, Emrich-ICDE2012}, where each user's current location depends on the previous location. For example, Liao et al. \cite{LIAO2007311} applied a hierarchical Markov model to predict a user’s trajectory based on their visited places and observed temporal patterns. 

Our framework is distinct from these works as we discard the Markovian assumption, proposing a more general model that captures correlations in protected data. Our empirical study further reveals that vehicle mobility patterns may not strictly adhere to a first-order Markov process, with Markov blankets for locations showing strong dependence on environmental factors such as regional differences and time.  

\vspace{-0.05in}
\section{Discussion and Conclusion}
\vspace{-0.00in}
\label{sec:conclude}
In this paper, we proposed a new mDP framework, C-mDP, by considering the impact of context information on data perturbation. Considering the high computation load of C-mDP when solving it as an LP, we reduce its complexity by identifying the CI relationship between secret records and context variables. To illustrate practical applicability, we conduct a case study by applying C-mDP to protect vehicles' location privacy. The experimental results demonstrate the superiority of C-mDP over the existing mDP methods. 



We identify several promising directions to further improve the proposed C-mDP framework. First, our current perturbation optimization assumes data utility depends on the distribution of secret data (Assumption \ref{assu:datautility}). To address potential correlations in downstream data processing, we plan to remove this assumption and further explore CI relationships among secret records. Section \ref{sec:discussion} of the Appendix outlines how to adapt the C-mDP framework when this assumption is relaxed. Second, we aim to develop context-aware threat models where attackers use context data to narrow the search space of secret data, improving inference accuracy. As a countermeasure, we will constrain perturbed data within a range (e.g., identified by deep generative models) where it becomes difficult for attackers using context information to distinguish perturbed data from real data.

{\revisiondone Finally, our current evaluation uses public, de-identified trajectory data to estimate population-level mobility statistics (e.g., priors/transition tendencies and Markov blanket structure) and to compute the offline perturbation matrix $\mathbf{Q}$. In practical deployments, however, such statistics (or auxiliary predictive models used to estimate them) may be learned from sensitive, non-public user data. An important direction for future work is therefore to \emph{privatize the offline mechanism-design pipeline} itself, for example, by learning the required priors/transition statistics (or training the corresponding neural models) under differential privacy (e.g., by releasing DP-sanitized aggregates), and then composing that privacy loss with the output-privacy budget that governs the released perturbed outputs. This extension would provide end-to-end privacy protection covering both offline model/parameter learning and online per-user perturbation.}

\section{Acknowledgements}
The authors used ChatGPT 5.2 to revise the text in Sections \ref{sec:intro}--\ref{sec:conclude} to correct typos, grammatical errors, and awkward phrasing. 



\DEL{
\vspace{-0.00in}
\begin{table}[t]
\caption{Performance of different data perturbation methods.}
\vspace{-0.10in}
\label{Tb:exp:QL}
\centering
\footnotesize 
\begin{tabular}{ c|c|c|c}
\hline
\hline
\multicolumn{1}{ c  }{Perturbation}& \multicolumn{3}{ c }{Metrics} \\
\cline{2-4}
\multicolumn{1}{ c|  }{algorithms}
 & \multicolumn{1}{ |c| }{Inf.  error}&\multicolumn{1}{ |c| }{Expected PL }
 & \multicolumn{1}{ |c }{Utility loss}
 \\ 
\hline
\hline
\multicolumn{1}{ c|  }{ \textbf{LP+C-mDP}} & \textbf{0.206$\pm$0.478} & \textbf{0.062$\pm$0.014} & \textbf{0.0065$\pm$0.016} \\ 
\multicolumn{1}{ c|  }{ LP+Markov} & 0.212$\pm$0.511 & x.xxx$\pm$x.xxx & 0.0071$\pm$0.020 \\ 
\multicolumn{1}{ c|  }{ LP} & 0.298$\pm$0.651 & x.xxx$\pm$x.xxx & 0.0135$\pm$0.036  \\
\multicolumn{1}{ c|  }{ Exp} & 0.394$\pm$0.634 & x.xxx$\pm$x.xxx & 0.016$\pm$0.035  \\ 
\multicolumn{1}{ c|  }{ TEM} & x.xxx$\pm$x.xxx & x.xxx$\pm$x.xxx & x.xxx$\pm$x.xxx  \\ 
\multicolumn{1}{ c|  }{ ExpMech} & x.xxx$\pm$x.xxx & x.xxx$\pm$x.xxx & x.xxx$\pm$x.xxx  \\ 
\hline
\end{tabular}
\vspace{-0.1in}
\end{table}}


\appendix

\vspace{-0.00in}
\section*{Appendix}

\section{Math Notations}
\label{app:sec:notations}
\vspace{-0.00in}
\begin{table}[h]
\caption{Main notations and their descriptions.}
\vspace{-0.10in}
\label{Tb:Notationmodel}
\centering
\small 
\begin{tabular}{l l}
\hline
\hline
Symbol                  & Description \\
\hline
$Q$                  & Perturbation function \\
$\mathcal{X}$                  & Secret data set (domain of the function $Q$)\\
$\mathcal{Y}$                  & Perturbed data set (range of the function $Q$) \\
$\mathcal{V}$                  & Contextual information space \\
$X$                  & Random variable of secret data  \\
${V}_{X}$                  & Context variables of secret data $X$ \\
${B}_{X}$                  & Markov blanket of secret data $X$ \\
$X_t$                  & (Case study) Random variable to represent the target \\ 
& vehicle's location at time slot $t$  \\
$x_i$                  & Secret record $i$ \\
$y_k$                  & Perturbed record $k$ \\
$\epsilon$                  & Privacy budget \\
$d_{x_i, x_j}$                  & Distance between secret records $x_i$ and $x_j$ \\
$\mathbf{Q}$ & Perturbation matrix \\
$\mathcal{L}\left(\mathbf{Q}\right)$                 & Loss function of the perturbation matrix $\mathbf{Q}$ \\
$q_{x_i,y_k}$                  & Probability of selecting $y_k$ as the perturbed data given the  \\ 
&  real record $x_i$ \\
$q_{(x_i,\mathbf{v}),y_k}$                  & Probability of selecting $y_k$ as the perturbed data given the \\ 
&   real record $x_i$ and the context data $\mathbf{v}$\\
$q_{(x_i,\mathbf{b}),y_k}$                  & Probability of selecting $y_k$ as the perturbed data given the \\ 
&    real record $x_i$ and the Markov blanket $\mathbf{b}$\\
$c_{x_i,y_k}$                  &  Data utility loss caused by the perturbed record $y_k$ when \\
&  the real record is $x_i$ \\
\hline
\end{tabular}
\normalsize
\end{table}
\vspace{-0.00in}

\section{Omitted Proofs}
\label{sec:proofs}
\subsection{Proof of Proposition \ref{prop:ContextDP}}
\begin{reproposition}
To satisfy the context-aware PL bound in Eq. (\ref{eq:PLcontextbound}), it is sufficient to enforce the C-mDP constraints
\vspace{-0.05in}
\begin{equation}
q_{(x_i,\mathbf{v}),y_k} - e^{\epsilon d_{(x_i,\mathbf{v}),(x_j,\mathbf{v}')}} \cdot q_{(x_j,\mathbf{v}'),y_k} \leq 0, ~\forall y_k \in \mathcal{Y}, 
\vspace{-0.00in}
\end{equation}
for each neighboring records  $x_i, x_j \in \mathcal{X}$ and all their possible context data $\mathbf{v}, \mathbf{v}' \in \mathcal{V}$. 
\end{reproposition}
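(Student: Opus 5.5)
The plan is to rewrite the posterior ratio in Eq.~(\ref{eq:PLcontext}) using Bayes' rule, so that the prior ratio cancels and only a likelihood ratio remains. That likelihood ratio is exactly what the C-mDP constraints control.

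Fix neighboring augmented records $(x_i,\mathbf{v})$ and $(x_j,\mathbf{v}')$ with positive priors, and any output $y_k$ with $\Pr[Q(X,V_X)=y_k]>0$. By Bayes' formula,
\begin{equation*}
\Pr\left[(X,V_X)=(x_i,\mathbf{v})\mid Q(X,V_X)=y_k\right]=\frac{p_{(x_i,\mathbf{v})}\, q_{(x_i,\mathbf{v}),y_k}}{\Pr[Q(X,V_X)=y_k]},
\end{equation*}
and the analogous expression holds for $(x_j,\mathbf{v}')$. Taking the quotient of the two posteriors, the normalizing term $\Pr[Q(X,V_X)=y_k]$ cancels. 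Dividing further by the prior ratio $p_{(x_i,\mathbf{v})}/p_{(x_j,\mathbf{v}')}$ leaves
\begin{equation*}
\frac{q_{(x_i,\mathbf{v}),y_k}}{q_{(x_j,\mathbf{v}'),y_k}}.
\end{equation*}
So the quantity inside the absolute value and logarithm in Eq.~(\ref{eq:PLcontext}) equals $\ln\bigl(q_{(x_i,\mathbf{v}),y_k}/q_{(x_j,\mathbf{v}'),y_k}\bigr)$.

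Next I apply the C-mDP constraint (\ref{eq:C-mDP}) in both directions. Because $d$ is symmetric, $d_{(x_i,\mathbf{v}),(x_j,\mathbf{v}')}=d_{(x_j,\mathbf{v}'),(x_i,\mathbf{v})}$, so the constraint is imposed for the ordered pair and for its swap. This gives
\begin{equation*}
e^{-\epsilon d_{(x_i,\mathbf{v}),(x_j,\mathbf{v}')}} \le \frac{q_{(x_i,\mathbf{v}),y_k}}{q_{(x_j,\mathbf{v}'),y_k}} \le e^{\epsilon d_{(x_i,\mathbf{v}),(x_j,\mathbf{v}')}}.
\end{equation*}
Hence the absolute log ratio is at most $\epsilon\, d_{(x_i,\mathbf{v}),(x_j,\mathbf{v}')}$. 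Taking the supremum over $y_k$ yields Eq.~(\ref{eq:PLcontextbound}).

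The main obstacle is the degenerate cases, which must be handled before the ratio manipulations are legitimate.
\begin{itemize}
\item \textbf{Zero entries of $\mathbf{Q}$.} If $q_{(x_j,\mathbf{v}'),y_k}=0$, the constraint forces $q_{(x_i,\mathbf{v}),y_k}=0$, and conversely by the swapped constraint. So for a given $y_k$ either both entries are zero or both are positive. Outputs where both are zero have zero posterior for both hypotheses, so they contribute nothing to the supremum.
\item \textbf{Outputs and priors of probability zero.} Outputs with $\Pr[Q(X,V_X)=y_k]=0$ are excluded from the supremum, since the posterior is undefined there. Records with zero prior are likewise excluded, because the prior ratio is undefined.
\end{itemize}
With these conventions stated, the remaining steps are direct.
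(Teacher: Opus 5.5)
Your proposal is correct and takes essentially the same route as the paper's proof. In both, Bayes' rule cancels the normalizer and the prior ratio, leaving $\ln\bigl(q_{(x_i,\mathbf{v}),y}/q_{(x_j,\mathbf{v}'),y}\bigr)$. This is bounded on both sides by applying the C-mDP constraint to the pair and its swap, and zero entries are handled by noting that the constraint forces both to vanish together. Your explicit remark that the swap relies on the symmetry of $d$ is a small clarification that the paper leaves implicit.
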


\begin{proof}
Let $Y$ denote the released (perturbed) output, i.e., $Y = Q(X,V_X)$.
For any $(x_i,\mathbf{v}) \in \mathcal{X}\times\mathcal{V}$ and any $y\in\mathcal{Y}$, by Bayes' rule,
\begin{eqnarray}
\label{eq:posterior-joint}
&& \Pr\left[(X,V)=(x_i,\mathbf{v}) \mid Y=y\right] \\
&=&
\frac{\Pr\left[(X,V)=(x_i,\mathbf{v})\right]\Pr\left[Y=y \mid (X,V)=(x_i,\mathbf{v})\right]}{\Pr[Y=y]} \\
&=& 
\frac{p_{(x_i,\mathbf{v})} \cdot q_{(x_i,\mathbf{v}),y}}{\Pr[Y=y]}.
\end{eqnarray}
Hence, for any $(x_i,\mathbf{v})$ and $(x_j,\mathbf{v}')$ with $p_{(x_j,\mathbf{v}')}>0$ and any $y\in\mathcal{Y}$ such that
$q_{(x_j,\mathbf{v}'),y}>0$, we have
\begin{eqnarray}
\frac{\Pr\left[(X,V)=(x_i,\mathbf{v}) \mid Y=y\right]}{\Pr\left[(X,V)=(x_j,\mathbf{v}') \mid Y=y\right]}
=
\frac{p(x_i,\mathbf{v}) \cdot q_{(x_i,\mathbf{v}),y}}{p(x_j,\mathbf{v}') \cdot q_{(x_j,\mathbf{v}'),y}},
\label{eq:post-ratio}
\end{eqnarray}
and hence 
\begin{eqnarray}
&& \ln\left(
\frac{\Pr\left[(X,V)=(x_i,\mathbf{v}) \mid Y=y\right]}{\Pr\left[(X,V)=(x_j,\mathbf{v}') \mid Y=y\right]}
\right)
-\ln\left(\frac{p(x_i,\mathbf{v})}{p(x_j,\mathbf{v}')}\right)
\\
&=&
\ln\left(\frac{q_{(x_i,\mathbf{v}),y}}{q_{(x_j,\mathbf{v}'),y}}\right).
\label{eq:log-diff}
\end{eqnarray}

Now suppose the C-mDP constraints in Eq.~(\ref{eq:C-mDP}) hold, then whenever $q_{(x_j,\mathbf{v}'),y}>0$, Eq.~(\ref{eq:C-mDP}) implies
\begin{equation}
\ln\left(\frac{q_{(x_i,\mathbf{v}),y}}{q_{(x_j,\mathbf{v}'),y}}\right)
\le
\epsilon d_{(x_i,\mathbf{v}),(x_j,\mathbf{v}')}.
\label{eq:log-upper}
\end{equation}
Moreover, applying Eq.~(\ref{eq:C-mDP}) again but with the pair swapped gives
\begin{eqnarray}
&& q_{(x_j,\mathbf{v}'),y} \le e^{\epsilon d_{(x_i,\mathbf{v}),(x_j,\mathbf{v}')}}q_{(x_i,\mathbf{v}),y}
\\
&\Rightarrow&
\ln\left(\frac{q_{(x_i,\mathbf{v}),y}}{q_{(x_j,\mathbf{v}'),y}}\right)
\ge
-\epsilon d_{(x_i,\mathbf{v}),(x_j,\mathbf{v}')}.
\label{eq:log-lower}
\end{eqnarray}
Combining Eq.~(\ref{eq:log-upper}) and Eq.~(\ref{eq:log-lower}) yields, for all $y$ with
$q_{(x_i,\mathbf{v}),y}>0$ and $q_{(x_j,\mathbf{v}'),y}>0$,
\begin{equation}
\left|
\ln\left(\frac{q_{(x_i,\mathbf{v}),y}}{q_{(x_j,\mathbf{v}'),y}}\right)
\right|
\le
\epsilon d_{(x_i,\mathbf{v}),(x_j,\mathbf{v}')}.
\label{eq:q-bound}
\end{equation}
(If $q_{(x_j,\mathbf{v}'),y}=0$, then Eq.~(\ref{eq:C-mDP}) forces $q_{(x_i,\mathbf{v}),y}=0$ as well, so such $y$ does not affect the supremum.)

Finally, substituting Eq.~(\ref{eq:q-bound}) into Eq.~(\ref{eq:log-diff}) and taking the supremum over $y\in\mathcal{Y}$ gives
\[
\mathrm{PL}\big((x_i,\mathbf{v}),(x_j,\mathbf{v}')\big)
\le
\epsilon d_{(x_i,\mathbf{v}),(x_j,\mathbf{v}')},
\]
which is exactly the context-aware PL bound in Eq.~(\ref{eq:PLcontextbound}).
\end{proof}

\DEL{
\begin{proof}
Given the observation of $Q(X, V_X) = y$ and the context data $V_X = \mathbf{v}$, the posterior ratio of $X = x_i$ and $X = x_j$ is calculated by
\normalsize
\begin{eqnarray}
&& \frac{\Pr\left[X = x_i | Q(X, V_X) = y, {V}_X = \mathbf{v}\right]}{\Pr\left[X = x_j| Q(X, V_X) = y, {V}_X = \mathbf{v}\right]} 
\\ \nonumber 
&=& \frac{\Pr\left[Q(X, V_X) = y|X = x_i, {V}_X = \mathbf{v}\right]\Pr\left[X = x_i | {V}_X = \mathbf{v}\right]}{\Pr\left[Q(X, V_X) = y|X = x_j, {V}_X = \mathbf{v}\right]\Pr\left[X = x_j|{V}_X = \mathbf{v}\right]} \\ 
&\leq & e^{\epsilon d_{(x_i, \mathbf{v}), (x_j,\mathbf{v}')}} \frac{\Pr\left[X=x_i| {V}_X = \mathbf{v}\right]}{\Pr\left[X=x_j|{V}_X = \mathbf{v}\right]} \\
&=& e^{\epsilon d_{(x_i, \mathbf{v}), (x_j,\mathbf{v}')}} \frac{p_{x_i|\mathbf{v}}}{p_{x_j|\mathbf{v}}}
\end{eqnarray}
\normalsize
implying that $\forall x_i, x_j \in \mathcal{X}$, $\mathrm{PL}\left[(x_i, \mathbf{v}), (x_j, \mathbf{v}')\right] \leq \epsilon d_{(x_i, \mathbf{v}), (x_j,\mathbf{v}')}$. The proof is completed. 
\end{proof}}

\subsection{Proof of Proposition \ref{prop:CDcorrect}}
\label{subsec:proofPLorder}
\begin{reproposition}
(PL guarantee)
If a perturbation matrix $\mathbf{Q}$ follows the CD policy and satisfies the corresponding mDP constraints:  
\vspace{-0.07in}
\begin{equation}
q_{(x_i,\mathbf{b}),y_k} - e^{\epsilon d_{(x_i,\mathbf{b}), (x_j,\mathbf{b}')}} \cdot q_{(x_j,\mathbf{b}'),y_k} \leq 0, ~\forall x_i, y_k, \mathbf{b}, \mathbf{b}' 
\vspace{-0.00in}
\end{equation}
where $d_{(x_i,\mathbf{b}), (x_j,\mathbf{b}')} \leq d_{(x_i, \mathbf{v}), (x_j,\mathbf{v}')}$, 
it is \textbf{sufficient} for $\mathbf{Q}$ to achieve the bounded context-aware PL as defined in Eq. (\ref{eq:PLcontextbound}). 
\end{reproposition}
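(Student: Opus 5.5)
The plan is to reduce Proposition~\ref{prop:CDcorrect} to Proposition~\ref{prop:ContextDP}. That is, we show that a matrix $\mathbf{Q}$ which follows the CD policy and satisfies Eq.~(\ref{eq:C-mDPred}) also satisfies the full C-mDP constraints in Eq.~(\ref{eq:C-mDP}) on the augmented domain $\mathcal{X}\times\mathcal{V}$. Proposition~\ref{prop:ContextDP} then delivers the context-aware PL bound in Eq.~(\ref{eq:PLcontextbound}) directly, so no new Bayesian computation is needed.

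\emph{Step 1.} Fix any two augmented secrets $(x_i,\mathbf{v})$ and $(x_j,\mathbf{v}')$ and any output $y_k$. Let $\mathbf{b}=\pi(\mathbf{v})$ and $\mathbf{b}'=\pi(\mathbf{v}')$ be the restrictions of the contexts to the Markov blanket $B_X$. By Definition~\ref{def:CD},
\[
q_{(x_i,\mathbf{v}),y_k}=q_{(x_i,\mathbf{b}),y_k}
\quad\text{and}\quad
q_{(x_j,\mathbf{v}'),y_k}=q_{(x_j,\mathbf{b}'),y_k}.
\]
So every entry of the full matrix is one of the reduced entries.

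\emph{Step 2.} Apply Eq.~(\ref{eq:C-mDPred}) to the pair $(x_i,\mathbf{b})$, $(x_j,\mathbf{b}')$, which gives
\[
q_{(x_i,\mathbf{v}),y_k}\le e^{\epsilon d_{(x_i,\mathbf{b}),(x_j,\mathbf{b}')}}\,q_{(x_j,\mathbf{v}'),y_k}.
\]
Then use the hypothesis $d_{(x_i,\mathbf{b}),(x_j,\mathbf{b}')}\le d_{(x_i,\mathbf{v}),(x_j,\mathbf{v}')}$. Because $\epsilon>0$ and $q\ge0$, the exponential factor is monotone in the distance, so
\[
q_{(x_i,\mathbf{v}),y_k}\le e^{\epsilon d_{(x_i,\mathbf{v}),(x_j,\mathbf{v}')}}\,q_{(x_j,\mathbf{v}'),y_k}.
\]
This is exactly Eq.~(\ref{eq:C-mDP}).

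It helps to note that the distance hypothesis holds automatically under the metric in Eq.~(\ref{eq:context_metric}). That metric is a sum of nonnegative weighted terms, and restricting to the blanket coordinates only drops terms, provided $d$ on $(x,\mathbf{b})$ is defined by the same formula summed over the blanket indices. I would state this to justify that the assumption is natural, not vacuous.

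\emph{Step 3.} Invoke Proposition~\ref{prop:ContextDP} to conclude $\mathrm{PL}((x_i,\mathbf{v}),(x_j,\mathbf{v}'))\le\epsilon d_{(x_i,\mathbf{v}),(x_j,\mathbf{v}')}$. That proposition already handles the zero-probability outputs and uses both orderings of the pair to bound the absolute value of the log ratio. We supply both orderings by applying Step~2 with the roles of $i$ and $j$ swapped, which is legitimate because $d$ is symmetric.

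The main obstacle is the \emph{neighbor relation}. Eq.~(\ref{eq:C-mDP}) is required only for pairs with $d_{(x_i,\mathbf{v}),(x_j,\mathbf{v}')}\le\eta$. If the reduced constraints are imposed only on reduced neighbors, i.e.\ pairs with $d_{(x_i,\mathbf{b}),(x_j,\mathbf{b}')}\le\eta$, coverage is still fine: the inequality $d_{\mathbf{b}}\le d_{\mathbf{v}}$ means every full-domain neighbor pair projects to a reduced neighbor pair. I would check this inclusion explicitly. If the statement is instead read with the constraint imposed for all pairs, as its quantifier suggests, there is nothing to check. The only other subtlety is that several $\mathbf{v}$ can share the same $\mathbf{b}$, including the case $\mathbf{b}=\mathbf{b}'$. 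This is harmless, since Step~1 is an equality of entries and the constraint for the pair $(x_i,\mathbf{b})$, $(x_j,\mathbf{b})$ is still covered.
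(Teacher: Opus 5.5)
Your proposal is correct and uses the same ingredients as the paper's proof. These are the CD-policy identity $q_{(x,\mathbf{v}),y}=q_{(x,\mathbf{b}),y}$, the reduced constraint applied in both orderings, monotonicity of $e^{\epsilon d}$ under $d_{(x_i,\mathbf{b}),(x_j,\mathbf{b}')}\le d_{(x_i,\mathbf{v}),(x_j,\mathbf{v}')}$, and the Bayes cancellation of the prior ratio. The difference is only in packaging: the paper repeats the Bayes computation inline and applies the distance inequality after taking logs, while you apply it first to recover the full C-mDP constraints and then cite Proposition~\ref{prop:ContextDP}, which is cleaner and also makes explicit the neighbor-coverage point that the paper leaves implicit.
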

\begin{proof}
Let $Y$ denote the released output, i.e., $Y = Q(X,V_X)$.
For any $(x,\mathbf{v})\in\mathcal{X}\times\mathcal{V}$ and any $y\in\mathcal{Y}$, Bayes' rule gives
\begin{equation}
\Pr \left[(X,V)=(x,\mathbf{v}) \mid Y=y\right]
=
\frac{p_{(x,\mathbf{v})}\, \cdot q_{(x,\mathbf{v}),y}}{\Pr[Y=y]}.
\label{eq:bayes-joint-red}
\end{equation}
Hence, for any $(x_i,\mathbf{v})$ and $(x_j,\mathbf{v}')$ with $p(x_j,\mathbf{v}')>0$ and any $y$ with
$q_{(x_j,\mathbf{v}'),y}>0$,
\begin{eqnarray}
&& \ln \left(
\frac{\Pr[(X,V)=(x_i,\mathbf{v})\mid Y=y]}{\Pr[(X,V)=(x_j,\mathbf{v}')\mid Y=y]}
\right)
-\ln \left(\frac{p(x_i,\mathbf{v})}{p(x_j,\mathbf{v}')}\right)
\\
&=&
\ln \left(\frac{q_{(x_i,\mathbf{v}),y}}{q_{(x_j,\mathbf{v}'),y}}\right).
\label{eq:logdiff-red}
\end{eqnarray}

Now assume $\mathbf{Q}$ follows the CD policy. Let $\mathbf{b}$ (resp., $\mathbf{b}'$) denote the Markov blanket
value associated with $\mathbf{v}$ (resp., $\mathbf{v}'$), i.e., $\mathbf{b}=\pi(\mathbf{v})$ and
$\mathbf{b}'=\pi(\mathbf{v}')$ for the corresponding projection $\pi:\mathcal{V}\to\mathcal{B}$.
By the CD policy, conditioning on $(X,\mathbf{b})$ makes the output independent of the remaining context, hence
\begin{equation}
q_{(x,\mathbf{v}),y} = q_{(x,\mathbf{b}),y},\quad \forall x\in\mathcal{X},\ \forall \mathbf{v}\in\mathcal{V},\ \forall y\in\mathcal{Y},
\label{eq:cd-eq}
\end{equation}
where $\mathbf{b}=\pi(\mathbf{v})$.

Since $\mathbf{Q}$ satisfies the reduced mDP constraints in Eq.~(\ref{eq:C-mDPred}), for all $y\in\mathcal{Y}$,
\begin{equation}
q_{(x_i,\mathbf{b}),y}
\le
\exp \big(\epsilon d_{(x_i,\mathbf{b}),(x_j,\mathbf{b}')}\big)\, \cdot q_{(x_j,\mathbf{b}'),y}.
\label{eq:red-upper}
\end{equation}
Moreover, applying Eq.~(\ref{eq:red-upper}) to the swapped pair $(x_j,\mathbf{b}')$ and $(x_i,\mathbf{b})$ yields
\begin{equation}
q_{(x_j,\mathbf{b}'),y}
\le
\exp \big(\epsilon d_{(x_i,\mathbf{b}),(x_j,\mathbf{b}')}\big)\, \cdot q_{(x_i,\mathbf{b}),y},
\label{eq:red-lower}
\end{equation}
and thus, for all $y$ with $q_{(x_i,\mathbf{b}),y}>0$ and $q_{(x_j,\mathbf{b}'),y}>0$,
\begin{equation}
\left|
\ln \left(\frac{q_{(x_i,\mathbf{b}),y}}{q_{(x_j,\mathbf{b}'),y}}\right)
\right|
\le
\epsilon d_{(x_i,\mathbf{b}),(x_j,\mathbf{b}')}.
\label{eq:red-logbound}
\end{equation}
(If $q_{(x_j,\mathbf{b}'),y}=0$, Eq.~(\ref{eq:red-upper}) forces $q_{(x_i,\mathbf{b}),y}=0$ as well, so such $y$
does not affect the supremum in the PL definition.)

Combining Eq.~(\ref{eq:cd-eq}) and Eq.~(\ref{eq:red-logbound}) gives
\begin{equation}
\left|
\ln \left(\frac{q_{(x_i,\mathbf{v}),y}}{q_{(x_j,\mathbf{v}'),y}}\right)
\right|
=
\left|
\ln \left(\frac{q_{(x_i,\mathbf{b}),y}}{q_{(x_j,\mathbf{b}'),y}}\right)
\right|
\le
\epsilon d_{(x_i,\mathbf{b}),(x_j,\mathbf{b}')}.
\label{eq:logbound-v}
\end{equation}
By the assumption $d_{(x_i,\mathbf{b}),(x_j,\mathbf{b}')} \le d_{(x_i,\mathbf{v}),(x_j,\mathbf{v}')}$, we further have
\begin{equation}
\left|
\ln \left(\frac{q_{(x_i,\mathbf{v}),y}}{q_{(x_j,\mathbf{v}'),y}}\right)
\right|
\le
\epsilon d_{(x_i,\mathbf{v}),(x_j,\mathbf{v}')}.
\label{eq:logbound-v2}
\end{equation}

Finally, substituting Eq.~(\ref{eq:logbound-v2}) into Eq.~(\ref{eq:logdiff-red}) and taking the supremum over
$y\in\mathcal{Y}$ yields the bounded context-aware posterior leakage in Eq.~(\ref{eq:PLcontextbound}).
Therefore, enforcing the reduced constraints in Eq.~(\ref{eq:C-mDPred}) under the CD policy is sufficient for
$\mathbf{Q}$ to achieve the context-aware PL bound.
\end{proof}

\DEL{
First, we introduce Lemma \ref{lem:obfindependent} as a preparation of the proof: 
\begin{lemma} 
\label{lem:obfindependent}
If $X  \perp \!\!\!\perp {B}_{X}^\mathrm{c} |{B}_{X}$, then $Q(X)  \perp \!\!\!\perp {B}_{X}^\mathrm{c} |{B}_{X}$. 
\end{lemma}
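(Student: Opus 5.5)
The statement to prove is Lemma~\ref{lem:obfindependent}: if $X \perp\!\!\!\perp B_X^{\mathrm c} \mid B_X$, then $Q(X) \perp\!\!\!\perp B_X^{\mathrm c} \mid B_X$.

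The plan is to compute the conditional law of the output $Y=Q(X)$ given $(B_X, B_X^{\mathrm c})$ by marginalizing over the secret $X$. We then show that this law does not depend on $B_X^{\mathrm c}$. The needed structural fact is that the randomization of $Q$ acts only on its input. Formally, $Y$ is generated from $X$ (and possibly $B_X$, as under the CD policy of Definition~\ref{def:CD}) with fresh internal randomness, independent of everything else. This gives the Markov chain $B_X^{\mathrm c} \to (X,B_X) \to Y$, that is,
\[
\Pr[Y=y \mid X=x, B_X=\mathbf{b}, B_X^{\mathrm c}=\mathbf{c}] = \Pr[Y=y \mid X=x, B_X=\mathbf{b}] = q_{(x,\mathbf{b}),y}.
\]
We will state this explicitly as the modelling assumption on $Q$, since the lemma is false without it.

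Fix $\mathbf{b}$ and $\mathbf{c}$ with $\Pr[B_X=\mathbf{b},B_X^{\mathrm c}=\mathbf{c}]>0$. By the law of total probability over the finite set $\mathcal{X}$,
\[
\Pr[Y=y\mid \mathbf{b},\mathbf{c}] = \sum_{x\in\mathcal{X}} \Pr[Y=y\mid x,\mathbf{b},\mathbf{c}]\;\Pr[X=x\mid \mathbf{b},\mathbf{c}].
\]
We handle the two factors separately. The first factor equals $q_{(x,\mathbf{b}),y}$ by the Markov-chain property above. The second factor equals $\Pr[X=x\mid \mathbf{b}]$ by the hypothesis $X \perp\!\!\!\perp B_X^{\mathrm c}\mid B_X$ (Definition~\ref{def:CI}). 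Substituting both gives
\[
\Pr[Y=y\mid \mathbf{b},\mathbf{c}] = \sum_{x} q_{(x,\mathbf{b}),y}\,\Pr[X=x\mid\mathbf{b}].
\]
This expression does not involve $\mathbf{c}$. Summing the same identity over $\mathbf{c}$, weighted by $\Pr[\mathbf{c}\mid\mathbf{b}]$, shows that it also equals $\Pr[Y=y\mid\mathbf{b}]$. Therefore $\Pr[Y\mid B_X, B_X^{\mathrm c}] = \Pr[Y\mid B_X]$, which is exactly $Q(X)\perp\!\!\!\perp B_X^{\mathrm c}\mid B_X$.

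The main obstacle is justifying the first factor, namely that $Q$'s output depends on the context only through $(X,B_X)$. If $Q$ were allowed to read $B_X^{\mathrm c}$ directly, the conclusion could fail. So the proof must rest on the CD policy, Eq.~(\ref{eq:CDpolicy}), or on the fact that $Q$ takes only $X$ as input. I would phrase the lemma's hypothesis to include this explicitly.

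Positivity of the conditioning events is only a technicality. We restrict attention to pairs $(\mathbf{b},\mathbf{c})$ with positive probability, as Definition~\ref{def:CI} requires.
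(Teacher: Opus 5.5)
Your proof is correct and follows essentially the same route as the paper. Both expand $\Pr[Q=y\mid \cdot\,]$ by total probability over $x$, replace the first factor using the CD policy ($Q \perp\!\!\!\perp B_X^{\mathrm c}\mid \{X,B_X\}$) and the second using the hypothesis $X\perp\!\!\!\perp B_X^{\mathrm c}\mid B_X$, then marginalize out $x$ to get $\Pr[Q=y\mid B_X]$. Your point that the Markov-chain/CD assumption must be stated explicitly matches the paper, which invokes exactly that fact in its first step.
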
}
\DEL{
\begin{proof} 
We first derive the posterior of $X = x$ given the observation of $Q(X, V_X) = y$ and the context data $V_X = \mathbf{v}$: 
\normalsize
\begin{eqnarray}
&& \Pr\left[X = x| Q(X, V_X) = y, {V}_X = \mathbf{v}\right] \\
&=& \frac{\Pr\left[X = x, {V}_X = \mathbf{v}\right]\Pr\left[Q(X, V_X) = y| X = x, {V}_X= \mathbf{v}\right]}{\Pr\left[Q(X, V_X) = y, {V}_X= \mathbf{v}\right]}\\  \nonumber
 &=& \frac{\Pr\left[X = x, {V}_X= \mathbf{v}\right]\overbrace{\Pr\left[Q(X, V_X) = y| X = x, {B}_{X}= \mathbf{b}\right]}^{\mbox{CD policy (Eq. (\ref{eq:CDpolicy}))}}}{\Pr\left[Q(X, V_X) = y, {V}_X= \mathbf{v}\right]}
\end{eqnarray}
\normalsize
Then, $\forall x_i, x_j \in \mathcal{X}$, we can find the posterior ratio of $X = x_i$ and $X = x_j$ given the observation of $Q(X, V_X) = y$ and the context data $V_X = \mathbf{v}$ is upper bounded by
\normalsize
\begin{eqnarray}
\label{eq:proofposteriorratio1}
&& \frac{\Pr\left[X = x_i | Q(X, V_X) = y, {V}_X = \mathbf{v}\right]}{\Pr\left[X = x_j| Q(X, V_X) = y, {V}_X = \mathbf{v}\right]} \\ \nonumber 
&=& \frac{\Pr\left[X=x_i| {V}_X = \mathbf{v}\right]\Pr\left[Q(X, V_X) = y| X =x_i, {B}_{X}= \mathbf{b}\right]}{\Pr\left[X=x_j|  {V}_X = \mathbf{v}\right]\Pr\left[Q(X, V_X) = y| X = x_j, {B}_{X}= \mathbf{b}\right]} \\ 
&\leq & e^{\epsilon d_{(x_i,\mathbf{b}), (x_j,\mathbf{b}')}} \frac{\Pr\left[X=x_i| {V}_X = \mathbf{v}\right]}{\Pr\left[X=x_j|{V}_X = \mathbf{v}\right]} \\
&\leq & 
e^{\epsilon d_{(x_i, \mathbf{v}), (x_j,\mathbf{v}')}} \frac{p_{x_i|\mathbf{v}}}{p_{x_j|\mathbf{v}}}
\end{eqnarray}
\normalsize
Similarly, we can prove that $\forall  x_i, x_j \in \mathcal{X}$
\normalsize
\begin{eqnarray}
\label{eq:proofposteriorratio2}
e^{-\epsilon d_{(x_i, \mathbf{v}), (x_j,\mathbf{v}')}} \frac{p_{(x_i,\mathbf{b})}}{p_{x_j|\mathbf{b}}}  \leq \frac{\Pr\left[X = x_i | Q(X, V_X) = y, {V} = \mathbf{v}\right]}{\Pr\left[X = x_j| Q(X, V_X) = y, {V} = \mathbf{v}\right]}. 
\end{eqnarray}
\normalsize
According to Eq. (\ref{eq:proofposteriorratio1})--(\ref{eq:proofposteriorratio2}), we can also obtain that 
\normalsize
\begin{eqnarray}
\nonumber  \frac{\Pr\left[X = x_i | Q(X, V_X) = y, {V} = \mathbf{v}\right]}{\Pr\left[X = x_j| Q(X, V_X) = y, {V}= \mathbf{v}\right]}\frac{p_{x_i|\mathbf{v}}}{p_{x_j|\mathbf{v}}} \in [e^{-\epsilon d_{(x_i, \mathbf{v}), (x_j,\mathbf{v}')}}, e^{\epsilon d_{(x_i, \mathbf{v}), (x_j,\mathbf{v}')}}]
\end{eqnarray}
\normalsize
implying that $\forall x_i, x_j \in \mathcal{X}$, $\mathrm{PL}\left[(x_i, \mathbf{v}), (x_j, \mathbf{v}')\right] \leq \epsilon d_{(x_i, \mathbf{v}), (x_j,\mathbf{v}')}$. The proof is completed. 
\end{proof}}

\subsection{Proof of Proposition \ref{propo:CDutility}}

\begin{reproposition}
Given that Assumption \ref{assu:datautility} holds, the loss function $\mathcal{L}(\mathbf{Q})$ in Eq. (\ref{eq:CA-UL}) can be rewritten as the following reduced form:
\begin{equation}
\mathcal{L}(\mathbf{Q}) = \sum_{(x_i, \mathbf{b}), y_k}  p_{(x_i, \mathbf{b})} \cdot c_{(x_i,\mathbf{b}),y_k} \cdot q_{(x_i,\mathbf{b}),y_k}. 
\end{equation}
\end{reproposition}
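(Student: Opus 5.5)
The plan is to start from the full loss in Eq.~(\ref{eq:CA-UL}), group the context values $\mathbf{v}$ by their Markov-blanket projection $\mathbf{b}=\pi(\mathbf{v})$, and then marginalize out the complement $B_X^{\mathrm{c}}$. Concretely, I will rewrite the sum over $(x_i,\mathbf{v})$ as a double sum over $\mathbf{b}\in\mathcal{B}$ and over $\mathbf{v}\in\pi^{-1}(\mathbf{b})$:
\begin{equation}
\nonumber
\mathcal{L}(\mathbf{Q})=\sum_{x_i,\mathbf{b},y_k}\ \sum_{\mathbf{v}\in\pi^{-1}(\mathbf{b})} p_{(x_i,\mathbf{v})}\, c_{(x_i,\mathbf{v}),y_k}\, q_{(x_i,\mathbf{v}),y_k}.
\end{equation}
This regrouping is legitimate because the fibres $\pi^{-1}(\mathbf{b})$ partition $\mathcal{V}$ and all spaces are finite.

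Within each inner sum, the two $\mathbf{v}$-dependent factors other than the prior become constant in $\mathbf{v}$. First, the CD policy (Definition~\ref{def:CD}) gives $q_{(x_i,\mathbf{v}),y_k}=q_{(x_i,\mathbf{b}),y_k}$ for every $\mathbf{v}\in\pi^{-1}(\mathbf{b})$. Second, Assumption~\ref{assu:datautility} gives $c_{(x_i,\mathbf{v}),y_k}=c_{(x_i,\mathbf{b}),y_k}$. Both factors can therefore be pulled out of the inner sum, leaving $\sum_{\mathbf{v}\in\pi^{-1}(\mathbf{b})}p_{(x_i,\mathbf{v})}$. By the law of total probability, this equals $\Pr[X=x_i,B_X=\mathbf{b}]=p_{(x_i,\mathbf{b})}$, since the event $\{B_X=\mathbf{b}\}$ is the disjoint union of the events $\{V_X=\mathbf{v}\}$ over that fibre. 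Substituting back yields the reduced form stated in the proposition.

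The main obstacle is justifying the constancy of $c$ across the fibre. Read literally, Assumption~\ref{assu:datautility} writes $c$ as $h(p_{(x_i,\mathbf{v})},q_{(x_i,\mathbf{v}),y_k};\boldsymbol\theta)$, and the joint prior $p_{(x_i,\mathbf{v})}$ does vary with $\mathbf{v}$. I would handle this by interpreting the prior argument of $h$ as the predictive distribution of $X$ given the context, which is the quantity actually used in Eq.~(\ref{eq:UL}). By the Markov blanket property (Definition~\ref{def:MarkovBlanket}), $X\perp\!\!\!\perp B_X^{\mathrm{c}}\mid B_X$, so $\Pr[X=x\mid V_X=\mathbf{v}]=\Pr[X=x\mid B_X=\mathbf{b}]$, and this quantity depends on $\mathbf{v}$ only through $\mathbf{b}$. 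Together with the CD policy for $q$, this shows that every argument of $h$ is constant on the fibre, which is exactly the chain of equalities asserted in Assumption~\ref{assu:datautility}.

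I would state this interpretation explicitly rather than rely on the literal joint prior. The rest of the argument is routine bookkeeping of finite sums.
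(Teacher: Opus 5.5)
Your argument is correct, but it collapses the fibre sum through a different factor than the paper does.

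The paper pulls $p$ and $c$ out of the inner sum. It reads the prior as the conditional $\Pr[X=x_i\mid V_X=\mathbf{v}]$, which the Markov-blanket property makes depend on $\mathbf{v}$ only through $\mathbf{b}$. It then disposes of the remaining sum over mechanism entries with the ``marginalisation'' identity $\sum_{\mathrm{sub}_{\mathcal{B}}(\mathbf{v})=\mathbf{b}}q_{(x_i,\mathbf{v}),y_k}=q_{(x_i,\mathbf{b}),y_k}$ of Eq.~(\ref{eq:margin1}). You instead pull out $q$ (by the CD policy) and $c$ (by Assumption~\ref{assu:datautility}), and let the \emph{joint} prior marginalize by total probability.

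Your route is the sounder one. It uses $p_{(x_i,\mathbf{v})}$ as the joint prior, exactly as defined after Eq.~(\ref{eq:CA-UL}), and $p_{(x_i,\mathbf{b})}=\sum_{\mathbf{v}\in\pi^{-1}(\mathbf{b})}p_{(x_i,\mathbf{v})}$, as in Appendix~\ref{sec:discussion}. The paper's identity, by contrast, is an unweighted sum of conditional probabilities; a genuine marginalization needs weights $\Pr[V_X=\mathbf{v}\mid X=x_i,B_X=\mathbf{b}]$. Taken literally together with the CD policy, it would give $|\pi^{-1}(\mathbf{b})|\,q_{(x_i,\mathbf{b}),y_k}$. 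The correctly weighted identity, $\sum_{\mathbf{v}\in\pi^{-1}(\mathbf{b})}p_{(x_i,\mathbf{v})}\,q_{(x_i,\mathbf{v}),y_k}=p_{(x_i,\mathbf{b})}\,q_{(x_i,\mathbf{b}),y_k}$, is precisely your total-probability step. It is also the form the paper itself adopts as its marginal-preservation constraint in Eq.~(\ref{eq:margin}).

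Your bookkeeping also shows that the rewrite needs only fibre-constancy of $q$ and $c$, not the conditional independence itself. The Markov-blanket property enters only in your justification of $c_{(x_i,\mathbf{v}),y_k}=c_{(x_i,\mathbf{b}),y_k}$. Assumption~\ref{assu:datautility} already asserts that equality outright, so you may simply invoke it. Keeping the joint prior in the loss and the conditional prior inside $h$ is exactly what resolves the overloading of $p$ that the paper's proof leaves implicit.
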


\begin{proof}
We let $\mathrm{sub}_{\mathcal{B}}(\mathbf{v})$ represent the sub-vector of $\mathbf{v}$ consisting of entries in the Markov blanket space $\mathcal{B}$. Based on \emph{marginalisation of probability} \cite{probability}, the relationship between $q_{(x_i,\mathbf{b}),y_k}$ and  $q_{(x_i,\mathbf{v}),y_k}$ can be expressed through the following equation:  
\begin{equation}
\label{eq:margin1}
\sum_{\mathrm{sub}_{\mathcal{B}}(\mathbf{v}) = \mathbf{b}} q_{(x_i,\mathbf{v}),y_k} = q_{(x_i,\mathbf{b}),y_k}, \forall \mathbf{b}, \mathbf{v}. 
\end{equation}
Given that $p_{x_i|\mathbf{c}} = p_{(x_i,\mathbf{b})}$ and $q_{x_i,y_k|\mathbf{c}} = q_{(x_i,\mathbf{b}),y_k}$, we can obtain that 
\normalsize
\begin{eqnarray}
\textstyle \mathcal{L}(\mathbf{Q}) &=& \sum_{\mathbf{v}}\sum_{x_i, y_k}  p_{(x_i,\mathbf{v})} \cdot c_{(x_i,\mathbf{v}),y_k} \cdot q_{(x_i,\mathbf{v}),y_k} \\ 
&=& \sum_{\mathbf{b}}\sum_{x_i, y_k} \sum_{\mathrm{sub}_{\mathcal{B}}(\mathbf{v}) = \mathbf{b}} p_{(x_i,\mathbf{b})} \cdot c_{(x_i,\mathbf{b}),y_k} \cdot q_{(x_i,\mathbf{v}),y_k} \\ 
&=& \sum_{\mathbf{b}}\sum_{x_i, y_k} p_{(x_i,\mathbf{b})} \cdot c_{(x_i,\mathbf{b}),y_k} \cdot q_{(x_i,\mathbf{b}),y_k}. 
\end{eqnarray}
\normalsize
The proof is completed. 
\end{proof}

\DEL{
\subsection{Proof of Theorem \ref{thm:seqcomposition}}
\label{subsec:proofseqcomposition}
\begin{proof} S1. For each pair of neighboring record $x_i, x_j \in \mathcal{X}$, since $Q_1, ..., Q_M$ are all conditionally independent given ${B}_{X} = \mathbf{b}$, then 
\normalsize
\begin{eqnarray}
\nonumber && \frac{P(Q_1(x_i, \mathbf{b}) = y_{k_1}, ...,Q_M(x_i, \mathbf{b}) = y_{k_M}|{B}_{X}= \mathbf{b})}{P(Q_1(x_j, \mathbf{b}) = y_{k_1}, ..., Q_M(x_j, \mathbf{b}) = y_{k_M}|{B}_{X} = \mathbf{b})} \\
&=& \prod_{m=1}^{M} \underbrace{\left(\frac{P(Q_m(x_i, \mathbf{b}) = y_{k_m}|{B}_{X}= \mathbf{b})}{P(Q_m(x_j, \mathbf{b}) = y_{k_m}|{B}_{X}= \mathbf{b})}\right)}_{\leq e^{\epsilon_m d_{x_i,x_j|\mathbf{b}}}}\\ 
&\leq& e^{(\sum_{m=1}^{M}\epsilon_m) d_{x_i,x_j}|\mathbf{b}} \\
&\leq& e^{(\sum_{m=1}^{M}\epsilon_m) d_{x_i,x_j}|\mathbf{v}}, ~\forall y_{k_1}, ..., y_{k_M} \in \mathcal{Y}
\end{eqnarray}
\normalsize
S2. For each pair of neighboring record $x_i, x_j \in \mathcal{X}$, as $Q_1, ..., Q_M$ are all independent, then 
\normalsize
\begin{eqnarray}
&& \frac{P(Q_1(x_i, \mathbf{b}) = y_{k_1}, ...,Q_M(x_i, \mathbf{b}) = y_{k_M})}{P(Q_1(x_j, \mathbf{b}) = y_{k_1}, ..., Q_M(x_j, \mathbf{b}) = y_{k_M})} \\
&=& \prod_{m=1}^{M} \underbrace{\left(\frac{P(Q_m(x_i, \mathbf{b}) = y_{k_m})}{P(Q_m(x_j, \mathbf{b}) = y_{k_m})}\right)}_{\leq e^{\epsilon_m d_{x_i,x_j|\mathbf{b}}}}\\ 
&\leq& e^{(\sum_{m=1}^{M}\epsilon_m) d_{x_i,x_j|\mathbf{b}}} \\
&\leq & e^{(\sum_{m=1}^{M}\epsilon_m) d_{x_i,x_j|\mathbf{v}}}, ~\forall y_{k_1}, ..., y_{k_M} \in \mathcal{Y}
\end{eqnarray}
\normalsize
The proof is completed. 
\end{proof}
\subsection*{Proof of Lemma \ref{lem:obfindependent}}
\begin{proof}
According to Bayes' formula
\begin{eqnarray}
&& \Pr\left[Q(X, V_X) = y| {V}\right] \\
\nonumber &=& \sum_{x}\Pr\left[Q(X, V_X) = y|X = x, {V}\right]\Pr\left[X = x |{V}\right] \\
&=& \sum_{x}\underbrace{\Pr\left[Q(X, V_X) = y|X = x, {B}_{X}\right]}_{\mbox{since $Q(X, V_X)  \perp \!\!\!\perp {B}_{X}^\mathrm{c} | \left\{X, {B}_{X}\right\}$}}\underbrace{\Pr\left[X = x|{B}_{X}\right]}_{\mbox{since $X  \perp \!\!\!\perp {B}_{X}^\mathrm{c} |{B}_{X}$}} \\
&=& \sum_{x}\Pr\left[Q(X, V_X) = y, X = x| {B}_{X}\right] \\
&=& \Pr\left[Q(X, V_X) = y| {B}_{X}\right]
\end{eqnarray}
indicating that $Q(X, V_X)  \perp \!\!\!\perp {B}_{X}^\mathrm{c} |{B}_{X}$. The proof is completed. 
\end{proof}

Notes: 
\begin{eqnarray}
\frac{\Pr\left[X = x_i | Q(X, V_X),  {B}_{X}\right]}{\Pr\left[X = x_j| Q(X, V_X), {B}_{X}\right]} 
\leq e^{\epsilon d_{x_i, x_j}} \frac{\Pr\left[X = x_i | {B}_{X}\right]}{\Pr\left[X = x_j| {B}_{X}\right]}
\end{eqnarray}}

\DEL{
\subsection{Proof of Theorem \ref{thm:parrallcomposition}}
\label{subsec:proofparrallcomposition}
\begin{proof} 
For each pair of neighboring record $x_i, x_j \in \mathcal{X}_m$, we use $Q_m$ to perturb them. Then, $\forall y_{k} \in \mathcal{Y}$ 
\normalsize
\vspace{-0.00in}
\begin{eqnarray}
\frac{P(Q_m(x_i, \mathbf{b}) = y_{k}|{B}_{X}= \mathbf{b})}{P(Q_m(x_j, \mathbf{b}) = y_{k}|{B}_{X}= \mathbf{b})} 
&\leq& e^{\epsilon_m d_{x_i,x_j|\mathbf{b}}} \\ 
&\leq & e^{(\max_{m=1, ..., M}\epsilon_m) d_{x_i,x_j|\mathbf{b}}} \\ 
&\leq & e^{(\max_{m=1, ..., M}\epsilon_m) d_{x_i,x_j|\mathbf{v}}}. 
\end{eqnarray}
\normalsize
The proof is completed. 
\end{proof}}

\DEL{
\begin{theorem}
\label{thm:PLorder}
By following the CD policy, and satisfying the corresponding mDP constraints:  
\begin{equation}
q_{x_i|\mathbf{b},y_k_X} - e^{\epsilon d_{x_i, x_j}} q_{x_j,y_k|\mathbf{b}_X} \leq 0, ~\forall y_k \in \mathcal{Y}. 
\vspace{-0.00in}
\end{equation}

Given the context variable set ${V}$ and a subset ${V}' \subseteq {V}$, if we apply the following policy to the perturbation function $Q$
\begin{equation}
\label{eq:PLorder}
\Pr\left[Q(X, V_X) | X, {V}'\right] = \Pr\left[Q(X, V_X) | X, {V}\right],
\end{equation}
then enforcing $\Pr\left[Q(X, V_X) | X, {V}'\right]$ to satisfy mDP (as defined in Eq. (\ref{eq:C-mDPred})) is \textbf{sufficient} to achieve the bounded context-aware PL as described in Eq. (\ref{eq:PLcontextbound}). 
\end{theorem}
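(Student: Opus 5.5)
The plan is to reduce the statement to the argument already used for Proposition~\ref{prop:ContextDP} and Proposition~\ref{prop:CDcorrect}, with the Markov blanket $B_X$ replaced by an arbitrary subset ${V}'\subseteq {V}$. Write $Y=Q(X,V_X)$, let $\mathbf{v}'' $ range over $\mathcal{V}$, and let $\pi':\mathcal{V}\to\mathcal{V}'$ be the restriction map. I will set $\mathbf{u}=\pi'(\mathbf{v})$ and $\mathbf{u}'=\pi'(\mathbf{v}')$. The policy in Eq.~(\ref{eq:PLorder}) says exactly that
\begin{equation*}
q_{(x,\mathbf{v}),y}=q_{(x,\mathbf{u}),y}\quad\text{for all } x,\mathbf{v},y .
\end{equation*}
This is the analogue of Eq.~(\ref{eq:cd-eq}), and it needs no conditional-independence property of $X$ itself.

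\textbf{Step 1 (posterior ratio).} First, by Bayes' rule exactly as in Eq.~(\ref{eq:bayes-joint-red}), the log of the posterior ratio minus the log of the prior ratio for $(x_i,\mathbf{v})$ versus $(x_j,\mathbf{v}')$ equals $\ln\bigl(q_{(x_i,\mathbf{v}),y}/q_{(x_j,\mathbf{v}'),y}\bigr)$. Here $\Pr[Y=y]$ cancels, which requires $p_{(x_j,\mathbf{v}')}>0$ and $q_{(x_j,\mathbf{v}'),y}>0$.

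\textbf{Step 2 (transfer the bound).} Using the policy identity, this equals $\ln\bigl(q_{(x_i,\mathbf{u}),y}/q_{(x_j,\mathbf{u}'),y}\bigr)$. The assumed mDP constraints on the reduced mechanism (Eq.~(\ref{eq:C-mDPred}) with $\mathbf{b}$ replaced by $\mathbf{u}$) are applied once as stated and once with the pair swapped. Together they give
\begin{equation*}
\left|\ln\frac{q_{(x_i,\mathbf{u}),y}}{q_{(x_j,\mathbf{u}'),y}}\right| \le \epsilon\, d_{(x_i,\mathbf{u}),(x_j,\mathbf{u}')} .
\end{equation*}
Outputs $y$ with a zero probability are harmless, since the constraint forces both probabilities to vanish together.

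\textbf{Step 3 (compare the metrics).} It remains to show $d_{(x_i,\mathbf{u}),(x_j,\mathbf{u}')}\le d_{(x_i,\mathbf{v}),(x_j,\mathbf{v}')}$. I expect this to be the main obstacle, because Proposition~\ref{prop:CDcorrect} simply assumes this inequality. I will prove it from the augmented metric in Eq.~(\ref{eq:context_metric}), taking the restricted distance to be the same formula summed only over the coordinates in ${V}'$. The full distance then equals the restricted distance plus
\begin{equation*}
\sum_{\tau\notin {V}'} w_{t-\tau}\, d_{v_{t-\tau},v'_{t-\tau}} ,
\end{equation*}
and this extra sum is nonnegative because the decay weights $w_{t-\tau}\ge 0$ and the Haversine distance is nonnegative. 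If the intended metric on the reduced domain were defined differently, I would instead impose this monotonicity as a hypothesis, as Proposition~\ref{prop:CDcorrect} does.

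\textbf{Step 4 (conclude).} Chaining Steps 1--3 and taking the supremum over $y$ yields $\mathrm{PL}\bigl((x_i,\mathbf{v}),(x_j,\mathbf{v}')\bigr)\le\epsilon\, d_{(x_i,\mathbf{v}),(x_j,\mathbf{v}')}$, which is Eq.~(\ref{eq:PLcontextbound}).

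One subtlety concerns neighbors. The neighborhood condition $d\le\eta$ is imposed on the reduced pair. Since the reduced distance is at most the full distance, every full-domain neighboring pair is also a reduced-domain neighboring pair. Hence the reduced constraints cover every pair that the bound in Eq.~(\ref{eq:PLcontextbound}) requires.
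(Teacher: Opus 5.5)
Your proposal is correct and follows the same route as the paper's proof. That proof appears in the appendix as the proof of Proposition~\ref{prop:CDcorrect}, which is where this theorem's proof is deferred: Bayes' rule reduces the leakage to $\ln\bigl(q_{(x_i,\mathbf{v}),y}/q_{(x_j,\mathbf{v}'),y}\bigr)$, the policy identity replaces $\mathbf{v}$ by its restriction, the reduced constraints are applied in both directions, and the metric comparison finishes the argument. Your two additions are valid refinements the paper leaves implicit: you derive $d_{(x_i,\mathbf{u}),(x_j,\mathbf{u}')}\le d_{(x_i,\mathbf{v}),(x_j,\mathbf{v}')}$ from the additive, nonnegatively weighted form of Eq.~(\ref{eq:context_metric}) instead of assuming it, and you check that every full-domain neighboring pair is also a reduced-domain neighboring pair.
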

Intuitively, Theorem \ref{thm:PLorder} states that obfuscating secret data by considering less context information doesn't increase the posterior leakage. Based on Theorem \ref{thm:PLorder}, it is trivial to prove Corollary \ref{cor:CD}, which is a special case of Theorem \ref{thm:PLorder} (when ${V}' = {B}_{X}$).  
\begin{corollary}
\label{cor:CD}
(PL bound guarantee of the CD policy)
If an perturbation function $Q$ satisfies the CD policy (in Definition \ref{def:CD}) and the context-aware mDP constraint, then it can achieve the bounded context-aware PL (in Eq. (\ref{eq:PLcontextbound})). 
\end{corollary}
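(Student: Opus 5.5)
The corollary should follow by specializing the argument behind Proposition~\ref{prop:CDcorrect} (equivalently, the theorem just stated with ${V}'={B}_X$). The plan is to reduce the context-aware posterior leakage to a likelihood ratio of the perturbation matrix, collapse that ratio onto the Markov-blanket coordinates using the CD policy, and then bound it with the context-aware mDP constraint.

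\textbf{Step 1 (reduction to a likelihood ratio).} Write $Y=Q(X,V_X)$. Bayes' rule gives $\Pr[(X,V_X)=(x,\mathbf{v})\mid Y=y]=p_{(x,\mathbf{v})}q_{(x,\mathbf{v}),y}/\Pr[Y=y]$. Hence, for any pair $(x_i,\mathbf{v}),(x_j,\mathbf{v}')$ and any $y$ with $q_{(x_j,\mathbf{v}'),y}>0$, the logarithm of the posterior ratio divided by the prior ratio in Eq.~(\ref{eq:PLcontext}) equals $\ln\bigl(q_{(x_i,\mathbf{v}),y}/q_{(x_j,\mathbf{v}'),y}\bigr)$. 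The prior $p_{(x,\mathbf{v})}$ cancels exactly, so no conditional-independence property of $X$ itself is needed here.

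\textbf{Step 2 (collapse to the blanket).} By Definition~\ref{def:CD}, $q_{(x,\mathbf{v}),y}=q_{(x,\mathbf{b}),y}$ with $\mathbf{b}=\pi(\mathbf{v})$. The ratio therefore becomes $q_{(x_i,\mathbf{b}),y}/q_{(x_j,\mathbf{b}'),y}$.

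\textbf{Step 3 (bound the ratio).} Applying the context-aware mDP constraint (Eq.~(\ref{eq:C-mDPred})) in both orders bounds the absolute value of the log ratio by $\epsilon\, d_{(x_i,\mathbf{b}),(x_j,\mathbf{b}')}$.

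\textbf{Step 4 (degenerate outputs).} If $q_{(x_j,\mathbf{b}'),y}=0$, the constraint forces $q_{(x_i,\mathbf{b}),y}=0$. Such $y$ have zero posterior for both hypotheses and are excluded from the supremum.

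\textbf{Step 5 (metric comparison).} To conclude Eq.~(\ref{eq:PLcontextbound}), I need $d_{(x_i,\mathbf{b}),(x_j,\mathbf{b}')}\le d_{(x_i,\mathbf{v}),(x_j,\mathbf{v}')}$. In Proposition~\ref{prop:CDcorrect} this is assumed. For the corollary I would instead derive it from the form of the metric in Eq.~(\ref{eq:context_metric}). Define the blanket distance as the same expression restricted to the time indices in $B_X$. Since the decay weights $w_{t-\tau}$ are nonnegative and each Haversine term is nonnegative, dropping the coordinates in $B_X^{\mathrm c}$ can only decrease the sum. Taking the supremum over $y$ then finishes the proof.

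The main obstacle is this last metric step, because the corollary as stated does not fix how $d$ is defined on $\mathcal{X}\times\mathcal{B}$. I would first adopt the restriction-of-Eq.~(\ref{eq:context_metric}) convention and verify the monotonicity coordinatewise. A secondary issue is that neighborhood ($d\le\eta$) at the $\mathbf{v}$ level must imply neighborhood at the $\mathbf{b}$ level, so that the reduced constraint is actually imposed on that pair; the same monotonicity gives this as well.
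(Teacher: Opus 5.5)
Your proof is correct and follows the paper's own argument, namely the proof of Proposition~\ref{prop:CDcorrect}, of which this corollary is the $V'=B_X$ special case: Bayes' rule cancels the prior, the CD policy collapses $q_{(x,\mathbf{v}),y}$ to $q_{(x,\mathbf{b}),y}$, the reduced constraint is applied in both orders, zero-probability outputs are excluded, and a distance comparison finishes the bound. The one difference is that the paper simply assumes $d_{(x_i,\mathbf{b}),(x_j,\mathbf{b}')}\le d_{(x_i,\mathbf{v}),(x_j,\mathbf{v}')}$, whereas you derive it from the nonnegative weighted-sum form of Eq.~(\ref{eq:context_metric}) and use it to check that neighboring pairs remain neighbors at the blanket level, which is a small but worthwhile tightening.
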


\begin{eqnarray}
\label{eq:}
\min && \sum_{x_i \in \mathcal{X}, y_k \in \mathcal{Y}, B_X\in \mathcal{V}}  p_{x_i|{B}_X} c_{(x_i, B_X), y_k} q_{x_i,y_k|{B}_X}\\
\mathrm{s.t.} && \mbox{mDP (Eq. (\ref{eq:C-mDPred})) is satisfied} \\
&& \mbox{Unit measure (Eq. (\ref{eq:})) is satisfied} \\ \label{eq:}
&& 0 \leq q_{i,k|{V}} \leq 1, \forall (x_i, y_k) \in \mathcal{X} \times \mathcal{Y}, \forall {V}. 
\end{eqnarray}
where $c_{(x_i, B_X), y_k}$

$\mathcal{L}(\mathbf{Q})$ can be rewritten as 
\begin{equation}
\mathcal{L}(\mathbf{Q}) = \sum_{V_X}\sum_{x_i, y_k}  p_{x_i|{B}_X} c_{(x_i, V_X), y_k} q_{x_i,y_k|{B}_X}
\end{equation}
Due to the limit of space, the detailed proof of Theorem \ref{thm:PLorder}--\ref{thm:parrallcomposition} and Corollary \ref{cor:CD} can be found in Section \ref{subsec:proofPLorder}--\ref{subsec:proofparrallcomposition} in the supplementary file. 
}

\section{Discussion: When Assumption \ref{assu:datautility} Is Removed}
\label{sec:discussion}
In the main part, we focus on the case where the data utility loss depends only on the (conditional) distributions of the secret and perturbed data (Assumption~\ref{assu:datautility}). In particular, letting $\mathbf{b}=\mathrm{sub}_{\mathcal{B}}(\mathbf{v})$ denote the restriction of $\mathbf{v}$ to the Markov-blanket space $\mathcal{B}$, Assumption~\ref{assu:datautility} implies that
\begin{eqnarray}
c_{(x_i,\mathbf{v}),y_k}
&=& h \big(p_{(x_i, \mathbf{v})},\, q_{(x_i,\mathbf{v}),y_k}; \boldsymbol{\theta}\big)
\\ 
&=& h \big(p_{(x_i, \mathbf{b})},\, q_{(x_i,\mathbf{b}),y_k}; \boldsymbol{\theta}\big)
\\ 
&\triangleq& c_{(x_i,\mathbf{b}),y_k}.
\end{eqnarray}
In this section, we discuss how our approach can be adapted when Assumption~\ref{assu:datautility} is removed, i.e., when $c_{(x_i,\mathbf{v}),y_k}$ is not necessarily equal to $c_{(x_i,\mathbf{b}),y_k}$.

\smallskip
We consider a two-step perturbation-optimization framework. In \textbf{Step 1}, we solve the reduced C-mDP under the CD policy and obtain an optimal mechanism
$\{q^*_{(x_i,\mathbf{b}),y_k}\}$ defined over $(x_i,\mathbf{b})$.
In \textbf{Step 2}, we refine $\{q^*_{(x_i,\mathbf{b}),y_k}\}$ to a context-specific mechanism
$\{q_{(x_i,\mathbf{v}),y_k}\}$ by preserving the induced marginals over $\mathbf{b}$:
\begin{equation}
\label{eq:margin}
\sum_{\mathbf{v}:\,\mathrm{sub}_{\mathcal{B}}(\mathbf{v})=\mathbf{b}}
p_{(x_i,\mathbf{v})}\, \cdot q_{(x_i,\mathbf{v}),y_k}
 = 
p_{(x_i,\mathbf{b})}\, \cdot q^*_{(x_i,\mathbf{b}),y_k},
\end{equation}
$\forall x_i\in\mathcal{X},\forall \mathbf{b}\in\mathcal{B}, \forall y_k\in\mathcal{Y}$, where $p_{(x_i,\mathbf{b})}\triangleq \sum_{\mathbf{v}:\,\mathrm{sub}_{\mathcal{B}}(\mathbf{v})=\mathbf{b}} p_{(x_i,\mathbf{v})}$.

\smallskip
\noindent\textbf{Refined C-mDP.}
Given $\{q^*_{(x_i,\mathbf{b}),y_k}\}$ from the reduced problem, we solve the following LP to refine it:
\begin{eqnarray}
\label{eq:C-mDPObjrefine}
\min_{\{q_{(x_i,\mathbf{v}),y_k}\}} && 
\sum_{(x_i,\mathbf{v}),y_k} p_{(x_i,\mathbf{v})}\, \cdot c_{(x_i,\mathbf{v}),y_k}\, \cdot q_{(x_i,\mathbf{v}),y_k} \\[2pt]
\mbox{s.t.} && 
\text{C-mDP constraints in Eq.~(\ref{eq:C-mDP}) are satisfied,} \\[-1pt]
\nonumber && \sum_{\mathbf{v}:\,\mathrm{sub}_{\mathcal{B}}(\mathbf{v})=\mathbf{b}}
p_{(x_i,\mathbf{v})}\, \cdot q_{(x_i,\mathbf{v}),y_k}
=
p_{(x_i,\mathbf{b})}\, \cdot q^*_{(x_i,\mathbf{b}),y_k},
\\
&& \forall x_i,\mathbf{b},y_k, \label{eq:CLPconstraint1refine}\\[-1pt]
&& \sum_{y_k\in\mathcal{Y}} q_{(x_i,\mathbf{v}),y_k}=1,\quad \forall (x_i,\mathbf{v}),\\
&& 0 \le q_{(x_i,\mathbf{v}),y_k}\le 1,\quad \forall (x_i,\mathbf{v}),y_k.
\end{eqnarray}

\begin{proposition}
\label{propo:refine}
Let $\mathcal{L}^*$ be the optimal objective value of the original C-mDP problem
(Eqs.~(\ref{eq:C-mDPObj})--(\ref{eq:CLPconstraint1})).
Let $\mathcal{L}^{\mathrm{ref}}$ be the optimal objective value of the refined problem
(Eqs.~(\ref{eq:C-mDPObjrefine})--(\ref{eq:CLPconstraint1refine})).
Then $\mathcal{L}^{\mathrm{ref}} \ge \mathcal{L}^*$, i.e., the refined problem yields an upper bound on the optimal loss of the original C-mDP.
\end{proposition}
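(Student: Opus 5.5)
The plan is a feasible-set inclusion argument. Both problems minimize the same linear objective $\sum_{(x_i,\mathbf{v}),y_k} p_{(x_i,\mathbf{v})} c_{(x_i,\mathbf{v}),y_k} q_{(x_i,\mathbf{v}),y_k}$ over the same decision variables $\{q_{(x_i,\mathbf{v}),y_k}\}$. Call the feasible set of the original C-mDP problem (Eqs.~(\ref{eq:C-mDPObj})--(\ref{eq:CLPconstraint1})) $\mathcal{F}$ and that of the refined problem $\mathcal{F}^{\mathrm{ref}}$. If I show $\mathcal{F}^{\mathrm{ref}} \subseteq \mathcal{F}$, then minimizing over a smaller set gives $\mathcal{L}^{\mathrm{ref}} \ge \mathcal{L}^*$ immediately.

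The first step is to list the constraints of the refined LP. They are:
\begin{itemize}
\item the C-mDP constraints of Eq.~(\ref{eq:C-mDP}), over all neighboring pairs $(x_i,\mathbf{v}),(x_j,\mathbf{v}')$;
\item the unit-measure constraints $\sum_{y_k} q_{(x_i,\mathbf{v}),y_k}=1$ for every $(x_i,\mathbf{v})$;
\item the box constraints $0\le q\le 1$;
\item the extra marginal-preservation constraint Eq.~(\ref{eq:CLPconstraint1refine}).
\end{itemize}
The first three groups are exactly Eqs.~(\ref{eq:C-mDP1})--(\ref{eq:CLPconstraint1}) of the original problem. So any refined-feasible $\mathbf{Q}$ satisfies every original constraint, and dropping the marginal constraint only relaxes the problem. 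This gives $\mathcal{F}^{\mathrm{ref}}\subseteq\mathcal{F}$.

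The second step is to pass from sets to optimal values. For any $\mathbf{Q}\in\mathcal{F}^{\mathrm{ref}}$ we have $\mathcal{L}(\mathbf{Q}) \ge \inf_{\mathcal{F}}\mathcal{L} = \mathcal{L}^*$. Taking the infimum over $\mathcal{F}^{\mathrm{ref}}$ gives $\mathcal{L}^{\mathrm{ref}}\ge\mathcal{L}^*$. Both feasible sets are closed and bounded subsets of a finite-dimensional cube, and the objective is linear, so both minima are attained whenever the sets are nonempty.

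The main obstacle is the degenerate case where $\mathcal{F}^{\mathrm{ref}}$ is empty. The marginal equalities computed from $q^*_{(x_i,\mathbf{b}),y_k}$, together with the stricter context-level metric constraints, might admit no solution. I would handle this with the convention $\mathcal{L}^{\mathrm{ref}}=+\infty$, under which the inequality holds trivially.

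I would also point out that the refined LP is in fact feasible. Take the constant lift $q_{(x_i,\mathbf{v}),y_k}=q^*_{(x_i,\pi(\mathbf{v})),y_k}$.
\begin{itemize}
\item It satisfies Eq.~(\ref{eq:CLPconstraint1refine}) by the definition $p_{(x_i,\mathbf{b})}=\sum_{\mathbf{v}:\pi(\mathbf{v})=\mathbf{b}}p_{(x_i,\mathbf{v})}$.
\item It satisfies the unit-measure and box constraints trivially.
\item It satisfies Eq.~(\ref{eq:C-mDP}) because $q^*$ satisfies Eq.~(\ref{eq:C-mDPred}), and, as in the proof of Proposition~\ref{prop:CDcorrect}, we have $d_{(x_i,\mathbf{b}),(x_j,\mathbf{b}')}\le d_{(x_i,\mathbf{v}),(x_j,\mathbf{v}')}$. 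That inequality holds for the metric of Eq.~(\ref{eq:context_metric}), since it is a sum of nonnegative terms and restricting to the blanket coordinates drops terms.
\end{itemize}
So the bound is finite and meaningful: it sits between $\mathcal{L}^*$ and the loss of the reduced solution.
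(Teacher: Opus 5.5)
Your proposal is correct and follows the paper's own argument. The refined LP is the original C-mDP LP plus the marginal-preservation constraints, so its feasible set is contained in the original one, and minimizing the same linear objective over a smaller set cannot lower the optimum. Your additional remarks go beyond what the paper proves and are sound: the $+\infty$ convention covers an empty feasible set, and the constant lift $q_{(x_i,\mathbf{v}),y_k}=q^*_{(x_i,\pi(\mathbf{v})),y_k}$ shows the refined LP is in fact feasible, provided the blanket-level distance never exceeds the context-level one.
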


\begin{proof}
The refined problem adds the marginal-preservation constraints in Eq.~(\ref{eq:CLPconstraint1refine}) to the original C-mDP constraints (and keeps the same simplex constraints). Therefore, any feasible solution to the refined problem is feasible for the original problem, i.e., the feasible set of the refined problem is a subset of that of the original problem. Since both problems minimize the same type of linear objective over their respective feasible sets, restricting the feasible set cannot decrease the optimal value. Hence $\mathcal{L}^{\mathrm{ref}} \ge \mathcal{L}^*$.
\end{proof}

\DEL{
\begin{figure}[t]
\centering
\begin{minipage}{0.5\textwidth}
\centering
  \subfigure[\footnotesize $|B_{X_t}| = 1$ and $m = 1$]{
\includegraphics[width=0.48\textwidth, height = 0.155\textheight]{./fig/cluster_rome_d_kmean_GPS}}
\label{}
\centering
  \subfigure[\footnotesize $|B_{X_t}| = 2$ and $m = 2$]{
\includegraphics[width=0.48\textwidth, height = 0.155\textheight]{./fig/cluster_rome_d_kmean_GPS}}
\label{}
\end{minipage}
\vspace{-0.05in}
\caption{xxx}
\label{fig:}
\vspace{-0.05in}
\end{figure}}

\DEL{
\begin{figure}[t]
\centering
\begin{minipage}{0.5\textwidth}
\centering
  \subfigure[\footnotesize Receiver operating characterisic]{
\includegraphics[width=0.48\textwidth, height = 0.155\textheight]{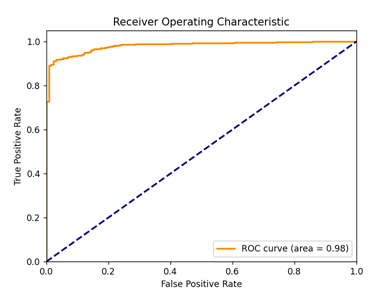}}
\label{}
\centering
  \subfigure[\footnotesize Precision recall curve]{
\includegraphics[width=0.48\textwidth, height = 0.155\textheight]{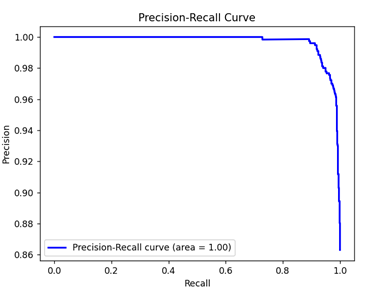}}
\label{}
\end{minipage}
\vspace{-0.05in}
\caption{xxx}
\label{fig:}
\vspace{-0.05in}
\end{figure}}

\DEL{
\section{Detailed Experiment Settings}
In the case study, since vehicles need to physically travel to designated locations to receive desired services, data utility loss can be quantified by the discrepancy between the estimated and actual travel costs to reach the designated locations. Note that our framework is also readily extended to other LBS applications with slight adjustments, provided that a clear relationship between data utility loss and perturbed data can be established. Specifically, we let $q_\ell$ denote the prior probability that the target location is located at the location $v_\ell$ ($l = 1,..., N$). Given a real location $x_n$, the utility loss caused by a perturbed location $v_j$ is defined as the expected error of the traveling costs to the target location, calculated by 
\begin{equation}
\label{eq:utilityloss}
\textstyle 
\Delta c_{x_n,v_j} = \sum_{l=1}^N q_\ell \left|c_{x_n,v_\ell} - c_{v_j,v_\ell}\right|. 
\end{equation}}

\section{Definition of Objective Function $\mathcal{L}(\mathbf{Q})$ in Benchmarks}
\label{sec:measureEUL}
By considering the delay in receiving the recommended destination from the server, when estimating the actual travel cost $c(x_i|\mathbf{v},x_{\mathrm{task}})$, we consider $x_i$ as the vehicle's location in the subsequent time slot, rather than the current location. This requires the vehicle to predict the distribution of the next location when applying the utility-preserving perturbation methods. Specifically, 
\begin{itemize} 
\item (1) ``\emph{LP}'' \cite{Qiu-TMC2020} and 
(3) ``\emph{ConstOPT}'' \cite{ImolaUAI2022}, as context-free perturbation methods, predict the next location using the prior distribution of the vehicles $p_{x_i}$. Hence, their estimated utility loss functions $\mathcal{L}_{\mathrm{LP}}(\mathbf{Q})$ and $\mathcal{L}_{\mathrm{ConstOPT}}(\mathbf{Q})$ are given by
\begin{equation}
\hat{\mathcal{L}}_{\mathrm{LP}}(\mathbf{Q}) = \hat{\mathcal{L}}_{\mathrm{ConstOPT}}(\mathbf{Q}) =\sum_{x_i, y_k} p_{x_i} \cdot c_{x_i,y_k} \cdot q_{x_i,y_k}
\end{equation}
\item (4) ``\emph{LP+Markov}'' assumes that the vehicle's mobility follows a first-order Markov process. Given its current location $x_t$, LP+Markov estimates the probability distribution of the next location as $p_{x_i|x_t}$, which is the estimated transition probability from $x_t$ to $x_i$ of the Markov chain. In this case, its utility loss function $\hat{\mathcal{L}}_{\mathrm{LP+Markov}}(\mathbf{Q})$ is estimated by
\begin{equation}
\hat{\mathcal{L}}_{\mathrm{LP+Markov}}(\mathbf{Q}) = \sum_{x_t} \sum_{x_i, y_k} p_{(x_i,x_t)} \cdot c_{(x_i,x_t),y_k} \cdot q_{(x_i,x_t),y_k}
\end{equation}
\item Our method ``\emph{LP+C-mDP}'' uses the Markov Blanket $\mathbf{b}$ and the corresponding transition probability $p_{(x_i,\mathbf{b})}$ to predict the next location. Accordingly, the utility loss function is estimated by 
\begin{equation}
\hat{\mathcal{L}}_{\mathrm{LP+CmDP}}(\mathbf{Q}) = \sum_{\mathbf{b}}\sum_{x_i, y_k} p_{(x_i,\mathbf{b})} \cdot c_{(x_i,\mathbf{b}),y_k} \cdot q_{(x_i,\mathbf{b}),y_k}
\end{equation}
\end{itemize}

\DEL{
\section{Distance Measure}
\label{sec:additional_detail}
\subsection{Option A (recommended): secret is the current location, history is context}
We treat the secret as the \emph{current} location $X_t \in \mathcal{X}$. Past locations (or other contextual variables) are used only as \emph{context} for selecting/parameterizing the mechanism, e.g., via a Markov blanket $B_t$ extracted from history. The privacy constraints are enforced over pairs of current locations, optionally conditioned on context:
\begin{equation}
q(y \mid x_t, B_t=b) \le \exp\big(\epsilon  d_{\mathrm{loc}}(x_t,x_t')\big) q(y \mid x_t', B_t=b),
\end{equation}
$\forall x_t,x_t'\in\mathcal{X},\ \forall y\in\mathcal{Y},\ \forall b$. Here $d_{\mathrm{loc}}:\mathcal{X}\times\mathcal{X}\to\mathbb{R}_{\ge 0}$ is a metric on locations (e.g., road-network shortest-path distance or Euclidean distance). Since the secret domain remains $\mathcal{X}$, there is no ambiguity arising from different numbers of past locations used in $B_t$.

\subsection{Option B: secret is a trajectory tuple with a product metric.} 
If we instead define the secret as a fixed-length tuple (trajectory window)
\begin{equation}
S_t = (X_t, X_{t-1}, \ldots, X_{t-H}) \in \mathcal{X}^{H+1},
\end{equation}
then we define a metric on $\mathcal{X}^{H+1}$ from a base location metric $d_{\mathrm{loc}}$.
A simple choice is a weighted $\ell_1$ product metric:
\begin{equation}
d(S_t,S_t') = \sum_{h=0}^{H} w_h   d_{\mathrm{loc}}\big(X_{t-h}, X_{t-h}'\big),
\end{equation}
where weights $w_h \ge 0$ control the relative importance of each lag (e.g., $w_0=1$ and $w_h=\alpha^h$ for some $\alpha\in(0,1)$).
The corresponding mDP constraint becomes
\begin{equation}
q(y \mid S_t) \le \exp\big(\epsilon   d(S_t,S_t')\big)  q(y \mid S_t'),
\end{equation}
$\forall  S_t,S_t'\in\mathcal{X}^{H+1},\forall y\in\mathcal{Y}$.

\paragraph{Weighted $\ell_p$ product metric (optional).}
More generally, for $p\ge 1$ we can use a weighted $\ell_p$ product metric:
\[
d_p(S_t,S_t') = \left(\sum_{h=0}^{H} w_h  d_{\mathrm{loc}}\big(X_{t-h}, X_{t-h}'\big)^p\right)^{1/p}.
\]
Then the privacy constraint is
\[
q(y \mid S_t) \le \exp\big(\epsilon  d_p(S_t,S_t')\big) q(y \mid S_t'),
\quad \forall S_t,S_t'\in\mathcal{X}^{H+1},\ \forall y\in\mathcal{Y}.
\]

\paragraph{Handling varying numbers of past locations.}
Even if only a subset of past locations is considered sensitive, we keep a fixed-length secret
$S_t=(X_t,\ldots,X_{t-H})$ and set $w_h=0$ for lags that are not treated as part of the secret.
This yields a well-defined metric on a single space $\mathcal{X}^{H+1}$ while allowing different
effective sensitivities across lags.
}

\section{Supplementary Experimental Details and Analyses}
\label {sec:additional_detail}

\subsection{MBI Training Configuration and Platform}
We use Binary Cross-Entropy (BCE) loss with mean reduction and Adam with learning rate $10^{-5}$ and batch size 128. We apply early stopping if there is no validation improvement for 400 epochs.
Experiments are conducted on a machine with an Intel Core i9-13900F CPU (24 cores, 2.00 -5.60\,GHz), 32\,GB DDR5 memory (4800\,MHz), and an NVIDIA GeForce RTX 4090 GPU (24\,GB GDDR6X VRAM). The training durations on Porto and Rome are 2730.0\,s and 492.4\,s, respectively.

\subsection{Population-level Training vs. Per-vehicle Evaluation (with Overlap)}
\label{subsec: dataset_overlap}
In our experimental pipeline, we use a large city-scale trajectory corpus (e.g., about $330\text{k}$ trajectories in Porto) to train the Markov blanket identification (MBI) module, and we use a smaller subset ($500$ trajectories) to evaluate the proposed perturbation mechanism and report utility/privacy results. Since the evaluation subset is sampled from the same corpus, it may overlap with trajectories used during MBI training.

We view the $330\text{k}$ trajectories as capturing \emph{population-level} mobility regularities of a city. Specifically, the CI-testing outputs and the learned feature-based predictor model how dependency strength (and thus Markov-blanket size) varies with coarse features such as speed, region, and time-of-day. This component is best interpreted as estimating a global statistical structure (a city-level prior), rather than modeling any particular vehicle.

In contrast, the $500$ trajectories are used to evaluate the perturbation mechanism on \emph{specific vehicles} and measure the utility/privacy performance of the released outputs. Even with overlap, the MBI module is used only to provide a city-level prior driven by aggregate regularities, while our formal privacy budget $\epsilon$ is defined for the \emph{online release} $Y=Q(X,V_X)$. Intuitively, because the MBI module is trained to capture coarse, city-level patterns from large-scale data, it is not used as a vehicle-specific descriptor in our evaluation.

\subsection{Neighborhood-Density Analysis}
\label{app:subsec:neighborhood}
\begin{figure}[t]
\centering
\begin{minipage}{0.48\textwidth}
\centering
  \subfigure[Rome]{
\includegraphics[width=0.48\textwidth, height = 0.13\textheight]{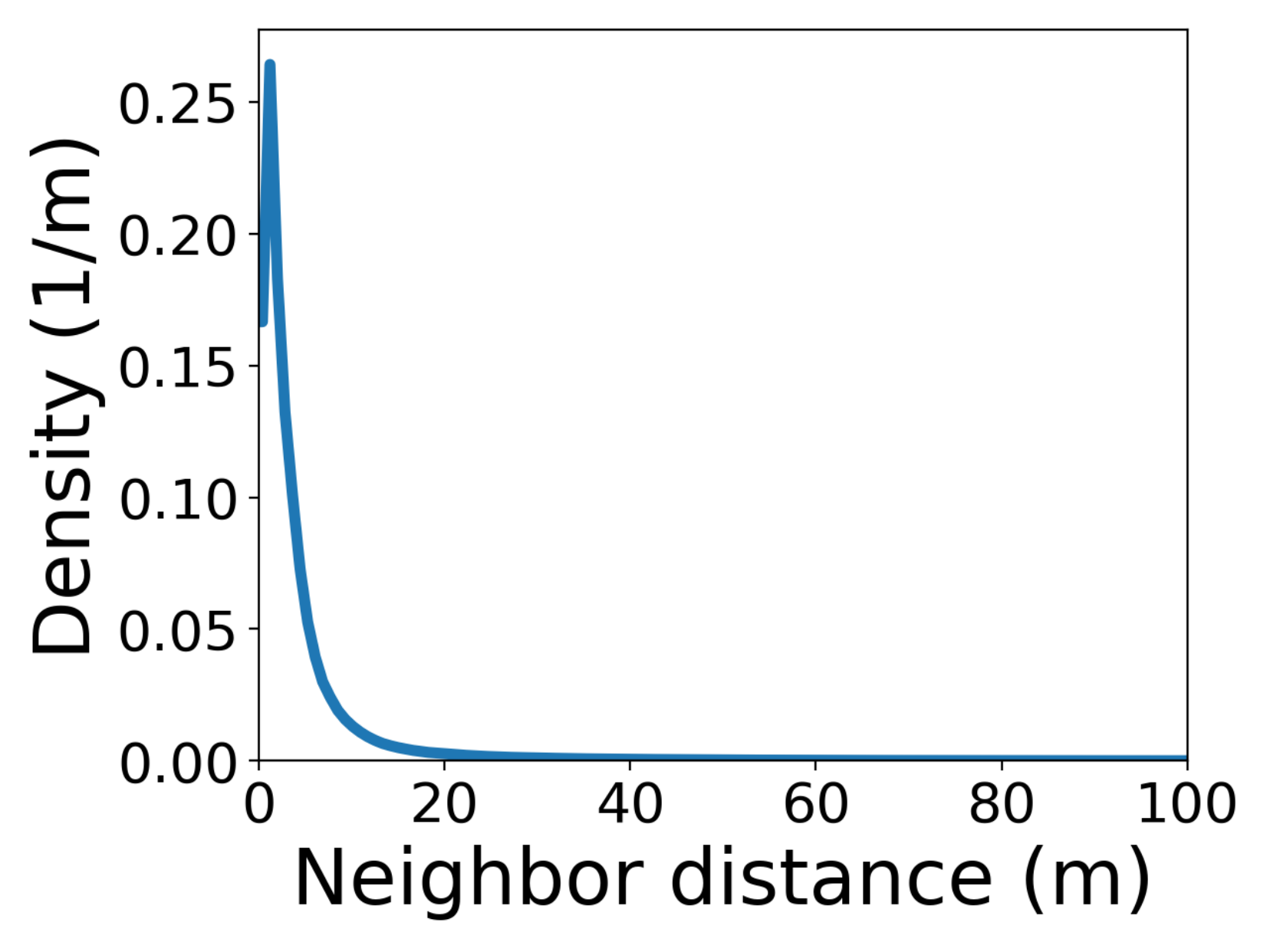}}
  \subfigure[Porto]{
\includegraphics[width=0.48\textwidth, height = 0.13\textheight]{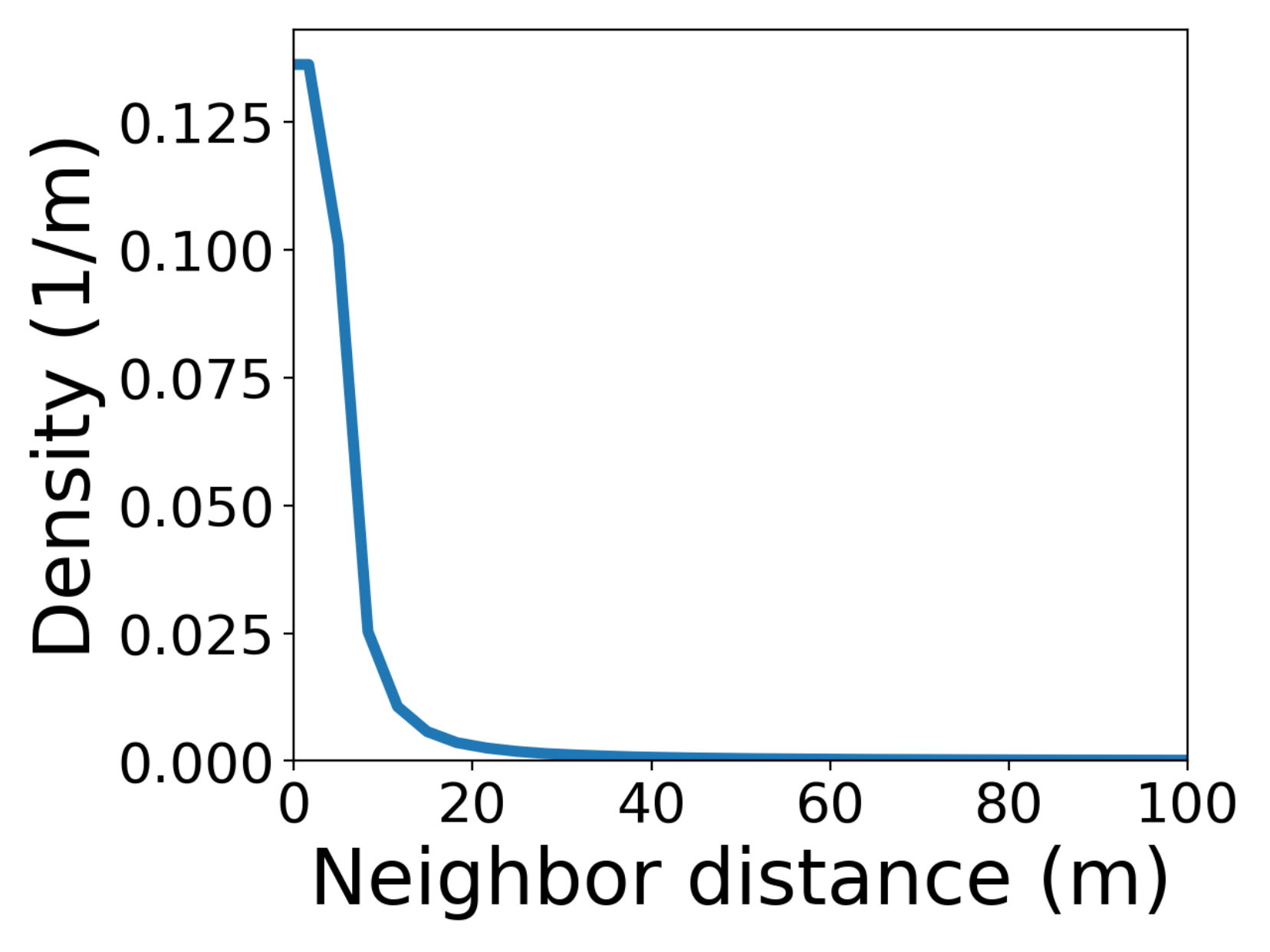}}
\label{}
\end{minipage}
\vspace{-0.15in}
\caption{PDF of neighbor distance.}
\label{fig:neighborDistanceDensity}
\vspace{-0.10in}
\end{figure}

\begin{figure}[t]
\centering
\begin{minipage}{0.48\textwidth}
\centering
  \subfigure[Rome]{
\includegraphics[width=0.48\textwidth, height = 0.13\textheight]{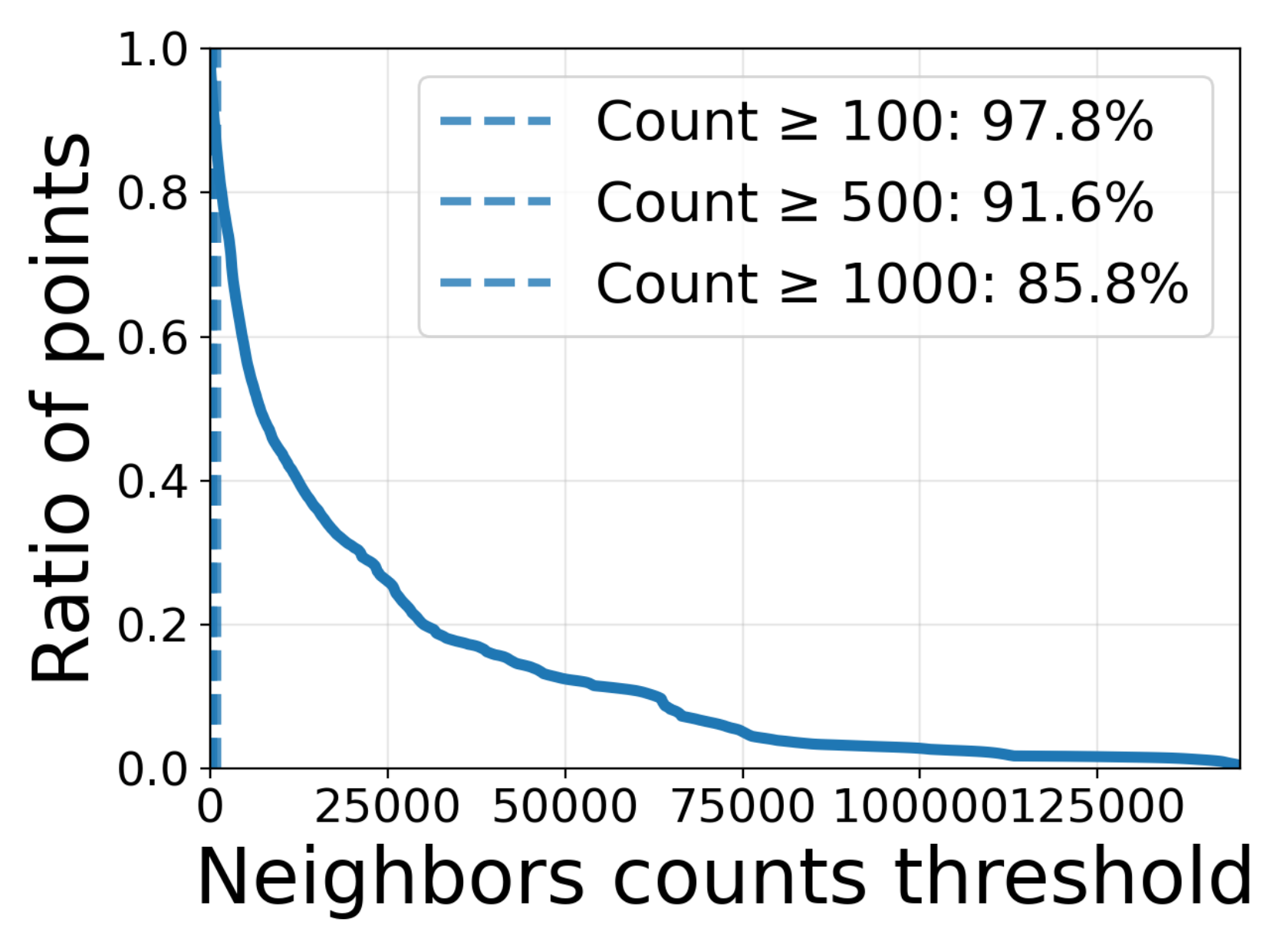}}
  \subfigure[Porto]{
\includegraphics[width=0.48\textwidth, height = 0.13\textheight]{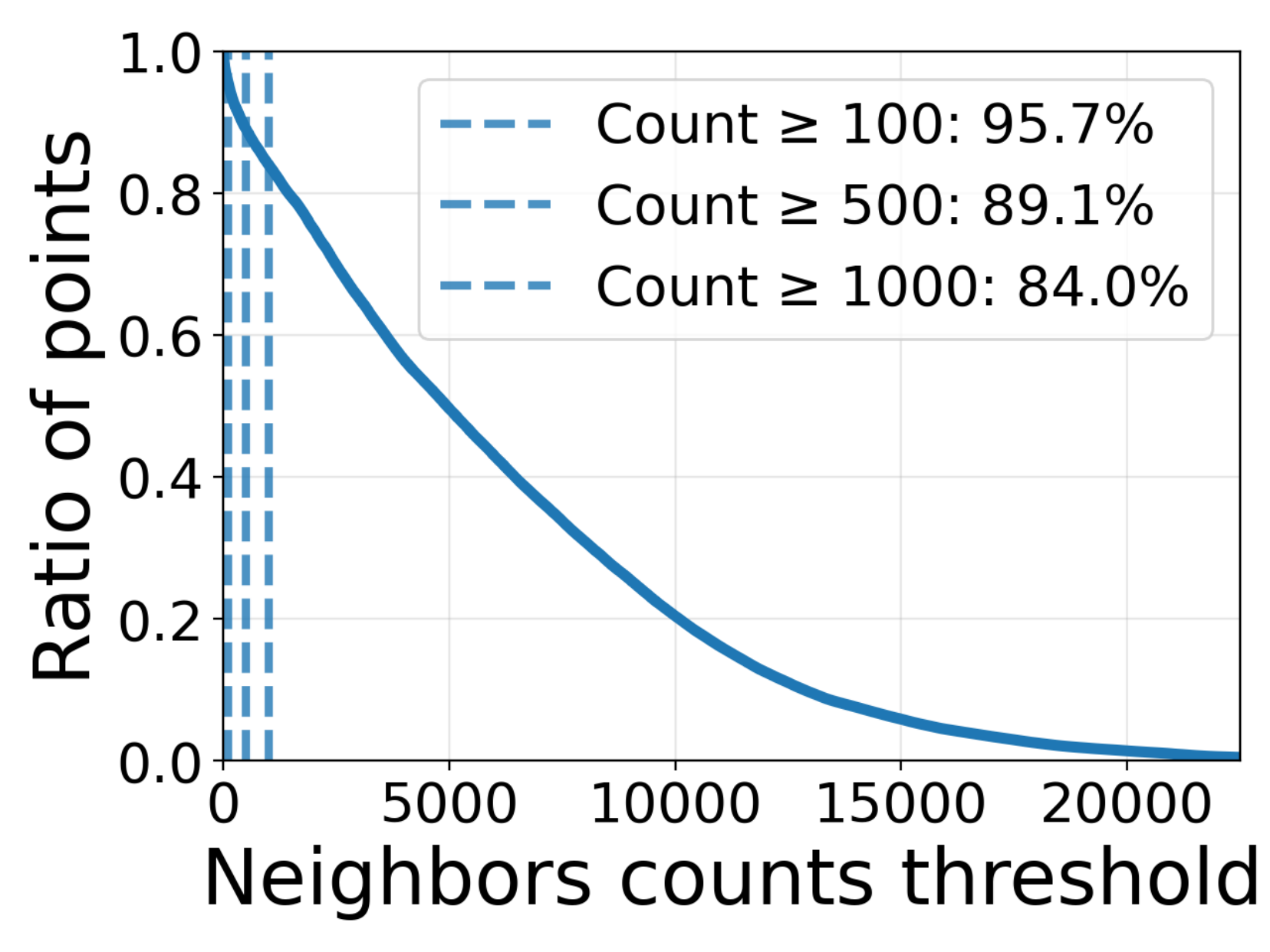}}
\label{}
\end{minipage}
\vspace{-0.15in}
\caption{Ratio of points with count$\geq$threshold vs. Neighbor counts (radius=100m).}
\label{fig:neighborCounts}
\vspace{-0.10in}
\end{figure}

\begin{figure}[t]
\centering
\begin{minipage}{0.48\textwidth}
\centering
  \subfigure[Rome]{
\includegraphics[width=0.48\textwidth, height = 0.13\textheight]{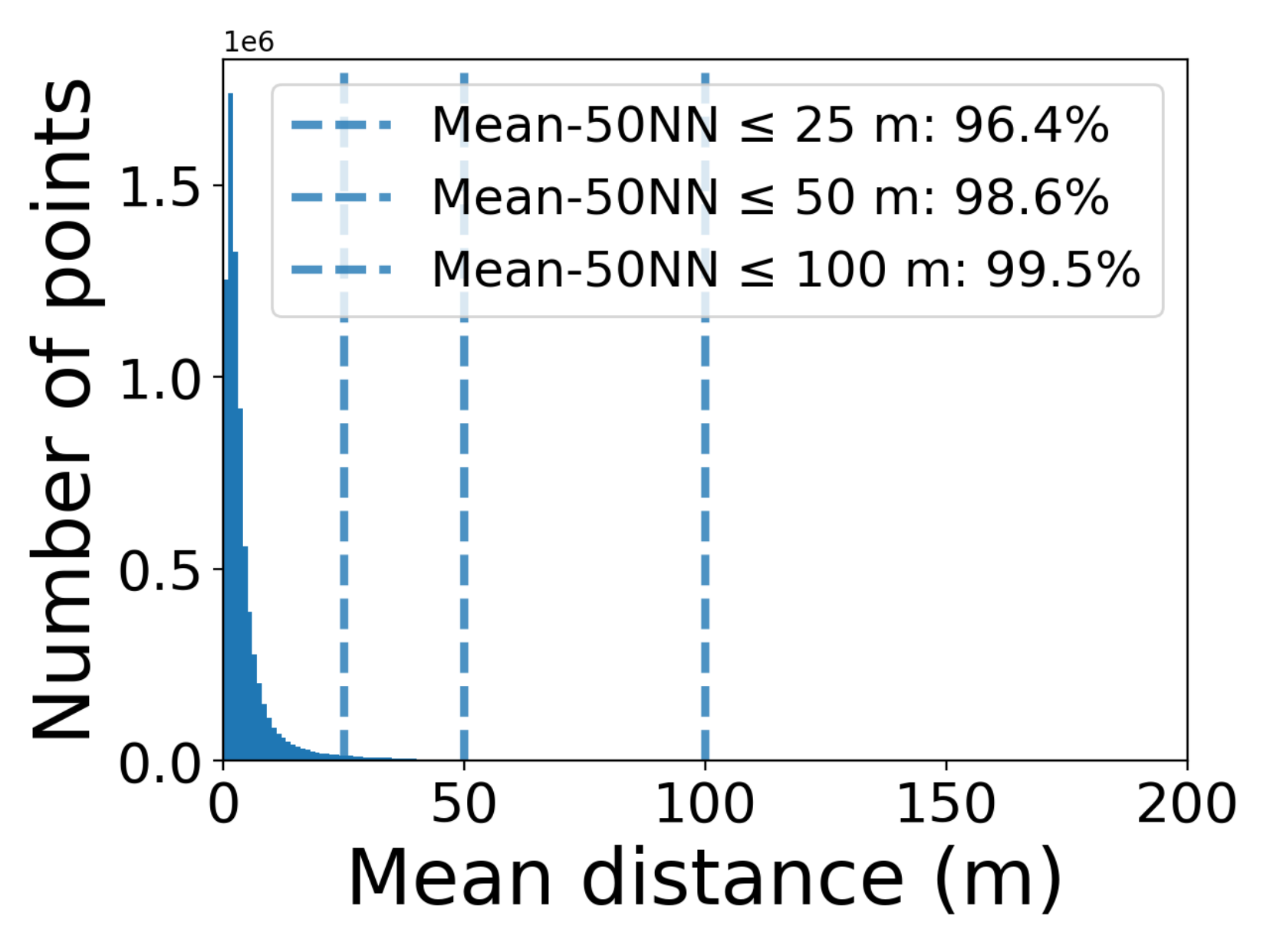}}
  \subfigure[Porto]{
\includegraphics[width=0.48\textwidth, height = 0.13\textheight]{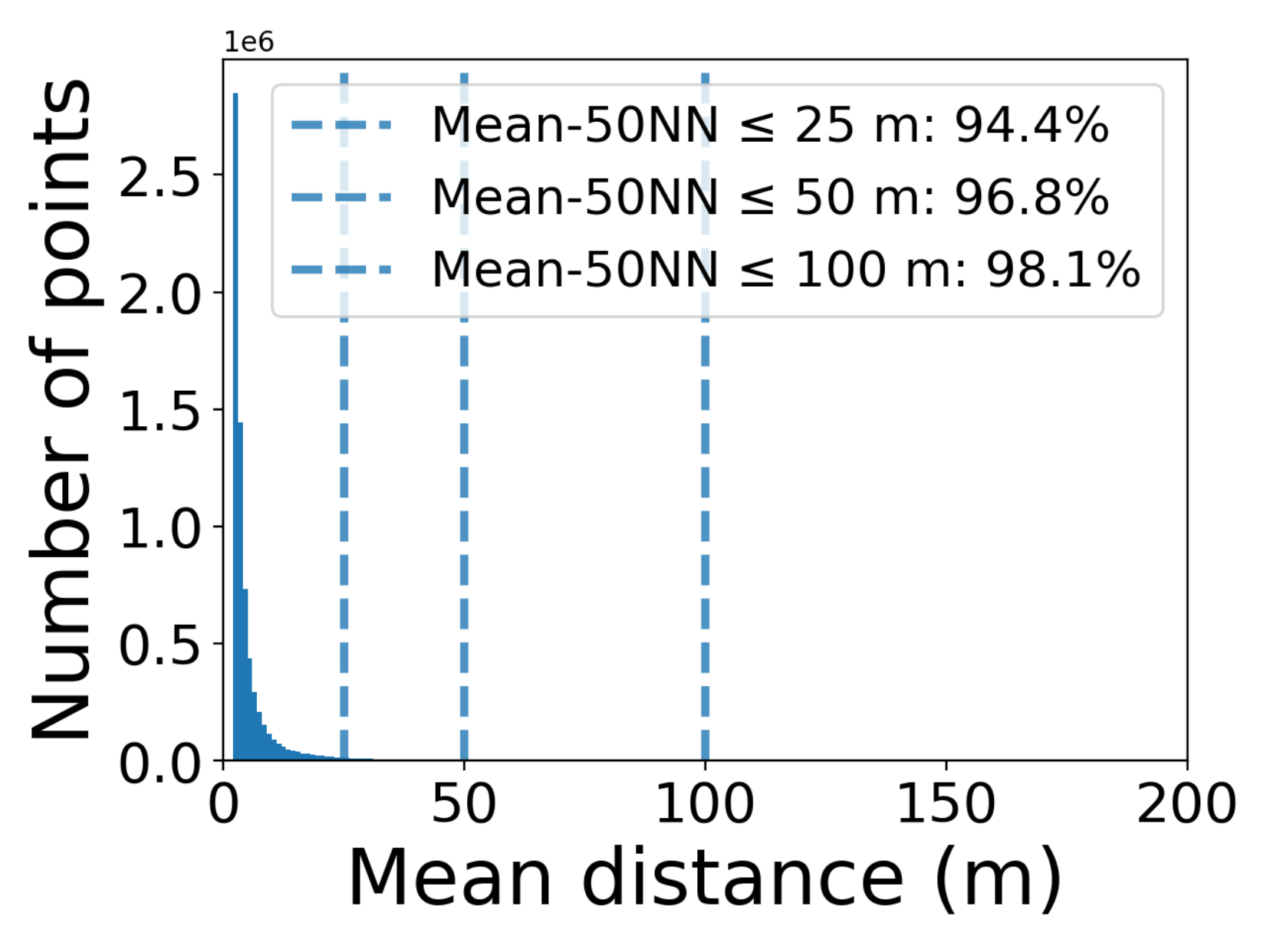}}
\label{}
\end{minipage}
\vspace{-0.15in}
\caption{Distribution of per-point mean-50NN distance.}
\label{fig:neighborDistanceMean}
\vspace{-0.10in}
\end{figure}

We also conduct a neighborhood-density analysis.
Fig. \ref{fig:neighborDistanceDensity}, \ref{fig:neighborCounts}, \ref{fig:neighborDistanceMean} indicate that both Rome and Porto are highly dense location domains. In Fig. \ref{fig:neighborDistanceDensity}, the PDF of neighbor distances is concentrated near zero and decays rapidly, approaching zero around 20 meters, showing that most neighbors lie within a short distance.

Fig. \ref{fig:neighborCounts} shows that each location point has a large number of nearby candidates within a 100\,m radius: $97.8\%$/$95.7\%$ of points (Rome/Porto) have at least 100 neighbors, $91.6\%$/$89.1\%$ have at least 500 neighbors, and $85.8\%$/$84.0\%$ have at least 1000 neighbors.

Fig. \ref{fig:neighborDistanceMean} further shows that these neighbors are spatially close: the mean distance to the 50 nearest neighbors is at most 100\,m for $99.5\%$ of points in Rome and $98.1\%$ in Porto (and at most 50\,m for $98.6\%$ and $96.8\%$, respectively).

Overall, the results confirm that most location points have sufficiently dense local neighborhoods, which ensures that the mDP-based location privacy mechanism can be applied with an adequate set of nearby candidate locations.

\end{document}